\documentclass[a4paper,12pt]{amsart}
\usepackage[lmargin=2.4cm,rmargin=2.25cm,bmargin=3cm,tmargin=2.5cm]{geometry}

\usepackage[utf8]{inputenc}
\usepackage{amsmath} 
\usepackage{amssymb}
\usepackage{amsthm}
\usepackage{textcomp}
\usepackage{enumerate}
\usepackage[sort,numbers]{natbib}
\usepackage{array}
\usepackage{verbatim}
\usepackage{hyperref}
\usepackage{graphicx}
\usepackage{booktabs}
\usepackage{algorithm}
\usepackage{algpseudocode}
\usepackage{stmaryrd} 
\usepackage{xcolor}

\usepackage{tcolorbox}
\usepackage{soul}

\newcommand{\defeq}{\mathrel{\mathop:}=}

\newcommand{\F}{\mathcal{F}}

\newcommand{\X}{\mathsf{X}}

\newcommand{\M}{\mathbb{M}}

\newcommand{\uarg}{\,\cdot\,}
\newcommand{\ud}{\mathrm{d}}
\newcommand{\R}{\mathbb{R}}
\newcommand{\real}{\mathbb{R}}
\newcommand{\N}{\mathbb{N}}
\renewcommand{\P}{\mathbb{P}}
\newcommand{\E}{\mathbb{E}}

\newcommand{\charfun}[1]{\mathbf{1}\left(#1\right)}

\newcommand{\given}{\,:\,}

\renewcommand{\vec}[1]{\boldsymbol{#1}}

\newcommand{\ind}[1]{{#1}}

\theoremstyle{plain}
\newtheorem{theorem}{Theorem}

\newtheorem{proposition}[theorem]{Proposition}
\newtheorem{lemma}[theorem]{Lemma}

\newtheorem{definition}[theorem]{Definition}
\newtheorem{assumption}[theorem]{Assumption}

\theoremstyle{remark}
\newtheorem{remark}[theorem]{Remark}

\renewcommand{\L}{{\mathcal L}}
\newcommand{\Lmut}{{\mathcal L}^{\textrm{mut}}}
\newcommand{\Ljump}{{\mathcal L}^{\textrm{jump}}}

\newcommand{\XX}{{\mathbb X}}
\newcommand{\FF}{{\mathbb F}}
\newcommand{\QQ}{{\mathcal Q}}
\newcommand{\testf}{C^\infty_c}

\newcommand{\chrf}{\mathbf{1}}

\begin{document}

\title[Particle filters with weakly informative observations]{On resampling schemes for particle filters with weakly informative observations}

\author{Nicolas Chopin}
\address[NC]{ENSAE, Institut Polytechnique de Paris}

\author{Sumeetpal S.~Singh}
\address[SSS]{Department of Engineering, University of Cambridge}

\author{Tomás Soto}
\address[TS]{School of Engineering Science, LUT University}

\author{Matti Vihola}
\address[MV]{Department of Mathematics and Statistics, University of Jyväskylä}

\begin{abstract} 
  We consider particle filters with weakly informative observations (or
  `potentials') relative to the latent state dynamics. The particular focus of this work is 
  on particle filters to approximate time-discretisations of continuous-time Feynman-Kac path integral models --- a scenario that naturally arises when addressing filtering and smoothing problems in continuous time --- but our findings are indicative about weakly informative settings beyond this context too. We study the performance of different resampling schemes, such as systematic resampling, SSP (Srinivasan
  sampling process) and stratified resampling, as the time-discretisation becomes finer and also identify their continuous-time limit, which is expressed as a suitably defined `infinitesimal generator.' By contrasting these generators, we find that (certain modifications of) systematic and SSP resampling `dominate' stratified and independent `killing' resampling in terms of their limiting overall resampling rate. The reduced intensity of resampling manifests itself in lower variance in our numerical experiment. This efficiency result, through an ordering of the resampling rate, is new to the literature. The second major contribution of this work concerns the analysis of the limiting behaviour of the entire population of particles of the particle filter as the time discretisation becomes finer. We provide the first proof, under general conditions, that the particle approximation of the discretised continuous-time Feynman-Kac path integral models converges to a (uniformly weighted) continuous-time particle system. 
  \end{abstract} 

\subjclass[2020]{Primary 65C35; secondary 65C05, 65C60, 60J25}
\keywords{Feynman--Kac model, hidden Markov model, particle filter, path integral, resampling}

\maketitle
\sloppy

\section{Introduction} 

Particle filters \cite{gordon-salmond-smith} have become a workhorse of
non-linear stochastic filtering and statistical state space modelling. The heart
of the particle filters is the `interaction' within the particles, which is
caused by the resampling (or selection) step of the algorithm. See e.g.
\cite{chopin-papaspiliopoulos} for a general introduction to particle filtering
and its applications in various scientific fields. 

When the observations (or more generally the potential functions) are
informative, that is, the weights that go into the resampling have a high
variability, empirical evidence suggests that the choice of the resampling
strategy can only make a small difference. In contrast, when the observations
(potential functions) are weakly informative and the weights tend to be close to
uniform, the performance differences between different resampling methods can be
substantial. In the weakly informative regime, it is therefore important to use
an appropriate resampling scheme.

Different resampling schemes have been suggested in the literature, and some
resampling schemes have been compared in terms of the conditional variance that
they introduce to the weights \cite{douc-cappe}.  More recently,
\cite{gerber-chopin-whiteley} considered ordering of resampling schemes with
respect to a so-called negative association, and introduced a new `SSP'
resampling scheme, which is preferable based on their theoretical findings.
These analyses, as most other theoretical analyses on resampling methods, focus
on the asymptotic regime in the number of particles $N\to\infty$. 

We do not consider the asymptotic $N\to\infty$, but keep $N$ fixed instead. In
contrast, we consider the asymptotic behaviour of the resampling schemes as the
potentials become less and less informative. One domain of applications, where
such a situation naturally arises, is time-discretisations of continuous-time
particle systems \citep{delmoral-miclo-moran,delmoral-miclo-seminaire} that
approximate so-called Feynman-Kac path integrals. We study the behaviour of
discrete-time resampling schemes when applied with discretisations of
continuous-time Feynman-Kac path integral models.

The main contributions of this work are:
\begin{itemize}
\item We introduce a condition for (discrete-time) resampling, namely Assumption \ref{a:resampling-generator}, which ensures the existence of a limiting continuous-time particle system. In particular, when this assumption is satisfied and under further general conditions, Theorem \ref{thm:convergence} establishes the limiting continuous-time particle system that a particle implementation of the time-discretised Feynman-Kac path integral model converges to (as the discretisation is refined).
Having established this asymptotic limit, Theorem \ref{thm:fk-unbiased} gives a result on its use for unbiased estimation of certain expectations with respect to the continuous-time Feynman-Kac path integral.  

\item In Section \ref{sec:resampling-schemes} we then proceed to analyse which resampling schemes satisfy Assumption \ref{a:resampling-generator} in a series of results, namely Propositions \ref{pr:killing-resampling-rate}, \ref{prop:stratified-generator},  \ref{prop:systematic-generator} and \ref{prop:ssp-intensity}.
When the condition holds, resampling schemes can be ordered by comparing their limiting continuous-time resampling intensities in Theorem \ref{thm:resampling-order} and Proposition \ref{prop:symmetrised-systematic-intensity}. We find that certain variants of systematic resampling and SSP resampling have a common limiting overall resampling intensity, which is guaranteed to be lower than that of the stratified or so-called killing resampling. This suggests that our variants systematic/SSP resampling can be preferable.

\item Our empirical findings (Section \ref{sec:experiments}) about the practical performance are in line with our
theoretical results, and indicate that SSP resampling and systematic
resampling, when applied after a prior partial ordering of the weights about
their mean, lead to the best performing particle filters. This complements the
positive findings for SSP \cite{gerber-chopin-whiteley}, and suggests that a
partial ordering of weights should always be used with systematic resampling.
\end{itemize}
Overall our results fill important gaps in the literature on particle filters,
in particular concerning their continuous-time limiting behaviour.
This is in contrast with e.g. 
\cite{del-moral-jacob-lee-murray-peters, rousset, arnaudon-delmoral}, 
who considered directly particular (theoretical) continuous-time algorithms (based
on killing resampling), and how they may be discretised. 

We consider only resampling schemes that result in a uniformly weighted sample,
so that the particle system remains unweighted. Furthermore, all the studied
resampling schemes lead to `single-event' continuous-time limits, meaning that
at most one particle can disappear at an individual time. These exclude the
popular alternative strategy, adaptive resampling \cite{liu-chen-blind}, in
which resampling is triggered at certain (random) times.

\section{Hidden Markov models and particle filters}
\label{sec:notation} 

A hidden Markov model (HMM) consists of two components: a latent (unobserved) Markov chain $X_{1:T} = (X_{1},\ldots,X_T)$ on state a space $\X$ with an initial probability density $f_1(x_1)$ and transition densities $f_k(x_k\mid x_{k-1})$; and conditionally independent observations $Y_{1:T}$ with conditional laws $g_k(y_k\mid x_k)$. The particle filter can be used to estimate integrals with respect to a conditional probability law, the so-called smoothing distribution:
\begin{equation*}
  p(x_{1:T}\mid y_{1:T}) \propto p(x_{1:T},y_{1:T}) = f_1(x_1) g_1(y_1\mid x_1) \prod_{k=2}^T f_k(x_k\mid x_{k-1}) g_k(y_k \mid x_k).
\end{equation*}
The Feynman-Kac model is an abstraction and generalisation which allows for defining a family of unnormalised probability densities $\gamma(x_{1:T})$ which is equivalent to $p(x_{1:T},y_{1:T})$ in the HMM context. It is based on 'proposal' Markov chain laws $M_1(x_1)$ and $M_k(x_k\mid  x_{k-1})$ \cite{del-moral} and non-negative 'potential functions' $G_k(x_{1:k})$, where $x_{1:k}\in \X^k \rightarrow G_k(x_{1:k})$ (which can implicitly depend on $y_{1:T}$ too), so that 
$p(x_{1:T}\mid y_{1:T}) = \pi_T(x_{1:T})$, with $\pi_T$ defined as follows: 
\begin{equation}
\pi_k(x_{1:k}) \defeq \frac{\gamma_k(x_{1:k})}{\mathcal{Z}_k} \quad
\text{where}\quad \gamma_k(x_{1:k})\defeq M_1(x_1) G_1(x_1)
\prod_{j=2}^k M_j(x_j\mid x_{j-1}) G_j(x_{1:j})
\label{eq:target}
\end{equation}
and
$\mathcal{Z}_k\defeq \int \gamma_k(x_{1:k})\ud x_{1:k}$.
In the HMM context, we typically set $G_1(x_1) = g_1(y_1\mid x_1) f_1(x_1)/M_1(x_1)$ and
$$
G_k(x_{1:k}) 
= g_k(y_k\mid x_k) f_k(x_k\mid x_{k-1})/M_k(x_k\mid x_{k-1}).
$$
In the simplest case, when $M_k \equiv f_k$, we get $G_k(x_{1:k})  = g_k(y_k\mid x_k)$. Henceforth, the domain of the potential $G_k$ will be apparent from its argument. So when $G_k:\X \rightarrow [0,\infty)$, an instance of which is the just mentioned simplest case, we will write $G_k(x_k)$.

The focus of this paper is in situations where $G_k$ are 'weakly informative,' that is, when $G_k(x_k)$ is nearly constant for typical values $x_k$ (with respect to $\pi_k$). In the HMM setting, this typically occurs when the observations $Y_k \sim g_k(\uarg \mid x_k)$ have substantial variability compared to the variability of $X_k \sim f_k(\uarg\mid x_{k-1})$, and can also occur when $M_k$ correspond to an approximation of the smoothing distribution \citep[cf.][]{vihola-helske-franks}.
However, our main theoretical framework is beyond the HMM context, where $(M_{1:T},G_{1:T})$ correspond to time-discretisations of a continuous-time path integral model (Section \ref{sec:path-integral}). 

Hereafter, we will focus on the Feynman-Kac model in \eqref{eq:target}, assuming that $\pi_T$ is well-defined, that is, the normalising constant $\mathcal{Z}_T$ is finite and strictly positive. We use the notation $a{:}b = (a,a+1,\ldots,b)$ for integers $a\le b$, and use the same notation for indexing and double indexing of sequences. Thus for sequences $\{x_i\}_i$, $\{y^j\}_j$ and $\{z_i^j\}_{i,j}$ we write $x_{a:b} \defeq (x_a,\ldots,x_b)$, $y^{\ind{a:b}} \defeq (y^{\ind{a}},\ldots,y^{\ind{b}})$ and $z_{a:b}^{\ind{j_{a:b}}} \defeq (z_a^{\ind{j_a}},\ldots,z_b^{\ind{j_b}})$. For $N \in \N$, we denote $[N] = \{1,2,\cdots,N\}$. The sequence $1{:}N$ with $k$ omitted and $\ell$ duplicated is denoted as $[k\to \ell]_N \defeq \big(1{:}(k-1), (k+1){:}\ell, \ell{:}N\big)$. The notation `$\ud x$' implicitly stands for a $\sigma$-finite dominating measure on $\mathsf{X}$, integers are equipped with the counting measure, product spaces are equipped with products of dominating measures and test functions are implicitly assumed to be measurable.

Let us then turn to the particle filter algorithm based on the Feynman-Kac
model. The particle filter involves one additional ingredient: the resampling
mechanism, which is determined by a probability distribution $r(\uarg\mid
g^{\ind{1:N}})$ on $[N]^N$, given unnormalised weights $g^{\ind{1:N}}\in
[0,\infty)^N$. We only consider resampling schemes $r$, which satisfy the
following condition (which may be traced back to (4) in
\cite{crisan-delmoral-lyons}):
\begin{assumption}
  \label{a:unbiased-resampling} 
  Whenever $\sum_{i=1}^N g^{\ind{i}} > 0$, 
  the indices $A^{\ind{1:N}} \sim r(\uarg\mid g^{\ind{1:N}})$ satisfy
\begin{equation}
    \E\bigg[\frac{1}{N}\sum_{i=1}^N \charfun{A^{\ind{i}} = j}\bigg] =
\frac{g^{\ind{j}}}{ \sum_{i=1}^N g^{\ind{i}} }
\label{eq:res-unbiased}
\end{equation}
  for all $j\in[N]$. 
\end{assumption}
A resampling method that satisfies this assumption is known as being {\it unbiased} \cite{andrieu-doucet-holenstein} since the expected number occurrences of outcome $j$ in the population $A^{\ind{1:N}}$ is $Ng^{\ind{j}}/( \sum_{i=1}^N g^{\ind{i}} )$.
Algorithm \ref{alg:pf} presents the particle filter in pseudo-code.
\begin{algorithm}
  \caption{\textsc{ParticleFilter}$(M_{1:T}, G_{1:T-1}, r, N)$}
  \label{alg:pf} 
\begin{algorithmic}[1]
\State
Draw $X_1^{\ind{i}} \sim M_1(\uarg)$ and set $\vec{X}_1^{\ind{i}} = X_1^{\ind{i}}$ for $i\in[N]$.
\For{$k=1,\ldots,T-1$}
\State Draw $A_{k}^{\ind{1:N}} \sim r\big(\uarg \mid
G_{k}(\vec{X}_{k}^{\ind{1}}),\ \dots, G_{k}(\vec{X}_{k}^{\ind{N}})\big)$
\State Draw $X_{k+1}^{\ind{i}} \sim M_{k+1}(\uarg \mid
          X_{k}^{\ind{A_{k}^{\ind{i}}}})$ and set 
          $\vec{X}_{k+1}^{\ind{i}} 
          = (\vec{X}_{k}^{\ind{A_k^{\ind{i}}}}, X_{k+1}^{\ind{i}})$ for $i\in[N]$.
\EndFor
\State \textbf{output}
$\big((X_1^{\ind{1:N}},\ldots,X_T^{\ind{1:N}}),(A_1^{\ind{1:N}},\ldots, A_{T-1}^{\ind{1:N}}))\big)$
\end{algorithmic}
\end{algorithm}

With the shorthand $\underline{x}_k = x_k^{\ind{1:N}}$ for the whole particle system, we may write the law of the output $(\underline{X}_{1:T},\underline{A}_{1:T-1})$ of Algorithm \ref{alg:pf} in the following form:
\begin{equation}
\zeta^{(N)}(\underline{x}_{1:T}, \underline{a}_{1:T-1})
  =
  \bigg(\prod_{i=1}^N M_1(x_1^{\ind{i}}) \bigg)
  \bigg(\prod_{k=1}^{T-1} 
   r\big(\underline{a}_{k}\mid
  G_{k}(\underline{\vec{x}}_{k} )\big)
  \prod_{i=1}^N M_{k+1}(x_{k+1}^{\ind{i}}\mid x_{k}^{\ind{a_{k}^{\ind{i}}}}) \bigg).
  \label{eq:pf-law} 
\end{equation}
As in Algorithm \ref{alg:pf}, denote $\vec{x}_1^{\ind{i}} = x_1^{\ind{i}}$ and $\vec{x}_{k+1}^{\ind{i}} = (\vec{x}_{k}^{a_k^{\ind{i}}}, x_{k+1}^{\ind{i}})$. We have used the shorthand $G_{k}(\underline{\vec{x}}_{k} )$ in the second argument of $r(\cdot \mid \cdot)$ to mean $G_{k}({\vec{x}}_{k}^{\ind{1}}),\ldots, G_{k}({\vec{x}}_{k}^{\ind{N}})$.

Under Assumption \ref{a:unbiased-resampling}, the output of Algorithm
\ref{alg:pf} satisfies the following unbiasedness condition \cite[][Theorem
7.4.2]{del-moral}, which is key for particle Markov chain Monte Carlo
\citep{andrieu-doucet-holenstein}:
\begin{equation}
    \E_{\zeta^{(N)}}\bigg[\bigg(\prod_{k=1}^T \frac{1}{N} \sum_{i=1}^N G_k(\mathbf{X}_k^{\ind{i}})\bigg)
        \sum_{i=1}^N W_T^i
    f(\mathbf{X}_T^{\ind{i}})\bigg]
    = \gamma_T(f) \defeq \int f(x_{1:T}) \gamma_T(x_{1:T}) \ud x_{1:T},
    \label{eq:unbiased-smoothing}  
\end{equation}
where $W_T^i \defeq G_T(\mathbf{X}_T^{\ind{i}})/\sum_{j=1}^{N}
G_T(\mathbf{X}_T^{\ind{j}})$. In addition, under further conditions on
$M_{1:T}$, $G_{1:T}$ and $f$, the following consistency result  holds
(see e.g. Chapter 11 of \cite{chopin-papaspiliopoulos} 
and references therein, e.g. \cite{crisan-delmoral-lyons}): 
\begin{equation}
    \label{eq:normalised-smoothing}
    \sum_{i=1}^N W_T^i
    f(\mathbf{X}_T^{\ind{i}}) \xrightarrow{N\to\infty} \pi_T(f) \defeq \int f(x_{1:T}) \pi_T(x_{1:T}) \ud x_{1:T}\qquad\text{in probability.}
\end{equation}

\section{Continuous-time path integral model} 
\label{sec:path-integral} 

Continuous-time Feynman-Kac path integral models are the continuous-time
analogue of hidden Markov models discussed above. The smoothing distribution is
defined in terms of expectations of real-valued test functions $\phi$ on the
path space  (more precisely, the Skorohod space $D_\X[0,\tau]$ of c\`adl\`ag
paths of a separable metric space $\X$):
\begin{multline}
    \Pi(\phi) \defeq \frac{1}{\mathcal{Z}_\M} \E_{\M}\bigg[
        \phi\big(Z_{[0,\tau]}\big)
    \exp\Big(-\int_0^\tau V_u(Z_u) \ud u\Big) \bigg] \\
    \quad \text{with}\quad
    \mathcal{Z}_\M \defeq \E_\M\bigg[ \exp\Big(-\int_0^\tau V_u(Z_u) \ud u\Big) \bigg],
    \label{eq:path-integral}
\end{multline}
where $(V_u)_{0 \leq u \leq \tau}$ is a sufficiently regular family of
non-negative potential functions on $\X$, and $\M$ is the law of a Markov
process $Z_{[0,\tau]} \defeq {(Z_u)}_{0 \leq u \leq \tau}$ on $\X$.

We focus on an approximation of the law
\eqref{eq:path-integral} based on a time-discretisation
$0=t_1<\cdots<t_T = \tau$
of the form
\begin{equation}
    \Pi_T(\phi) \defeq
    \frac{1}{\mathcal{Z}_T} \E_{\M}\bigg[ \phi\big(\hat{Z}_{[0,\tau]}\big)
    \prod_{k=1}^{T-1} G_k(Z_{t_{k}},Z_{t_{k+1}})
     \bigg],
    \qquad \mathcal{Z}_T \defeq \E_{\M}\bigg[
    \prod_{k=1}^{T-1} G_k(Z_{t_{k}},Z_{t_{k+1}})
    \bigg],
    \label{eq:discretised-path-integral}
\end{equation}
where $\hat{Z}_u \defeq \sum_{k=1}^{T-1} \charfun{u\in[t_{k},t_{k+1})} Z_{t_k} + \charfun{u = t_T} Z_{t_T}$
is a c\`adl\`ag extension of the skeleton $(Z_{t_1},\ldots,Z_{t_T})$ and
where the potential functions $G_k(Z_{t_{k}},Z_{t_{k+1}})\ge 0$ are approximations of
$\exp\big(-\int_{t_{k}}^{t_{k+1}}
  V_u(Z_u) \ud u\big)$ that can depend only on the values of $Z_u$ at
times $t_{k}$ and $t_{k+1}$. Our theoretical focus is on a simple
Euler-type form
$G_k(Z_{t_{k}},Z_{t_{k+1}}) \defeq G_k(Z_{t_k}) \defeq \exp\big(
-({t_{k+1}}-t_{k})V_{t_{k}}(Z_{t_{k}})\big)$, but
our method is also applicable to other approximation schemes. 
For the fixed time-discretisation $0=t_1<\cdots<t_T = \tau$, we may
define $M_1$ as the initial distribution of $Z_{t_1}$; $M_k$ for $2 \leq k \leq T$ as (an approximation of) the
conditional distribution of $Z_{t_k}\mid Z_{t_{k-1}}$;
$G_1\equiv 1$; and for $1 \leq k\leq T-1$, $G_k(Z_{t_k})$ as just defined. 

Having just declared the Markov kernels $M_{1:T}$ and the potential functions $G_{1:T-1}$, we can now employ Algorithm \ref{alg:pf} to form a particle approximation of $\Pi_T(\cdot)$  in \eqref{eq:discretised-path-integral} using \eqref{eq:normalised-smoothing}; and also an unbiased approximation of its normalising constant $\mathcal{Z}_T$ using \eqref{eq:unbiased-smoothing}. Write
\[
  X^{(T)} \defeq \bigl((X^{(T)}_k)^{\ind{1:N}}\bigr)_{1 \leq k \leq T}
\]
for the resulting particle system, where the superscript $^{(T)}$ refers to the discretisation $(t_k)_{1 \leq k \leq T} \subset [0,\tau]$. Our main focus in Sections \ref{sec:convergence} and \ref{sec:unbiasedestimation} is to study the particle approximations as the discretisation is refined, that is as $T \to \infty$, but for a fixed $N$. We now give a flavour of the results.

The first question is whether the law of the discrete-time particle system converges in some sense to a law of the form \eqref{eq:path-integral} corresponding to a continuous-time Feynman-Kac model. To enable this convergence study, in Section \ref{sec:resampling-schemes} below, we introduce a stability condition (Assumption \ref{a:resampling-generator}) that encompasses several known unbiased resampling schemes. In Sections \ref{sec:convergence} and \ref{sec:unbiasedestimation},  we then present the main convergence results (Theorems \ref{thm:convergence} and \ref{thm:fk-unbiased}) for the particle approximations in the context of It\^o diffusions for $\M$ (see \eqref{eq:sde}). First, Theorem \ref{thm:convergence} studies the convergence of the continuous-time extension of the population of particles $((\hat{X}^{(T)}_{t})^{1:N})_{0\leq t \leq \tau}$, with c\`adl\`ag paths in $(\R^d)^N$ defined by
\[
(\hat{X}^{(T)}_{u})^i \defeq \sum_{k=1}^{T-1} \charfun{u\in[t_{k},t_{k+1})} (X^{(T)}_{k})^i + \charfun{u = t_T} (X^{(T)}_{T})^i, \qquad i \in [N].
\]
In particular, parts (i) and (ii) of Theorem \ref{thm:convergence} identify a continuous-time Markov process with c\`adl\`ag paths in $(\R^d)^N$ that $((\hat{X}^{(T)}_{t})^{1:N})_{0\leq t \leq \tau}$ converges to with respect to finite-dimensional distributions as $T\to\infty$.

Then in Theorem \ref{thm:fk-unbiased} (result \eqref{eq:fk-unbiased}) we show that a certain unnormalised time-marginal
of this limiting continuous-time Markov process  coincides with the unnormalised time-marginal of $\mathcal{Z}_\M \times \Pi(\cdot)$ in \eqref{eq:path-integral}.
Using part (iii) of Theorem  \ref{thm:convergence} and result \eqref{eq:fk-unbiased} of Theorem \ref{thm:fk-unbiased}, we then conclude that the particle approximation of the (unnormalised) time-marginal Feynman-Kac path integral converges (for a fixed $N$) as the discretisation is refined:
\[
\lim_{T \rightarrow \infty}\E \left\{ \left(\frac{1}{N} \sum_{i=1}^N f\bigl( (\hat{X}^{(T)}_{\tau})^i\bigr) \right) 
    \exp\left[-\int_0^\tau  \left( \frac{1}{N} \sum_{i=1}^N V_u\bigl((\hat{X}^{(T)}_{u})^i\bigr) \right) \ud u\right] \right\} =\Pi(f)\times \mathcal{Z}_\M,
\]
where $f:\R^d\rightarrow\R$ is bounded and continuous, and $\Pi(f)$ and $\mathcal{Z}_\M$ are as in \eqref{eq:path-integral} with $f(Z_{[0,\tau]}) \defeq f(Z_{\tau})$. Note that the integrals in the exponential term of the left hand side are easy to evaluate as the $(\hat{X}^{(T)})^i$ are piecewise constant c\`adl\`ag paths.

\section{Discretisation stable resampling schemes} 
\label{sec:resampling-schemes}

Our main focus is on resampling schemes which lead to a \emph{valid
  continuous-time limit} under infinitesimally refined discretisations. It turns
out that the following condition naturally ensures a such continuous-time limit
of a particle filter, and can provide insight about different resampling
schemes.
\begin{assumption}
\label{a:resampling-generator}
For all $v^{\ind{1:N}} \in [0,\infty)^N$ and for all
$a_{1:N} \in [N]^N\setminus\{1{:}N\}$, the
limit
\begin{equation}\label{eq:resampling-generator}
    \lim_{\Delta \to 0^+} \frac{1}{\Delta} r\Big(a^{\ind{1:N}} \;\Big|\; \big(\exp(-\Delta
      v^{\ind{1}}),\ldots, \exp(-\Delta v^{\ind{N}})\big)\Big)
    =: \iota(a^{\ind{1:N}},v^{\ind{1{:}N}}).
\end{equation}
exists, and for any $v^* > 0$ the term inside the left-hand side limit is uniformly bounded for $v^{\ind{1{:}N}} \in [0,v^*]^N$ and $\Delta \in (0,1)$.
\end{assumption} 

The limiting quantity $\iota(a^{\ind{1{:}N}}, v^{\ind{1{:}N}})$ can be interpreted as the \emph{resampling intensity} corresponding to the configuration $a^{\ind{1:N}} \neq 1{:}N$, with instantaneous potential values $v^{\ind{1{:}N}}$. It can be thought of as the `infinitesimal generator' stemming from the resampling $r$ in the continuous-time limit.
The sum of all resampling configurations 
\[
    \iota^*(v^{\ind{1:N}}) \defeq \sum_{a^{\ind{1:N}}\neq 1{:}N} \iota(a^{\ind{1:N}},v^{\ind{1:N}})
\]
is the \emph{overall resampling rate}, that is, the
intensity of any `event' $a^{\ind{1:N}} \neq 1{:}N$.

The basic and popular multinomial resampling scheme, which may be traced back to 
\citep{gordon-salmond-smith}, does not admit a continuous-time limit: 
the probability of survival, that is, getting any permutation of
$1{:}N$, does not tend to unity as $\Delta\to 0$. The same holds for the residual
resampling introduced by \citep{higuchi, liu-chen}. 

Perhaps the simplest scheme which satisfies this condition is a discrete-time
version of the `killing' resampling \citep{del-moral-2013}. In discrete-time
killing, the particle at index $i$ `survives' with probability proportional to
the unnormalised weight $g^{\ind{i}}$, and otherwise will be replaced with any
other particle $j$, with probabilities proportional to $g^{\ind{j}}$. We focus
in particular on the following version of discrete-time killing:
\begin{definition}[Killing resampling]
  \label{def:killing}
\begin{align*}
    r_\mathrm{killing}(a^{\ind{1:N}} \mid g^{\ind{1:N}})
    &\defeq \prod_{i=1}^N \bigg[\charfun{a^{\ind{i}} = i}
    \frac{g^{\ind{i}}}{g^*}
    + \Big(1- \frac{g^{\ind{i}}}{g^*}\Big)
    \sum_{j=1}^N \charfun{a^{\ind{i}} = j}
    \frac{g^{\ind{j}}}{\sum_{\ell=1}^N g^{\ind{\ell}}}
    \bigg],
\end{align*}
where $g^*  = \max_{i \in [N]} g^{\ind{i}}$.
\end{definition}
In fact, any $g^*$ such that $g^{\ind{i}}/g^*\in[0,1]$ for all $i\in[N]$ yields
a valid unbiased resampling. The choice of $g^*$ above, which was also used in the algorithmic `rejection' variant of \cite{murray-lee-jacob}, ensures the highest survival probability, that is, $a^{\ind{1:N}}=1{:}N$. The following result can be verified by a direct calculation.

\begin{proposition}\label{pr:killing-resampling-rate}
  Killing resampling satisfies Assumptions \ref{a:unbiased-resampling} and
  \ref{a:resampling-generator} and has limiting intensity
\[
    \iota_\mathrm{killing}(a^{\ind{1:N}},v^{\ind{1:N}}) =
    \begin{cases}
        \frac{1}{N} (v^{\ind{i}} - v_{\min}) & \text{if } 
        a^{\ind{i}} \neq i,\, a^{\ind{\neg i}} = \neg i, \\
        0 & \text{otherwise},
    \end{cases}
\]
where $\neg i \defeq (1,\ldots,i-1,i+1,\ldots,N)$ and
where $v_{\min}  \defeq \min_{j\in[N]} v^{\ind{j}}$.
Consequently,
\[
    \iota^*_\mathrm{killing}(v^{\ind{1:N}})
    = \frac{N-1}{N} \sum_{i=1}^N (v^{\ind{i}} - v_\mathrm{min})
    = (N-1)(\bar{v} - v_{\min}),
\]
where $\bar{v} \defeq N^{-1} \sum_{i=1}^N v^{\ind{i}}$ is the mean of the potential values $v^{\ind{1{:}N}}$.
\end{proposition}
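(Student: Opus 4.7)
The proof splits naturally into two parts: verifying unbiasedness (Assumption \ref{a:unbiased-resampling}) by a direct cancellation, and identifying the $\Delta \to 0^+$ limit in \eqref{eq:resampling-generator} by Taylor-expanding the product defining $r_\mathrm{killing}$.

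For unbiasedness, the product form of $r_\mathrm{killing}$ makes the coordinates $A^{\ind{1}},\ldots,A^{\ind{N}}$ independent, with
\[
  \P(A^{\ind{i}} = j) = \mathbf{1}(i=j)\frac{g^{\ind{i}}}{g^*} + \Big(1 - \frac{g^{\ind{i}}}{g^*}\Big)\frac{g^{\ind{j}}}{\sum_\ell g^{\ind{\ell}}}.
\]
Summing over $i \in [N]$, the $\mathbf{1}(i=j)$ term contributes $g^{\ind{j}}/g^*$ while the remaining sum contributes $(N - \sum_i g^{\ind{i}}/g^*)\cdot g^{\ind{j}}/\sum_\ell g^{\ind{\ell}}$, so the $g^*$-dependent pieces cancel to give $\sum_i \P(A^{\ind{i}} = j) = N g^{\ind{j}}/\sum_\ell g^{\ind{\ell}}$, which is exactly \eqref{eq:res-unbiased}.

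For the limit, I substitute $g^{\ind{i}} = \exp(-\Delta v^{\ind{i}})$ and $g^* = \exp(-\Delta v_{\min})$ and set $w^{\ind{i}} \defeq v^{\ind{i}} - v_{\min} \ge 0$. Standard expansion gives $1 - g^{\ind{i}}/g^* = \Delta w^{\ind{i}} + O(\Delta^2)$ and $g^{\ind{j}}/\sum_\ell g^{\ind{\ell}} = 1/N + O(\Delta)$ uniformly on $[0,v^*]^N$. Factoring the product one coordinate at a time, the $i$-th factor of $r_\mathrm{killing}$ equals $1 - \tfrac{N-1}{N} w^{\ind{i}} \Delta + O(\Delta^2)$ when $a^{\ind{i}} = i$, and $\tfrac{1}{N} w^{\ind{i}} \Delta + O(\Delta^2)$ when $a^{\ind{i}} \neq i$ (the particular value of $a^{\ind{i}}$ enters only through $g^{\ind{a^{\ind{i}}}}/\sum_\ell g^{\ind{\ell}}$, which drops to $1/N$ at leading order). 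Hence any configuration that differs from $1{:}N$ at exactly $k$ coordinates has probability of order $\Delta^k$, so after dividing by $\Delta$ and taking $\Delta \to 0^+$ only the single-event configurations survive; these have exactly the form $a^{\ind{i}} \neq i$, $a^{\ind{\neg i}} = \neg i$ and contribute $\tfrac{1}{N}(v^{\ind{i}} - v_{\min})$, as claimed. Summing $\iota$ over the $N(N-1)$ such configurations then yields $\iota^*_\mathrm{killing} = \tfrac{N-1}{N}\sum_i (v^{\ind{i}} - v_{\min}) = (N-1)(\bar v - v_{\min})$.

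For the uniform-boundedness clause of Assumption \ref{a:resampling-generator}, I would use the elementary bound $1 - e^{-x} \le x$ to conclude that any `deviating' factor $(1 - g^{\ind{i}}/g^*)\,g^{\ind{a^{\ind{i}}}}/\sum_\ell g^{\ind{\ell}}$ is at most $1 - e^{-\Delta w^{\ind{i}}} \le \Delta v^*$, while every non-deviating factor is a probability and hence bounded by $1$; since $a^{\ind{1{:}N}} \neq 1{:}N$ forces at least one deviation, $\Delta^{-1} r_\mathrm{killing}(a^{\ind{1{:}N}}\mid \exp(-\Delta v^{\ind{1{:}N}}))$ is bounded by $v^*$ on the stated range. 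The entire argument is a bookkeeping exercise; the only mild care needed is tracking the $O(\Delta^2)$ remainders through the product to confirm that multi-event configurations vanish at a strictly higher order than $\Delta$, but this is immediate from the factorwise estimates above and constitutes no substantial obstacle.
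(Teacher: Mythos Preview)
Your argument is correct and is exactly the ``direct calculation'' that the paper alludes to but omits: the paper states only that the result ``can be verified by a direct calculation'' and gives no further details, so your factorwise expansion of the product defining $r_\mathrm{killing}$, together with the cancellation for unbiasedness and the $1-e^{-x}\le x$ bound for uniform boundedness, fills in precisely what the authors leave to the reader.
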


\subsection{Stratified and systematic resampling} 

For the rest of Section \ref{sec:resampling-schemes}, we assume fixed unnormalised weights $g^{\ind{1{:}N}}\in [0,\infty)^N\setminus \{0\}^N$ and denote the corresponding normalised weights by $w^{\ind{i}} = g^{\ind{i}}/\sum_{j=1}^N g^{\ind{j}}$, and the cumulative distribution function by $F(0)\equiv 0$ and $F(i) = \sum_{j=1}^i w^{\ind{j}}$ for $i\in [N]$. The generalised inverse $F^{-1}(u)$ is defined for $u\in(0,1)$ as the unique index $i\in[N]$ such that $F(i-1) < u \le F(i)$.

\begin{definition}[Systematic resampling]
  \label{def:systematic}
  Simulate a single $ \sim U(0,1)$, set 
  \[
     \check{U}^{\ind{i}} \defeq \frac{i-1 + U}{N}
  \]
  and define the resampling indices as $A^{\ind{i}} \defeq F^{-1}(\check{U}^{\ind{i}})$ for $i\in[N]$.
\end{definition}

\begin{definition}[Stratified resampling]
  \label{def:stratified}
  Simulate $U^{\ind{1:N}} \sim U(0,1)$, set
  \[
     \check{U}^{i} \defeq \frac{i-1 + U^{\ind{i}}}{N},
  \] 
  and define the resampling indices as $A^{\ind{i}} \defeq F^{-1}(\check{U}^{\ind{i}})$ for $i\in[N]$.
\end{definition}

We consider slightly modified versions of these resampling schemes, which rely
on an auxiliary ordering of weights. This allows for simpler analysis, but our
experiments also suggest potential performance gains.

\begin{definition}[Mean partition]
  \label{def:mean-partition}
  Suppose that $u^{\ind{1:N}}\in\R^N$. A permutation $\varpi:[N]\to[N]$ is a \emph{mean partition} (order) for $u^{\ind{1:N}}$, if the re-indexed vector $u_\varpi^{\ind{i}}\defeq u^{\ind{\varpi(i)}}$ satisfies $u_\varpi^{\ind{1}},\ldots,u_\varpi^{\ind{m}} \le \bar{u}$ and $u_\varpi^{\ind{m+1}},\ldots, {u}_\varpi^{\ind{N}} > \bar{u}$ for some $m\in[N]$, where $\bar{u} = N^{-1}\sum_{i=1}^N u^{\ind{i}}$.
\end{definition}

A mean partition $\varpi$ can be found in $O(N)$ time using Hoare's scheme \cite{hoare}.

\begin{definition}[Systematic/stratified resampling with order $\varpi$]
  \label{def:with-mean-partition}
  Let $F_\varpi^{-1}$ denote the generalised inverse distribution function corresponding to the re-indexed weights $w_\varpi^{\ind{1:N}}$.
Set
\[
  A^{\ind{\varpi(i)}} \defeq  \varpi\big(F_\varpi^{-1}(\check{U}^{\ind{i}})\big),
\]
where $\check{U}^{\ind{1:N}}$ are defined as in systematic/stratified resampling.
\end{definition}

In words, Definition \ref{def:with-mean-partition} means that we process the particles in order $\varpi$ within systematic/stratified resampling. We obtain the following convergence results, whose proofs are given in Appendix \ref{app:resampling-proofs}.

\begin{proposition}
  \label{prop:stratified-generator}
  Let $\varpi$ be a mean partition for $-v^{\ind{1:N}}$.
Stratified resampling with order $\varpi$ (Definition \ref{def:with-mean-partition}) satisfies Assumption \ref{a:resampling-generator} with resampling intensity
$$
\iota_{\mathrm{stratified}}(a^{\ind{1:N}},v^{\ind{1:N}})
= \begin{cases} \sum_{j=1}^i \big(v^{\ind{\varpi(j)}} - \bar{v}\big), 
& a^{\ind{\varpi(i)}} = \varpi(i+1),\, a^{\ind{\varpi(\neg i)}} = \varpi(\neg i)\text{ for }i\in[N-1]. \\
0, &\text{otherwise}
\end{cases}
$$
The overall resampling rate is
$$
\iota^*_{\mathrm{stratified}}(v^{\ind{1:N}})
= \sum_{j=1}^{N} j (\bar{v} - v^{\ind{\varpi(j)}}).
$$
\end{proposition}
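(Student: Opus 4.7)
The starting point is a first-order Taylor expansion of the unnormalised weights $g^{\ind{i}} = e^{-\Delta v^{\ind{i}}}$, which gives, uniformly on $v^{\ind{1:N}} \in [0,v^*]^N$ and $\Delta \in (0,1)$,
\[
  w^{\ind{i}} = \frac{1}{N} + \frac{\Delta}{N}(\bar v - v^{\ind{i}}) + O(\Delta^2),
\]
and hence the reordered cumulative discrepancy
\[
  D_\varpi(i) \defeq F_\varpi(i) - \frac{i}{N} = \frac{\Delta}{N}\sum_{j=1}^i (\bar v - v_\varpi^{\ind{j}}) + O(\Delta^2).
\]
The uniform boundedness clause of Assumption \ref{a:resampling-generator} will follow directly from the uniform control of these Taylor remainders.

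The crucial structural observation is that the mean partition property of $\varpi$ forces $D_\varpi(i) \le 0$ \emph{exactly} (not merely to leading order) for every $i\in[N-1]$. Indeed, since $\varpi$ is a mean partition for $-v^{\ind{1:N}}$, one has $v_\varpi^{\ind{j}} \ge \bar v$ for $j \le m$, and Jensen's inequality applied to the convex map $x\mapsto e^{-\Delta x}$ gives $\sum_\ell g^{\ind{\ell}} \ge N e^{-\Delta \bar v} \ge N g_\varpi^{\ind{j}}$, so $w_\varpi^{\ind{j}} \le 1/N$; the reverse inequality holds for $j>m$. Thus $D_\varpi$ decreases from $D_\varpi(0)=0$ to $D_\varpi(m)$ and monotonically increases back to $D_\varpi(N)=0$, so $D_\varpi\le 0$ on $\{1,\ldots,N-1\}$. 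For $\Delta$ sufficiently small (uniformly over the compact range of $v^{\ind{1:N}}$) this yields $F_\varpi(i-1)\le (i-1)/N < F_\varpi(i) \le i/N$, so that the independent draw $\check U^{\ind{i}}\in((i-1)/N,i/N)$ can land only in $\{F_\varpi^{-1}(\check U^{\ind{i}})=i\}$ or $\{F_\varpi^{-1}(\check U^{\ind{i}})=i+1\}$; no other strata are reachable.

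The rate computation is then short. The upward crossing at index $i$ has probability
\[
  p_i(\Delta) \defeq P\bigl(\check U^{\ind{i}} > F_\varpi(i)\bigr) = -N D_\varpi(i) = \Delta \sum_{j=1}^i (v^{\ind{\varpi(j)}} - \bar v) + O(\Delta^2),
\]
and these crossings are mutually independent across $i$ thanks to the independence of $U^{\ind{1:N}}$. A configuration with exactly one crossing at index $i$ therefore has probability $p_i(\Delta)\prod_{j\ne i}(1-p_j(\Delta)) = \Delta \sum_{j=1}^i (v^{\ind{\varpi(j)}} - \bar v) + O(\Delta^2)$, while any configuration involving two or more crossings has probability $O(\Delta^2)$ and drops out after dividing by $\Delta$. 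Taking $\Delta\to 0^+$ yields the stated intensity, and summing over $i\in[N-1]$ by swapping the order of summation, $\sum_{i=1}^{N-1}\sum_{j=1}^{i}x_j = \sum_{j=1}^{N-1}(N-j)x_j$, combined with the zero-mean identity $\sum_{j=1}^N (v^{\ind{\varpi(j)}} - \bar v) = 0$, produces the overall rate $\sum_{j=1}^N j(\bar v - v^{\ind{\varpi(j)}})$. The main obstacle that I foresee is the monotonicity step: without the mean partition reordering, both upward and downward crossings spanning several strata could occur at order $\Delta$, producing a much more complicated generator without the nearest-neighbour structure $\varpi(i)\mapsto\varpi(i+1)$.
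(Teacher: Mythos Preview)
Your approach is essentially the paper's: the Taylor expansion of the weights, the non-positivity of the cumulative discrepancies $c^{\ind{i}}=-ND_\varpi(i)$ under the mean-partition ordering, and the resulting nearest-neighbour crossing probabilities are exactly the content of the auxiliary lemmas the paper invokes (Lemmas~\ref{lem:lookup-characterisation}, \ref{lem:stratified-probability} and \ref{lem:f-k-asymptotic-weights}), and the final summation swap is identical.

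There is, however, one inaccuracy in your Jensen step. The inequality $w_\varpi^{\ind{j}} \le 1/N$ for $j \le m$ does follow from Jensen as you say, but the ``reverse inequality'' for $j > m$ does \emph{not}: Jensen gives $\bar g \ge e^{-\Delta \bar v}$, and for $j > m$ one has $g_\varpi^{\ind{j}} > e^{-\Delta \bar v}$, but these two lower bounds on different quantities do not compare. A concrete counterexample is $v_\varpi = (100,\,0.9,\,0.9,\,0)$ at $\Delta = 1$: here $v_\varpi^{\ind{2}} = 0.9 < \bar v \approx 25.45$, yet $w_\varpi^{\ind{2}} < 1/4$. Consequently your claim that $D_\varpi \le 0$ holds ``exactly (not merely to leading order)'' fails for $i > m$ in general. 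The fix is immediate with ingredients you already have: for $i > m$, Definition~\ref{def:mean-partition} gives the \emph{strict} inequality $v_\varpi^{\ind{j}} < \bar v$ for all $j > m$, so the leading coefficient $\sum_{j=1}^i(\bar v - v_\varpi^{\ind{j}}) = -\sum_{j>i}(\bar v - v_\varpi^{\ind{j}})$ is strictly negative, and your Taylor expansion yields $D_\varpi(i) < 0$ for all sufficiently small $\Delta$ (depending on $v$). That suffices for the pointwise limit, and the uniform-boundedness clause is handled separately by the uniform estimate $|D_\varpi(i)| \le C\Delta$, which holds regardless of sign.
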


\begin{proposition}
  \label{prop:systematic-generator}
  Let $\varpi$ be a mean partition for $-v^{\ind{1:N}}$.
Systematic resampling with order $\varpi$ (Definition \ref{def:with-mean-partition}) satisfies Assumption \ref{a:resampling-generator} with resampling intensity
$$
\iota_{\mathrm{systematic}}(a^{\ind{1:N}},v^{\ind{1:N}})
= \begin{cases} \big(\min\{ s_\varpi^{\ind{k}}, s_\varpi^{\ind{\ell-1}} \} - \max\{ s_\varpi^{\ind{k-1}}, s_\varpi^{\ind{\ell}}\} \big)_+, 
& a^{\ind{\varpi(1:N)}} = \varpi([k\to \ell]_N), \\
0, &\text{otherwise},
\end{cases}
$$
where $k,\ell$ are such that $v^{\ind{\varpi(k)}} \ge \bar{v}$ and $v^{\ind{\varpi(\ell)}} < \bar{v}$ and $s_\varpi^{\ind{0}} \defeq 0$,  $s_\varpi^{\ind{i}} \defeq \sum_{j=1}^i (v^{\ind{\varpi(j)}} - \bar{v})$. 
The overall resampling rate is
$$
\iota^*_{\mathrm{systematic}}(v^{\ind{1:N}})
= \sum_{i=1}^{N} (\bar{v} - v^{\ind{i}})_+.
$$
\end{proposition}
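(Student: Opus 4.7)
The plan is to Taylor-expand the weights in $\Delta$ and translate the event $a^{\ind{\varpi(1:N)}} = \varpi([k\to\ell]_N)$ into an elementary condition on the single uniform variate $U$ driving systematic resampling. From $g^{\ind{i}} = \exp(-\Delta v^{\ind{i}})$ a first-order expansion yields
\[
w^{\ind{i}} = \frac{1}{N}\bigl(1 - \Delta(v^{\ind{i}} - \bar{v})\bigr) + O(\Delta^2),
\]
uniformly for $v^{\ind{1:N}}$ in any compact set, and hence
\[
F_\varpi(i) = \frac{i}{N} - \frac{\Delta}{N}\, s_\varpi^{\ind{i}} + O(\Delta^2),\qquad i\in\{0,1,\ldots,N\}.
\]
The mean-partition property of $\varpi$ forces $s_\varpi^{\ind{i}}$ to be unimodal (non-decreasing on $\{0,\ldots,m\}$, non-increasing on $\{m,\ldots,N\}$), with boundary values $s_\varpi^{\ind{0}}=s_\varpi^{\ind{N}}=0$ and peak $s_\varpi^{\ind{m}} = \sum_i (v^{\ind{i}} - \bar{v})_+$.

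Set $B^{\ind{i}} \defeq F_\varpi^{-1}(\check{U}^{\ind{i}})$. The first sub-step is to verify that for all sufficiently small $\Delta$ one has $B^{\ind{i}}\in\{i,i+1\}$: the alternative $B^{\ind{i}}<i$ is ruled out by $s_\varpi^{\ind{i-1}}\ge 0$ and $U>0$, while $B^{\ind{i}}\ge i+2$ would require $U > 2 - O(\Delta)$. Inverting the defining inequality gives
\[
B^{\ind{i}} = i+1 \iff s_\varpi^{\ind{i}} > (1-U)/\Delta,
\]
so the output is completely determined by $S\defeq\{i\in[N-1]:B^{\ind{i}}=i+1\}$, namely $S$ equals the super-level set $\{i\in[N-1] : s_\varpi^{\ind{i}} > (1-U)/\Delta\}$.

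The structural observation that unlocks the argument is that every non-empty super-level set of a unimodal sequence is an interval containing its peak. Consequently, the only feasible non-trivial configurations are $S = [k,\ell-1]$ with $k\le m<\ell$, which matches the proposition's condition $v^{\ind{\varpi(k)}}\ge\bar{v}$ and $v^{\ind{\varpi(\ell)}}<\bar{v}$. This elimination of multi-block configurations is the only delicate point, and is precisely where the mean-partition hypothesis is used in an essential way. For a single-block $S=[k,\ell-1]$, substituting the binding constraints (located by unimodality at the endpoints of $[k,\ell-1]$ on one side and at $k-1$ and $\ell$ on the other) yields
\[
1 - \Delta\min\{s_\varpi^{\ind{k}},s_\varpi^{\ind{\ell-1}}\} < U \le 1 - \Delta\max\{s_\varpi^{\ind{k-1}},s_\varpi^{\ind{\ell}}\}.
\]
The length of this interval is $\Delta\bigl(\min\{s_\varpi^{\ind{k}},s_\varpi^{\ind{\ell-1}}\}-\max\{s_\varpi^{\ind{k-1}},s_\varpi^{\ind{\ell}}\}\bigr)_+$, up to $O(\Delta^2)$; dividing by $\Delta$ and sending $\Delta\to 0^+$ gives the claimed intensity.

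The overall rate is read off cleanly from the complementary no-resampling event $S=\emptyset$: by unimodality the binding constraint is $U\le 1-\Delta s_\varpi^{\ind{m}}$, so the survival probability equals $1-\Delta s_\varpi^{\ind{m}} + O(\Delta^2)$, hence $\iota^*_{\mathrm{systematic}}(v^{\ind{1:N}}) = s_\varpi^{\ind{m}} = \sum_i (\bar{v}-v^{\ind{i}})_+$ after using $\sum_i(v^{\ind{i}}-\bar{v})=0$. The uniform boundedness demanded by Assumption \ref{a:resampling-generator} follows from the same explicit expressions together with compactness of $[0,v^*]^N$. The main anticipated bookkeeping concerns boundary cases ($k=1$, $\ell=N$, ties in the $v^{\ind{i}}$, and configurations where the two sides of the displayed interval collapse), but none of these alter the limit intensity.
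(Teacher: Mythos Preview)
Your proof is correct and follows essentially the same approach as the paper: Taylor-expand the weights, use the mean-partition hypothesis to obtain unimodality of the partial sums $s_\varpi^{\ind{i}}$ (the paper's $c^{\ind{i}}$ at finite $\Delta$), reduce each non-trivial configuration to a super-level set that must be a single interval $[k,\ell-1]$, and read off the probability as the length of an interval for $U$. The only minor stylistic difference is that the paper works with the exact quantities $c^{\ind{i}} = i - NF_\varpi(i)$ throughout and passes to the limit at the end (via Lemmas \ref{lem:lookup-characterisation}, \ref{lem:systematic-probability}, \ref{lem:f-k-asymptotic-weights}), whereas you carry $O(\Delta^2)$ remainders from the start; also, you obtain $\iota^*_{\mathrm{systematic}}$ from the complementary no-resampling probability $1-\Delta s_\varpi^{\ind{m}}+O(\Delta^2)$, while the paper sums the intensities directly and telescopes to $s^{\ind{m}}$.
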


\begin{remark}
In a discrete time implementation, $\varpi$ can be selected as mean partition
of unnormalised weights $g^{\ind{1:N}}$. In practice, when the mean partition
$\varpi$ for $g^{\ind{1:N}} = \exp(-\Delta v^{\ind{1:N}})$ is computed by Hoare's scheme, the mean partition will converge to a mean partition for $-v^{\ind{1:N}}$.
\end{remark}

\begin{remark}
  We believe that 'plain' stratified and systematic resampling (without mean
  partition) also satisfy Assumption \ref{a:resampling-generator}, but verification
  becomes more technical. In addition, our empirical findings suggest that the mean
  partitioned versions can be preferable.
\end{remark}

\subsection{SSP resampling}

We consider next a variant of SSP resampling \citep{gerber-chopin-whiteley}
based on a processing order (permutation) $\varpi$; see Algorithm \ref{alg:ssp}.
\begin{algorithm}
  \caption{\textsc{SSPResampling}$(w^{\ind{1:N}},\varpi)$}
  \label{alg:ssp}
  \begin{algorithmic}[1]
  \State Let $r^{\ind{1:N}} \gets \lfloor N w^{\ind{1:N}} \rfloor$, $p^{\ind{1:N}} \gets N w^{\ind{1:N}} - r^{\ind{1:N}}$. and $(i,j)\gets(\varpi(1),\varpi(2))$.
  \For{$k=2,\ldots,N$}
  \State \label{line:set_deltas} Set $\delta^{\ind{i}} \gets \min\{ p^{\ind{j}}, 1 - p^{\ind{i}}\}$ and 
  $\delta^{\ind{j}} \gets \min\{ p^{\ind{i}}, 1 - p^{\ind{j}}\}$ 
  \State \label{line:interchange_indices} With probability $\charfun{\delta^{\ind{i}}>0}\frac{\delta^{\ind{i}}}{\delta^{\ind{i}}+\delta^{\ind{j}}}$, interchange $(i,j) \gets (j,i)$.
  \If{$p^{\ind{i}} + p^{\ind{j}} < 1$}
  \State Set $p^{\ind{i}} \gets p^{\ind{i}} + \delta^{\ind{i}}$ and $j \gets \varpi(\min\{k+1,N\})$.
  \Else
  \State Increment $r^{\ind{i}} \gets r^{\ind{i}} + 1$, 
  set $p^{\ind{j}} \gets p^{\ind{j}} - \delta^{\ind{i}}$ and $i \gets \varpi(\min\{k+1,N\})$.
  \EndIf
  \EndFor
  \State Return $A^{1:N} \gets $\textsc{RepeatIndices}$(r^{\ind{1:N}})$
  \end{algorithmic}
\end{algorithm}
The function \textsc{RepeatIndices}$(r^{\ind{1:N}})$ in Algorithm \ref{alg:ssp}
returns the non-decreasing index vector $A^{1:N}$ such that $\#\{j\in[N] :  A^{j} = i\} = r^{\ind{i}}$.

\begin{proposition}
  \label{prop:ssp-intensity}
SSP resampling with mean partition order $\varpi$ of $-g^{\ind{1:N}}$ satisfies
Assumption \ref{a:resampling-generator} with intensity 
$$\iota_{\mathrm{ssp}}(a^{\ind{1:N}},v^{\ind{1:N}})
= \begin{cases} 
    \displaystyle \frac{(v^{\ind{k}} - \bar{v} )_+(\bar{v} - v^{\ind{\ell}} )_+}{\iota_{\mathrm{ssp}}^*(v^{\ind{1:N}})}
& a^{\ind{1:N}} = [k\to \ell]_N \\
0, &\text{otherwise},
\end{cases}
$$
with overall resampling intensity $\iota^*_{\mathrm{ssp}}(v^{\ind{1:N}}) = 
\iota^*_{\mathrm{systematic}}(v^{\ind{1:N}}) = \sum_{i=1}^N (v^{\ind{i}} - \bar{v})_+$.
\end{proposition}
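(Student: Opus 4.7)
I expand Algorithm \ref{alg:ssp} to first order in $\Delta$ under $g^{\ind{i}} = \exp(-\Delta v^{\ind{i}})$, so that $Nw^{\ind{i}} = 1 + \Delta(\bar v - v^{\ind{i}}) + O(\Delta^2)$. Assuming (without loss of generality, since the claimed intensity vanishes there) that no $v^{\ind{i}}$ equals $\bar v$, this splits $[N]$ into \emph{low} indices $L := \{\ell: v^{\ind{\ell}} < \bar v\}$ with $r^{\ind{\ell}} = 1$ and $p^{\ind{\ell}} = \Delta\alpha^{\ind{\ell}} + O(\Delta^2)$, where $\alpha^{\ind{\ell}} := \bar v - v^{\ind{\ell}}$, and \emph{high} indices $H := \{k: v^{\ind{k}} > \bar v\}$ with $r^{\ind{k}} = 0$ and $p^{\ind{k}} = 1 - \Delta\beta^{\ind{k}} + O(\Delta^2)$, where $\beta^{\ind{k}} := v^{\ind{k}} - \bar v$. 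For $\Delta$ small, the mean partition of $-g^{\ind{1:N}}$ places all of $L$ before all of $H$; I label $\varpi(1),\dots,\varpi(m) \in L$ (with $m := |L|$) and $\varpi(m+1),\dots,\varpi(N) =: k_1,\dots,k_{N-m} \in H$, and set $\iota^* := \sum_{\ell \in L}\alpha^{\ind{\ell}} = \sum_{k \in H}\beta^{\ind{k}}$.

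\emph{Phase 1 (iterations $k = 2,\dots,m$).} Both $p^{\ind{i}}$ and $p^{\ind{j}}$ are $O(\Delta)$, so $p^{\ind{i}} + p^{\ind{j}} < 1$ for small $\Delta$ and only the first branch fires: no $r$-value is modified, and the combined residual migrates onto a surviving role. A standard chain-selection calculation --- at each combine step the surviving index is retained or replaced by the incoming one with probabilities proportional to the current $p^{\ind{i}}$ and $p^{\ind{j}}$ --- yields that the accumulator $\ell^*$ at the end of Phase 1 equals any given $\ell \in L$ with probability $\alpha^{\ind{\ell}}/\iota^* + O(\Delta)$, and its accumulated residual is $p^{\ind{\ell^*}} = \Delta\iota^* + O(\Delta^2)$.

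\emph{Phase 2 (iterations $k = m+1,\dots,N$).} At iteration $m+t$, on the event that no earlier ``killing'' has occurred, the pair consists of the next high particle $k_t$ (residual $\approx 1$) and the small-$p$ accumulator (residual $\Delta B_t + O(\Delta^2)$, where $B_t := \sum_{s \ge t}\beta^{\ind{k_s}}$). Since $p^{\ind{i}} + p^{\ind{j}} > 1$ the \textsc{else} branch fires, and a direct evaluation of $\delta^{\ind{i}}, \delta^{\ind{j}}$ together with the swap probability gives: with probability $1 - \Delta\beta^{\ind{k_t}} + O(\Delta^2)$ the increment is assigned to $r^{\ind{k_t}}$ (the ``normal'' outcome, leaving the next accumulator with residual $\Delta B_{t+1}$), and with probability $\Delta\beta^{\ind{k_t}} + O(\Delta^2)$ it is assigned to the accumulator, raising its $r$ from $1$ to $2$ while $r^{\ind{k_t}}$ remains $0$ (the ``killing'' outcome, producing configuration $[k_t \to \ell^*]_N$). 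Conditional on the state, killings at distinct iterations are independent Bernoullis of order $\Delta$, so the probability of two or more killings in Phase 2 is $O(\Delta^2) = o(\Delta)$; in particular no non-single-swap configuration carries $\Omega(\Delta)$ mass.

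Combining the two phases and using that the Phase 1 choice of $\ell^*$ is conditionally independent (to leading order) of the location $t$ of the unique Phase 2 killing, the probability of $a^{\ind{1:N}} = [k \to \ell]_N$ with $k \in H$, $\ell \in L$ equals $\Delta\alpha^{\ind{\ell}}\beta^{\ind{k}}/\iota^* + o(\Delta)$, while every other $a^{\ind{1:N}} \ne 1{:}N$ has probability $o(\Delta)$. Dividing by $\Delta$ and letting $\Delta \to 0$ gives the claimed $\iota_{\mathrm{ssp}}$; summing over $(k,\ell)$ yields $\iota^*_{\mathrm{ssp}} = \iota^* = \sum_i (v^{\ind{i}} - \bar v)_+$. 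The main technical obstacle is the bookkeeping in Phase 2: at iteration $m+1$ the incoming particle $k_1$ arrives in role $j$ while the small accumulator $\ell^*$ occupies role $i$, whereas at iterations $m+t$ with $t \ge 2$ the roles are reversed (the incoming high particle is $i$, the accumulator is $j$), so the swap direction producing the ``killing'' outcome must be identified separately in each case. A minor subtlety occurs at the final iteration, where $B_t = \beta^{\ind{k_t}}$ makes $p^{\ind{i}} + p^{\ind{j}} - 1 = O(\Delta^2)$; one verifies that the choice of branch there is inconsequential at leading order. Uniform boundedness on $v^{\ind{1:N}} \in [0,v^*]^N$ required by Assumption \ref{a:resampling-generator} follows since all expansion coefficients are polynomial in the $v^{\ind{i}}$.
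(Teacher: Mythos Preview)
Your argument is correct and follows the same overall strategy as the paper---a step-by-step trace of Algorithm~\ref{alg:ssp} under nearly-uniform weights, split into a first phase that singles out one particle and a second phase that determines the pairing---but the execution differs in one substantive way. The paper does not carry $O(\Delta)$ asymptotics through the algorithm; instead it first proves an \emph{exact} identity (Proposition~\ref{prop:ssp-probabilities}): for any weights satisfying Assumption~\ref{a:nearly-constant-partitioned-weights},
\[
  \P\big(A^{\ind{1:N}} = [k\to\ell]_N\big) \;=\; \frac{(-\epsilon^{\ind{k}})_+\,(\epsilon^{\ind{\ell}})_+}{\sum_i (\epsilon^{\ind{i}})_+},
\]
and only then invokes Lemma~\ref{lem:f-k-asymptotic-weights} to take the $\Delta\to 0$ limit. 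This cleaner separation buys two things. First, the final-iteration subtlety you flag simply does not arise: the exact telescoping gives $p^{\ind{i}}+p^{\ind{j}}=1$ at the last step, so the else-branch is taken unambiguously. Second, the uniform-boundedness clause of Assumption~\ref{a:resampling-generator} is immediate from the exact formula (each $\epsilon^{\ind{i}}$ is $O(\Delta)$ uniformly on $[0,v^*]^N$), whereas your route requires controlling all remainder terms. There is also a harmless mirroring: you process low-$v$ (large-weight) particles first, as the phrase ``mean partition of $-g^{\ind{1:N}}$'' in the statement dictates, while Proposition~\ref{prop:ssp-probabilities} is written for the reverse order (small weights first, per Assumption~\ref{a:nearly-constant-partitioned-weights}); the two orderings give the same limiting intensity.
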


Proposition \ref{prop:ssp-intensity} follows directly from Proposition \ref{prop:ssp-probabilities} and Lemma \ref{lem:f-k-asymptotic-weights} in Appendix \ref{app:resampling-proofs}. 

\begin{remark}
  The overall resampling intensity of the SSP resampling coincides with systematic resampling with mean partition: $\iota^*_{\mathrm{ssp}}(v^{\ind{1:N}}) = \iota^*_{\mathrm{systematic}}(v^{\ind{1:N}})$. A closer inspection reveals that also the \emph{marginal intensities} for elimination of a particle $k$, or duplication of particle $\ell$, coincide. However, the elimination and duplication indices $K,L$, respectively, are independent in the case of SSP resampling, in contrast with systematic resampling, where they have a (somewhat complicated) dependence.
\end{remark}
  
\subsection{Comparison of resampling rates and a simplified limiting scheme}

In killing, and in stratified/systematic/SSP resampling based on a mean partition $\varpi$, exactly one particle is eliminated and one is duplicated in the limit. 
Therefore, the overall resampling event rate $\iota^*$ determines the instantaneous expected number of 'deaths' in all of these schemes. This motivates comparing the overall resampling rates.

\begin{theorem}
  \label{thm:resampling-order}
The overall resampling intensities of killing, stratified and systematic/SSP resampling with mean partition $\varpi$ of $-v^{\ind{1:N}}$ satisfy
\[
   \iota_{\mathrm{killing}}^*(v^{\ind{1:N}}) \ge \iota_{\mathrm{systematic}}^*(v^{\ind{1:N}}),\qquad\text{and}\qquad
   \iota_{\mathrm{stratified}}^*(v^{\ind{1:N}}) \ge \iota_{\mathrm{systematic}}^*(v^{\ind{1:N}}),
\]
for all potential values $v^{\ind{1:N}}$. However, 
$\iota_{\mathrm{killing}}^*$ and $\iota_{\mathrm{stratified}}^*$ do not satisfy such order in general.
\end{theorem}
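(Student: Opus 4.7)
The plan is to reduce all three claims to direct algebraic manipulations of the closed-form overall intensities given in Propositions \ref{pr:killing-resampling-rate}, \ref{prop:stratified-generator} and \ref{prop:systematic-generator}. Throughout, write $\bar v = N^{-1}\sum_i v^{\ind{i}}$ and let $\varpi$ be a mean partition of $-v^{\ind{1:N}}$, so that (setting $a_j \defeq v^{\ind{\varpi(j)}}$ and $b_j \defeq \bar v - a_j$) we have $b_1,\ldots,b_m \le 0$, $b_{m+1},\ldots,b_N > 0$, and $\sum_{j=1}^N b_j = 0$.  Also note that $|\{i: v^{\ind{i}} < \bar v\}| \le N-1$, because $\sum_i (v^{\ind{i}}-\bar v)=0$ forces at least one index with $v^{\ind{i}} \ge \bar v$.

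For the killing--systematic comparison, I would simply bound each positive summand of
\[
\iota^*_{\mathrm{systematic}}(v^{\ind{1:N}}) = \sum_{i=1}^N (\bar v - v^{\ind{i}})_+
\]
by $\bar v - v_{\min}$. Since at most $N-1$ summands are nonzero, this yields
$\iota^*_{\mathrm{systematic}}(v^{\ind{1:N}}) \le (N-1)(\bar v - v_{\min}) = \iota^*_{\mathrm{killing}}(v^{\ind{1:N}})$.

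For the stratified--systematic comparison, I would use the trick of shifting the index by $m$, exploiting that $\sum_j b_j = 0$:
\[
\iota^*_{\mathrm{stratified}}(v^{\ind{1:N}}) = \sum_{j=1}^N j b_j = \sum_{j=1}^N (j-m) b_j = \sum_{j=1}^{m}(j-m)b_j + \sum_{j=m+1}^N (j-m) b_j .
\]
For $j\le m$ both factors $(j-m)$ and $b_j$ are non-positive, so the first sum is $\ge 0$. For $j\ge m+1$ we have $j-m\ge 1$ and $b_j>0$, so $(j-m)b_j \ge b_j$, whence the second sum is bounded below by $\sum_{j=m+1}^N b_j = \sum_{j=1}^N (b_j)_+ = \iota^*_{\mathrm{systematic}}(v^{\ind{1:N}})$. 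Combining gives the required inequality.

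For the non-comparability of $\iota^*_{\mathrm{killing}}$ and $\iota^*_{\mathrm{stratified}}$, I would exhibit two small counterexamples; the only real judgement call in the proof is picking configurations that decisively swing the comparison in each direction. Taking $N=3$ and $v^{\ind{1:3}}=(0,0,3)$ gives $\bar v=1$, $v_{\min}=0$, so $\iota^*_{\mathrm{killing}}=2$; the mean partition orders the heavy index first, producing $(a_1,a_2,a_3)=(3,0,0)$ and $\iota^*_{\mathrm{stratified}}=-2+2+3=3>\iota^*_{\mathrm{killing}}$. Conversely, $v^{\ind{1:3}}=(0,3,3)$ gives $\bar v=2$, $\iota^*_{\mathrm{killing}}=4$, while the mean partition yields $(a_1,a_2,a_3)=(3,3,0)$ and $\iota^*_{\mathrm{stratified}}=-1-2+6=3<\iota^*_{\mathrm{killing}}$. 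These two examples show neither ordering holds in general, completing the theorem. No step here is genuinely obstructive; the whole proof is bookkeeping once the closed forms and the mean partition structure are in hand.
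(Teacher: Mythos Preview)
Your proof is correct and follows essentially the same approach as the paper. The killing--systematic bound is identical; your stratified--systematic argument (subtracting $m\sum_j b_j=0$ and exploiting the sign pattern) is a trivial rearrangement of the paper's computation of $\iota_{\mathrm{stratified}}^*-\iota_{\mathrm{systematic}}^*$; and for non-comparability the paper uses a one-parameter family of strictly decreasing $v^{\ind{1:3}}$ to get $\iota_{\mathrm{killing}}^*-\iota_{\mathrm{stratified}}^*=v^{\ind{2}}-\bar v$, whereas you give two concrete configurations, but both achieve the same end.
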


Theorem \ref{thm:resampling-order}, whose proof is given in Appendix \ref{app:resampling-proofs}, shows that systematic and SSP resampling have the smallest overall resampling rate among the studied algorithms, which suggests that they may therefore be preferable over killing and stratified resampling. 

Let us conclude this section with another scheme, which has the same limit as SSP resampling, but with more a transparent behaviour. Note that this scheme can only be used with fine enough discretisations.
\begin{definition}[Symmetrised systematic resampling]
    \label{def:symmetrised-systematic}
Assume that $g^{\ind{1:N}}$ are such that their corresponding normalised weights $w^{\ind{1:N}}$ satisfy $p \defeq \sum_{i=1}^N (N w^{\ind{i}} - 1)_+ \le 1$ (cf.~Assumption \ref{a:nearly-constant-partitioned-weights} in Appendix \ref{app:resampling-proofs}).

With probability $1-p$, return $A^{\ind{1:N}} = 1{:}N$; otherwise pick indices $K$ and $L$ on $[N]$ independently with probabilities 
$$
\P(K=k) = \frac{(1 - N w^{\ind{k}})_+}{p}\qquad\text{and}\qquad 
\P(L=\ell) = \frac{(N w^{\ind{\ell}} - 1)_+}{p},
$$
and return $A^{\ind{1:N}} = [k\to \ell]_N$.
\end{definition}

The following proposition is straightforward to check given Lemma \ref{lem:f-k-asymptotic-weights} in Appendix \ref{app:resampling-proofs}.
\begin{proposition}
    \label{prop:symmetrised-systematic-intensity}
Symmetrised systematic resampling (Definition \ref{def:symmetrised-systematic}) satisfies Assumption \ref{a:resampling-generator} with intensity
$ \iota_{\mathrm{s.syst}}(a^{\ind{1:N}},v^{\ind{1:N}})
= \iota_{\mathrm{ssp}}(a^{\ind{1:N}},v^{\ind{1:N}})$.
\end{proposition}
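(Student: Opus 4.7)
The plan is to verify Assumption \ref{a:resampling-generator} by a direct computation, substituting $g^{\ind{i}} = \exp(-\Delta v^{\ind{i}})$ into the definition of symmetrised systematic resampling and expanding in $\Delta$ around $0$. Symmetrised systematic resampling only produces two types of outputs: either the identity $a^{\ind{1:N}} = 1{:}N$ or a swap $a^{\ind{1:N}} = [k\to\ell]_N$ with some $k\neq \ell$. For every other configuration the probability vanishes identically, matching the zero value of $\iota_{\mathrm{ssp}}$ there; so the content of the proposition is in the swap configurations.

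For those, I would invoke Lemma \ref{lem:f-k-asymptotic-weights} to obtain, as $\Delta \to 0^+$,
\[
(1 - N w^{\ind{k}})_+ = \Delta\,(v^{\ind{k}} - \bar v)_+ + O(\Delta^2), \qquad (N w^{\ind{\ell}} - 1)_+ = \Delta\,(\bar v - v^{\ind{\ell}})_+ + O(\Delta^2),
\]
and hence $p = \sum_\ell (Nw^{\ind{\ell}} - 1)_+ = \Delta\,\iota^*_{\mathrm{ssp}}(v^{\ind{1:N}}) + O(\Delta^2)$. By the independence of $K$ and $L$ in Definition \ref{def:symmetrised-systematic}, the probability of the swap outcome is
\[
r\bigl([k\to\ell]_N \mid g^{\ind{1:N}}\bigr) = p \cdot \frac{(1 - Nw^{\ind{k}})_+}{p} \cdot \frac{(Nw^{\ind{\ell}} - 1)_+}{p} = \frac{(1 - Nw^{\ind{k}})_+ (Nw^{\ind{\ell}} - 1)_+}{p},
\]
so one factor of $\Delta$ cancels between numerator and denominator and dividing by $\Delta$ yields the claimed limit $(v^{\ind{k}} - \bar v)_+(\bar v - v^{\ind{\ell}})_+/\iota^*_{\mathrm{ssp}}(v^{\ind{1:N}}) = \iota_{\mathrm{ssp}}([k\to\ell]_N, v^{\ind{1:N}})$.

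For the uniform boundedness clause of Assumption \ref{a:resampling-generator}, I would note that the $O(\Delta^2)$ remainders in the expansions above are uniform in $v^{\ind{1:N}} \in [0,v^*]^N$, that $p\to 0$ as $\Delta\to 0$ so the hypothesis $p\le 1$ of Definition \ref{def:symmetrised-systematic} automatically holds for sufficiently small $\Delta$ (for all larger $\Delta \in (0,1)$ the scheme can be taken to be inactive without affecting the limit), and that the prefactor $\Delta^{-1} r([k\to\ell]_N\mid g^{\ind{1:N}})$ stays bounded because it is controlled by the ratio of two quantities of the same order in $\Delta$. The only potential obstacle is the degenerate case where $v^{\ind{1}} = \cdots = v^{\ind{N}}$ (so $\iota^*_{\mathrm{ssp}} = 0$): there $p \equiv 0$ makes the swap probability itself identically zero and the limit is trivially zero, again matching $\iota_{\mathrm{ssp}}$.
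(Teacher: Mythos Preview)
Your approach is correct and is exactly what the paper has in mind: it states only that the proposition ``is straightforward to check given Lemma~\ref{lem:f-k-asymptotic-weights},'' and your direct computation---expanding $(1-Nw^{\ind{k}})_+$, $(Nw^{\ind{\ell}}-1)_+$, and $p$ via that lemma and taking the quotient---is precisely that check. One small remark: Lemma~\ref{lem:f-k-asymptotic-weights} as stated only gives $o(\Delta)$ remainders (with a uniform $c\Delta$ bound), not $O(\Delta^2)$, but this is immaterial for both the limit and the uniform boundedness; for the latter, the cleanest route is to note $(Nw^{\ind{\ell}}-1)_+ \le p$, so $\Delta^{-1}r([k\to\ell]_N\mid g^{\ind{1:N}}) \le \Delta^{-1}(1-Nw^{\ind{k}})_+$, which is bounded uniformly by the lemma.
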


\begin{remark}
  Symmetrised systematic resampling algorithm (Definition \ref{def:symmetrised-systematic}) can be used in place of another resampling scheme, such as SSP resampling, whenever the required condition is met (i.e.~$p\le 1$). Such a combination would yield slight computational benefits, as the symmetric systematic resampling only requires two uniform random variables.
\end{remark}

\section{Convergence to a continuous-time limit} 
\label{sec:convergence}

Here we present a convergence result for particle filters as in Algorithm \ref{alg:pf}, targeting a time-discretised path-integral model as discussed in Section \ref{sec:path-integral}. The state space is $\X \defeq \real^d$ and the transitions $M_k$ correspond to appropriately scaled Euler-Maruyama type discretisations of the $d$-dimensional diffusion
\begin{equation}
  \label{eq:sde}
  \ud z_t = b(z_t)\ud t + \sigma(z_t) \ud W_t,
\end{equation}
with $z_0 \sim \mu$ for some fixed $\mu \in \mathcal{P}(\real^d)$ and coefficient functions $b\colon \real^d\to\real^d$ and $\sigma\colon \real^d\to\real^d\times\real^d$ specified below.

To this end, let $\tau > 0$ be a continuous-time horizon, $V\colon \real^d \to [0,\infty)$ be a bounded and continuous potential function and $(\Delta_n)_{n \in \N} \subset (0,\tau \land 1)$ be an arbitrary decreasing sequence of discretisation step sizes converging to zero.

For $n \in \N$, write $\widetilde{X}^{\Delta_n}$ for the $(\real^{d})^N$-valued Markov chain given by Algorithm \ref{alg:pf} with $T = \lfloor \tau / \Delta_n \rfloor + 1$, resampling scheme $r$ satisfying Assumption \ref{a:resampling-generator} and the transitions $(M_k)_{k \in 1{:}T}$ and functions $(G_k)_{k\in 1{:}(T-1)}$ defined as follows:
\begin{itemize}
  \item $M_1 = \mu$, and $M_k(\ud y \mid x) = \P\big( x + b(x)\Delta_n + \sigma(x) B^{\Delta_n}_k \in \ud y \big)$ for $k \in 2{:}T$, where the $B^{\Delta_n}_k$'s are independently distributed as $\mathcal{N}(0,\Delta_n I_{\real^d})$;
  \item $G_{k}(\underline{\vec{x}}_{k}) \defeq \nu^{\Delta_n}(\underline{x}_k) \defeq \big( e^{-\Delta_n V(x_k^{\ind{1}})}, \cdots, e^{-\Delta_n V(x_k^{\ind{N}})} \big)$ for all $k \in 1{:}(T-1)$.
\end{itemize}
Note that each $G_k$ in the above definition depends only on the states of the particles at time $k$, so that $(\widetilde{X}^{\Delta_n}_k)_{k \in 1{:}T}$ is indeed a Markov chain.

Write then
\[
  X^{\Delta_n}_k \defeq \widetilde{X}^{\Delta_n}_{k+1} \quad \text{for} \quad k \in \{0,1,\cdots,\lfloor \tau / \Delta_n\rfloor\}.
\]
This re-indexing is introduced since particles commence with `time' index $1$ in Algorithm \ref{alg:pf}. The next theorem proves convergence of the c\`adl\`ag  extension of this skeleton which is defined as $X^{\Delta_n}_{\lfloor t/\Delta_n \rfloor}$ for $t\in[0,\tau]$.

Recall that by Assumption \ref{a:resampling-generator},
\[
  \lim_{n\to\infty} \frac1{\Delta_n} r\big( a \mid \nu^{\Delta_n}(x) \big) = \iota\big( a , (V(x^{\ind{1}}),\cdots,V(x^{\ind{N}})) \big) =: \iota^a (x)
\]
for all $a \in [N]^N\setminus\{1{:}N\}$, with \emph{bounded and pointwise convergence} with respect to $x \defeq (x^{\ind{1}},\cdots,x^{\ind{N}}) \in (\real^d)^N$ in the sense that the term inside the limit is uniformly bounded with respect to $n$ and $x$.

\begin{theorem}\label{thm:convergence}
Let the $(\real^d)^N$-valued Markov chains $X^{\Delta_n}$, $n \in \N$, be as
above. Assume that the coefficient functions $b$ and $\sigma$ of the diffusion
\eqref{eq:sde} are Lipschitz continuous and bounded, that the diffusion is
uniformly non-degenerate in the sense that
\begin{equation}\label{eq:nondegenerate}
  \inf_{x\in\real^d} \, \inf_{\theta \in \real^d,\,|\theta|=1} |\sigma(x) \theta| > 0,
\end{equation}
and that the functions $\iota^a\colon (\real^d)^N \to [0,\infty)$, $a \in [N]^N\setminus\{1{:}N\}$, are bounded and continuous. 
\begin{enumerate}[(i)]
\item There exists a continuous-time process $(Z_t)_{t \geq 0}$ with c\`adl\`ag paths in $(\R^d)^N$ such that
\[
  \lim_{n \to \infty} ( X^{\Delta_n}_{\lfloor t_1/\Delta_n \rfloor}, \cdots , X^{\Delta_n}_{\lfloor t_T/\Delta_n \rfloor} )
  = ( Z_{t_1}, \cdots , Z_{t_T} )
\]
in distribution for all finite $\{t_1,\cdots,t_T\} \subset [0,\tau]$.
\item The limit process $Z$ in part (i) has infinitesimal generator
\[
  \L f(x) \defeq \sum_{j \in [N]} L^{(j)}f (x) + \sum_{a \in [N]^N\setminus\{1{:}N\}} \iota^a(x) \big( f(x^{\ind{a(1{:}N)}}) - f(x) \big)
\]
for $f \in \testf( \real^{dN} )$ and $x \in \real^{dN}$, where $L$ is the generator corresponding to the $d$-dimensional diffusion \eqref{eq:sde}, $L^{(j)}f(x)$ stands for $L[ y \mapsto f(x^{\ind{1}},\cdots,x^{\ind{j-1}},y,x^{\ind{j+1}},\dots,x^{\ind{N}})](x^{\ind{j}})$ and $x^{\ind{a(1{:}N)}} = (x^{\ind{a(1)}},\cdots,x^{\ind{a(N)}}) \in \real^{dN}$ for $x \defeq (x^\ind{1},\cdots,x^{\ind{N}})\in \real^{dN}$ and $a \in [N]^N$.
\item Let $\mathcal{V}\colon[0,\infty)\times \R^{dN} \to [0,\infty)$ be a bounded and continuous function. Then
\[
  \lim_{n\to\infty} \E \Big[ f(X^{\Delta_n}_{\lfloor \tau/\Delta_n\rfloor} ) \prod_{k=0}^{\lfloor \tau/\Delta_n\rfloor-1} e^{- \Delta_n \mathcal{V}( k\Delta_n, X^{\Delta_n}_k )}\Big] = \E\Big[ f(Z_{\tau}) \exp\big( -\int_0^{\tau} \mathcal{V}(u,Z_u)\ud u \big)\Big]
\]
for all bounded and continuous $f\colon \R^{dN} \to \R$.
\end{enumerate}
\end{theorem}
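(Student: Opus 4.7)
Parts (i) and (ii) amount to a standard ``discrete Markov chain converges to a continuous-time Markov process via generator convergence'' statement, and I would prove them together within the Ethier--Kurtz framework (\emph{Markov Processes: Characterization and Convergence}, e.g.\ Theorem~4.2.6 together with a discrete-to-continuous convergence corollary such as Corollary~4.8.17 applied to the c\`adl\`ag extension $\hat X^{\Delta_n}$). The three steps are: (a) expand the one-step operator $P_{\Delta_n}$ of $X^{\Delta_n}$ in $\Delta_n$; (b) show $\Delta_n^{-1}(P_{\Delta_n}f - f) \to \L f$ locally uniformly on a core of test functions; and (c) combine this with well-posedness of the martingale problem for $\L$ and tightness of the interpolated chains to deduce weak convergence in the Skorohod topology of $\hat X^{\Delta_n}$ to the process $Z$, simultaneously identifying its generator as $\L$. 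Part (iii) then follows by applying a continuous-mapping / bounded-convergence argument to a Feynman--Kac-type functional on the path space.

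For steps (a)--(b), I would write
\[
(P_{\Delta_n}f)(x) = \sum_{a \in [N]^N} r\big(a \mid \nu^{\Delta_n}(x)\big) \int f(y) \prod_{i=1}^N M_{\Delta_n}\big(\ud y^{\ind i} \mid x^{\ind{a(i)}}\big),
\]
with $M_{\Delta_n}(\uarg \mid z)$ the Euler--Maruyama law of $z+b(z)\Delta_n+\sigma(z)B$ for $B\sim\mathcal{N}(0,\Delta_n I_d)$, and split the sum according to $a = 1{:}N$ versus $a \neq 1{:}N$. An It\^o--Taylor expansion applied to $f \in \testf(\real^{dN})$ handles the first term, yielding the diffusive contribution $\Delta_n \sum_{j=1}^N L^{(j)} f(x) + o(\Delta_n)$ locally uniformly in $x$ (using boundedness and Lipschitz continuity of $b,\sigma$). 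For each $a \neq 1{:}N$, Assumption~\ref{a:resampling-generator} gives $\Delta_n^{-1}r(a\mid\nu^{\Delta_n}(x)) \to \iota^a(x)$ while the transition integral converges to $f(x^{\ind{a(1{:}N)}})$, locally uniformly in $x$; boundedness of $V$ places $\nu^{\Delta_n}(x)$ in a bounded rectangle uniformly in $x$, so the uniform-bound clause of Assumption~\ref{a:resampling-generator} together with $r(1{:}N\mid\nu^{\Delta_n}(x)) = 1 - \sum_{a\neq 1{:}N} r(a\mid\nu^{\Delta_n}(x))$ controls the ``no event'' probability. Summing the two contributions gives $\Delta_n^{-1}(P_{\Delta_n}f - f)(x) \to \L f(x)$ locally uniformly for $f\in\testf(\real^{dN})$, which is a core for $\L$.

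For part (iii), the functional $\Phi\colon \omega \mapsto f(\omega(\tau))\exp\!\big(-\int_0^\tau \mathcal{V}(u,\omega(u))\ud u\big)$ on $D([0,\tau];\real^{dN})$ is bounded by $\|f\|_\infty$ and continuous at every path $\omega$ that is continuous at $\tau$; since the total jump intensity of $\L$ is bounded, $\P(Z_{\tau-} = Z_\tau) = 1$, so $\Phi$ is $\mathrm{Law}(Z)$-almost surely a continuity point. The discrete expectation equals $\E[\Phi_n(\hat X^{\Delta_n})]$ for the Riemann-sum approximation $\Phi_n(\omega) \defeq f(\omega(\tau))\exp\!\big(-\Delta_n\sum_{k=0}^{\lfloor\tau/\Delta_n\rfloor-1}\mathcal{V}(k\Delta_n,\omega(k\Delta_n))\big)$, and one checks that $\Phi_n(\omega_n) \to \Phi(\omega)$ whenever $\omega_n \to \omega$ in Skorohod with $\omega$ continuous at $\tau$ (by uniform continuity of $\mathcal{V}$ on compacts and convergence of $\omega_n(t)$ to $\omega(t)$ at every continuity point of the latter). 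The extended continuous-mapping theorem, combined with the Skorohod convergence from (i) and the uniform bound $\|\Phi_n\|_\infty \le \|f\|_\infty$, then gives the claim.

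The main obstacle is well-posedness of the martingale problem for $\L$, needed both to identify the limit uniquely and to invoke the Ethier--Kurtz convergence theorem. Here $\L$ decomposes as the multi-particle diffusion generator $\sum_j L^{(j)}$ plus a jump term with \emph{bounded} total intensity $\sum_{a \neq 1{:}N} \iota^a$; well-posedness for the diffusion part is standard under the Lipschitz and uniform non-degeneracy assumptions (this is precisely where \eqref{eq:nondegenerate} enters), and a bounded-perturbation argument (e.g., Ethier--Kurtz, Theorem~4.10.3) transports uniqueness to the full generator $\L$. A secondary issue is tightness and compact containment of $\{\hat X^{\Delta_n}\}_n$ in Skorohod space; since $b,\sigma$ are bounded and the resampling step merely re-permutes existing particles (so no mass is created), second-moment bounds for Euler--Maruyama together with Aldous' criterion applied to the pure-jump component suffice.
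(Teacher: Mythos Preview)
Your overall strategy matches the paper's: establish generator convergence $\Delta_n^{-1}(P_{\Delta_n}f-f)\to\L f$, prove well-posedness of the martingale problem for $(\L,\testf(\real^{dN}))$, show relative compactness of the interpolated chains, and then invoke an Ethier--Kurtz convergence result (the paper uses Corollary 4.8.13). Your treatment of part (iii) via an extended continuous-mapping argument is also close to the paper's, which proves a slightly more general functional-convergence result and then compares the Riemann sum to the time integral using uniform continuity of $\mathcal{V}$ on compacts together with uniform-in-$n$ compact containment of the paths.

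There is, however, a genuine gap in your well-posedness step. The one-line appeal to a bounded-perturbation theorem does not apply directly, because for $f\in\testf(\real^{dN})$ the jump part $\Ljump f(x)=\sum_{a\neq 1{:}N}\iota^a(x)(f(x^{\ind{a(1{:}N)}})-f(x))$ need not vanish at infinity: when a coordinate is eliminated under $a$, the function $x\mapsto f(x^{\ind{a(1{:}N)}})$ is not compactly supported, and $\iota^a$ is only assumed bounded and continuous. Hence $\Ljump$ does not map $\testf$ into $C_0(\real^{dN})$, and the standard bounded-perturbation results for Feller generators (which is what one usually cites as ``Ethier--Kurtz Theorem 4.10.3'') are not available as stated. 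The paper deals with this by introducing cutoffs $\chi^k\in\testf$ so that each $\L^k=\Lmut+\chi^k\Ljump$ \emph{is} a Feller generator with well-posed martingale problem, and then obtains well-posedness for $\L$ by the localisation results of Ethier--Kurtz (Chapter~4, Theorems 6.1--6.2), using that $\L$ coincides with some $\L^k$ on every bounded set. You should insert this localisation step.

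A secondary point: your tightness sketch (``second-moment bounds plus Aldous'') is plausible but underspecifies the real difficulty, namely that a resampling jump moves a particle to another particle's location, so jump sizes are only controlled once you already have compact containment of the whole population. The paper devotes most of its technical effort to this, constructing a tree of auxiliary Euler--Maruyama chains that contains all particle trajectories up to a bounded number of resampling events, and then using martingale/BDG estimates on the tree. Your Aldous-criterion route can be made to work, but you should first establish compact containment (note: resampling duplicates and deletes, it does not ``re-permute'') and then argue that the probability of any resampling event on $[\tau_n,\tau_n+\delta_n]$ is $O(\delta_n)$ uniformly in $n$ by the uniform-bound clause of Assumption~\ref{a:resampling-generator}.
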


The proof of Theorem \ref{thm:convergence} is given in Appendix \ref{app:convergence} below.

\begin{remark}
    Regarding Theorem \ref{thm:convergence}:
    \begin{enumerate}[(i)]
        \item The assumption about the boundedness and continuity of the
            functions $\iota^a$ above often follows automatically from the
            corresponding properties of the potential function $V$ (cf.
            Proposition \ref{pr:killing-resampling-rate}).
        \item Theorem \ref{thm:general-convergence} in Appendix
            \ref{app:convergence} is a more general variant of part (iii) of
            Theorem \ref{thm:convergence}.
        \item 
            The result is formulated for time-homogeneous coefficient and
            potential functions $b$, $\sigma$ and $V$ for simpler exposition.
            In fact, by considering a time-augmented state space
            (i.e.~$[0,\infty)\times\real^{dN}$ instead of $\real^{dN}$), an
            analysis similar to the one in Appendix \ref{app:convergence} can
            be carried out for time-dependent coefficient and potential
            functions, resulting in a variant of the Theorem with a
            time-inhomogeneous limit process $Z$. We omit the details; see
            e.g.~\cite[Chapter 4, Section 7]{ethier-kurtz} for basic results
            corresponding to such generalizations.
        \item In the special case where killing resampling is used, we recover
            as the limit the continuous-time particle system described in e.g.
            Section 1.5.2 of \cite{del-moral} or
            \cite{del-moral-jacob-lee-murray-peters}. See also e.g.~\cite[Example 3.1.3 and Proposition 3.4]{rousset} for a continuos-time particle model with overall resampling rate that interestingly coincides with that of the systematic and SSP resampling schemes (see Propositions \ref{prop:systematic-generator} and \ref{prop:ssp-intensity}). 
    \end{enumerate}
\end{remark}

\section{Unbiased estimation of Feynman-Kac measures}
\label{sec:unbiasedestimation}

Continuing the theme of Section \ref{sec:convergence}, we explain how the unbiasedness condition of the resampling $r$ (Assumption \ref{a:unbiased-resampling}) leads to an unbiasedness property for the jumping intensities $\iota(a,\cdot)$ (Definition \ref{def:asymptotic-unbiased} below). This property will be applied for time-marginal Feynman-Kac measures of the particle filter on the continuous-time limit, namely Theorem \ref{thm:fk-unbiased} below, which is a continuous-time variant of the well-known property \eqref{eq:unbiased-smoothing}. 

\begin{definition}\label{def:asymptotic-unbiased}
  We say that a resampling scheme $r$ satisfying Assumption \ref{a:resampling-generator} is \emph{asymptotically unbiased} if
  \begin{equation}
    \label{eq:asymptotic-unbiased}
    \sum_{a \in [N]^N\setminus\{1{:}N\}} \iota(a,v^{\ind{1{:}N}})\big(\#\{j \in [N]\,:\, a^\ind{j} = i\} - 1\big) = 
    \frac{1}{N} \sum_{j=1}^N v^{\ind{j}} - v^{\ind{i}}
  \end{equation}
for all $v^{\ind{1{:}N}} \in [0,\infty)^N$ and $i \in [N]$.
\end{definition}

In order to state the main result of this section, let us introduce the following notation: for functions $f \colon \real^d\to\real$, write $\overline{f} \colon (\real^{d})^N \to \real$ for the function $x \mapsto \frac1N \sum_{i=1}^N f(x^{\ind{i}})$.

\begin{theorem}\label{thm:fk-unbiased}
Let
\begin{enumerate}[(i)]
\item $r$ be an asymptotically unbiased resampling scheme;
\item $Z \defeq (Z_t)_{t \geq 0}$ be the continuous-time particle filter in Theorem \ref{thm:convergence};
\item $z \defeq (z_t)_{t \geq 0}$ be the solution to the $d$-dimensional stochastic differential equation \eqref{eq:sde} with initial distribution $\mu$ (the solution is unique in law under the assumptions of Theorem \ref{thm:convergence});
\item $V\colon \real^d \to [0,\infty)$ be the bounded and continuous potential function in (the construction preceding) Theorem \ref{thm:convergence}.
\end{enumerate}
Then
\begin{equation}\label{eq:fk-unbiased}
  \E\Big[ \overline{f}(Z_t) \exp\Big( - \int_0^t \overline{V}(Z_u) \ud u \Big)\Big] = \E\Big[ f(z_t) \exp \Big( - \int_0^t V(z_u)\ud u \Big)\Big]
\end{equation}
for all $t > 0$ and bounded and measurable $f \colon \real^d\to\real$.
\end{theorem}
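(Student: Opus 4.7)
The plan is to run a Feynman--Kac semigroup/martingale argument in which the key algebraic input is a commutation identity between the generator $\L$ of the limit particle system $Z$ (from Theorem \ref{thm:convergence}(ii)) and the single-particle generator $L$ of the diffusion \eqref{eq:sde}. Let $P_s$ denote the Feynman--Kac semigroup of $z$ with potential $V$, that is, $P_s f(y) \defeq \E[f(z_s) \exp(-\int_0^s V(z_u)\ud u) \mid z_0=y]$, so that under the regularity assumed in Theorem \ref{thm:convergence} the map $s \mapsto P_s f$ satisfies the Kolmogorov backward equation $\partial_s P_s f = (L-V) P_s f$ for sufficiently smooth $f$. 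The goal is to show $Q_t \overline{f}(x) = \overline{P_t f}(x)$ pointwise, where $Q_t g(x) \defeq \E_x[g(Z_t)\exp(-\int_0^t \overline{V}(Z_u)\ud u)]$, and then integrate against the initial law (whose components are i.i.d.\ $\mu$).

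The first step is to establish the commutation
\[
  (\L - \overline{V})\overline{f}(x) \;=\; \overline{(L-V)f}(x), \qquad x \in (\real^d)^N,
\]
for smooth $f\colon \real^d\to\real$. Since $\overline{f}(x)=N^{-1}\sum_i f(x^{\ind{i}})$ depends on the $j$-th coordinate only through $N^{-1}f(x^{\ind{j}})$, the diffusion contribution gives $\sum_j L^{(j)}\overline{f}(x) = \overline{Lf}(x)$. For the jump contribution, collecting the values of $f$ at each coordinate yields
\[
\sum_{a \neq 1{:}N}\iota^a(x)\bigl(\overline{f}(x^{\ind{a(1{:}N)}})-\overline{f}(x)\bigr)
= \frac{1}{N}\sum_{j=1}^N f(x^{\ind{j}}) \sum_{a\neq 1{:}N}\iota^a(x)\bigl(\#\{i:a^{\ind{i}}=j\}-1\bigr),
\]
and asymptotic unbiasedness \eqref{eq:asymptotic-unbiased} with $v^{\ind{i}}=V(x^{\ind{i}})$ replaces the inner sum with $\overline{V}(x) - V(x^{\ind{j}})$, producing $\overline{V}(x)\overline{f}(x) - \overline{Vf}(x)$. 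Subtracting $\overline{V}\,\overline{f}$ gives the claimed identity.

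The second step is a standard martingale argument using the generator $\L$ of $Z$. Fix $t>0$ and consider
\[
  M_s \defeq \overline{P_{t-s}f}(Z_s)\exp\Bigl(-\int_0^s \overline{V}(Z_u)\ud u\Bigr), \qquad s\in[0,t].
\]
Setting $g_s(x)\defeq \overline{P_{t-s}f}(x)$ and differentiating under $s$ yields $\partial_s g_s = -\overline{(L-V)P_{t-s}f}$, which by the commutation equals $-(\L-\overline{V})g_s$. Consequently Dynkin's formula (with the multiplicative Feynman--Kac functional) gives that $M$ is a martingale, so $\E[M_0]=\E[M_t]$, i.e.
\[
  \E\bigl[\overline{P_t f}(Z_0)\bigr] \;=\; \E\Bigl[\overline{f}(Z_t)\exp\bigl(-\textstyle\int_0^t \overline{V}(Z_u)\ud u\bigr)\Bigr].
\]
Since the initial configuration $Z_0=(Z_0^{\ind{1}},\dots,Z_0^{\ind{N}})$ has i.i.d.\ components of law $\mu$, the left-hand side equals $\E[P_t f(z_0)] = \E[f(z_t)\exp(-\int_0^t V(z_u)\ud u)]$, which is exactly \eqref{eq:fk-unbiased} for such $f$.

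The main obstacle is justifying that $\overline{P_{t-s}f}$ lies in a domain on which $\L$ acts and on which Dynkin's formula is valid, since compactly supported $f$ does not yield a compactly supported $\overline{P_{t-s}f}$ on $(\real^d)^N$. I would address this by first proving the identity for $f \in \testf(\real^d)$, exploiting the non-degeneracy and boundedness assumptions of Theorem \ref{thm:convergence} to obtain sufficient smoothness and boundedness of $P_s f$ and its derivatives (so that the diffusion and finitely many jump terms composing $\L g_s$ are well defined and bounded, and the local martingale is a true martingale by boundedness). The extension to bounded measurable $f$ then follows by approximating $f$ in a monotone/dominated fashion, using that both sides of \eqref{eq:fk-unbiased} are linear and continuous in $f$ under bounded pointwise convergence; a standard functional monotone class argument then completes the proof.
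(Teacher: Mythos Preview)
Your argument is correct in substance and shares the same algebraic core as the paper's proof: both hinge on the identity
\[
  (\L - \overline{V})\overline{f} = \overline{(L-V)f},
\]
which is exactly what asymptotic unbiasedness \eqref{eq:asymptotic-unbiased} buys (the paper derives this as \eqref{eq:generator-unbiased}). The route, however, is genuinely different. You run a \emph{backward} argument: define the single-particle Feynman--Kac semigroup $P_{t-s}$, plug $\overline{P_{t-s}f}$ into the $Z$-process, and show the resulting multiplicative functional is a martingale via Dynkin. The paper instead runs a \emph{forward} argument: it differentiates the measure flows $\QQ_t(\overline{f})$ and $Q_t(f)$, shows they obey the same evolution equation $\partial_t \mu_t(f) = \mu_t((L-V)f)$, and then invokes \emph{uniqueness} for that forward equation. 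To obtain uniqueness, the paper passes to the one-point compactification $\dot{\real}^d$, encodes the potential as a jump-to-cemetery in an operator $A$, and appeals to \cite[Chapter~4, Proposition~9.19]{ethier-kurtz} via well-posedness of the associated martingale problem.

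What each buys: your approach is more direct and avoids the compactification and the forward-equation uniqueness machinery, but its cost is exactly the obstacle you flag --- you need $\overline{P_{t-s}f}$ to sit in a workable domain for $\L$ with enough regularity in $s$ for Dynkin to apply, and $P_sf$ is never compactly supported. Under the hypotheses of Theorem~\ref{thm:convergence} (bounded Lipschitz $b,\sigma$, uniform non-degeneracy, bounded $V$), parabolic regularity does give $P_sf \in C_b^\infty$ with bounded derivatives, so the diffusion and (finitely many) jump terms of $\L$ act boundedly and the local martingale is a true martingale; this is standard but not free. The paper's approach sidesteps regularity of $P_sf$ entirely, at the price of the compactification detour and an external uniqueness result. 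Both are valid; yours is closer to the textbook Feynman--Kac/duality proof, the paper's is closer to a measure-valued/forward-equation viewpoint.
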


Regarding assumption (i) in Theorem \ref{thm:fk-unbiased}, we note that it follows from our standard Assumptions \ref{a:unbiased-resampling} and \ref{a:resampling-generator}:

\begin{proposition}\label{pr:asymptotic-unbiased}
  Suppose that the resampling $r$ satisfies Assumptions \ref{a:unbiased-resampling} and \ref{a:resampling-generator}. Then it is asymptotically unbiased.
\end{proposition}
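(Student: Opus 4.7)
The plan is to take the unbiasedness identity \eqref{eq:res-unbiased} with the special choice $g^{\ind{i}} = \exp(-\Delta v^{\ind{i}})$, rearrange it so that the sum is only over $a \neq 1{:}N$, and then divide by $\Delta$ and send $\Delta \to 0^+$. Since the index set $[N]^N\setminus\{1{:}N\}$ is finite, the limit can be taken term by term using Assumption \ref{a:resampling-generator}, and the remaining content is a short Taylor expansion on the right-hand side.

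More concretely, Assumption \ref{a:unbiased-resampling} rewritten over the support of $r$ says
\[
  \sum_{a \in [N]^N} r(a \mid g^{\ind{1{:}N}}) \cdot \frac{\#\{j : a^{\ind{j}} = i\}}{N} = \frac{g^{\ind{i}}}{\sum_{\ell=1}^N g^{\ind{\ell}}}.
\]
I would separate off the term $a = 1{:}N$ (for which $\#\{j : a^{\ind{j}} = i\} = 1$) and use the normalisation $r(1{:}N \mid g^{\ind{1{:}N}}) = 1 - \sum_{a \neq 1{:}N} r(a \mid g^{\ind{1{:}N}})$ to obtain the reorganised identity
\[
  \frac{1}{N}\sum_{a \neq 1{:}N} r(a \mid g^{\ind{1{:}N}})\bigl(\#\{j : a^{\ind{j}} = i\} - 1\bigr) = \frac{g^{\ind{i}}}{\sum_{\ell=1}^N g^{\ind{\ell}}} - \frac{1}{N}.
\]
Plugging in $g^{\ind{i}} = e^{-\Delta v^{\ind{i}}}$ and Taylor-expanding numerator and denominator in $\Delta$ gives
\[
  \frac{e^{-\Delta v^{\ind{i}}}}{\sum_{\ell=1}^N e^{-\Delta v^{\ind{\ell}}}} - \frac{1}{N} = \frac{\Delta}{N}\bigl(\bar v - v^{\ind{i}}\bigr) + O(\Delta^2),
\]
where the $O(\Delta^2)$ term depends only on the fixed vector $v^{\ind{1{:}N}}$.

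Dividing the reorganised identity by $\Delta$ yields
\[
  \sum_{a \neq 1{:}N} \frac{1}{\Delta}\, r\bigl(a \bigm| (e^{-\Delta v^{\ind{1}}},\dots,e^{-\Delta v^{\ind{N}}})\bigr)\bigl(\#\{j : a^{\ind{j}} = i\} - 1\bigr) = \bar v - v^{\ind{i}} + O(\Delta).
\]
Letting $\Delta \to 0^+$ and invoking Assumption \ref{a:resampling-generator} termwise (the sum ranges over the finite set $[N]^N\setminus\{1{:}N\}$, so exchanging limit and sum is automatic) produces exactly \eqref{eq:asymptotic-unbiased}. The only place where some care is needed is the Taylor expansion, and this is entirely routine since $v^{\ind{1{:}N}}$ is held fixed; there is no serious obstacle beyond bookkeeping.
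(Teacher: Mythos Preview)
Your proof is correct and essentially identical to the paper's: both start from the unbiasedness identity with $g^{\ind{i}} = e^{-\Delta v^{\ind{i}}}$, rewrite the sum over $a \neq 1{:}N$ (the paper subtracts the normalisation $\sum_a r(a\mid g) \equiv 1$ directly, which is algebraically the same as your separation of the $a = 1{:}N$ term), then divide by $\Delta$ and pass to the limit termwise via Assumption \ref{a:resampling-generator} on the left and a Taylor expansion on the right. The paper even points to Lemma \ref{lem:f-k-asymptotic-weights} for the expansion you wrote out explicitly.
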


In particular, the assumption holds for all resampling schemes examined in Section \ref{sec:resampling-schemes}, but it may also hold for a richer class of resampling schemes, including ones that are not necessarily unbiased in the sense of Assumption \ref{a:unbiased-resampling}.

The proofs of Theorem \ref{thm:fk-unbiased} and Proposition \ref{pr:asymptotic-unbiased} are presented in Appendix \ref{app:fk-unbiased}.

\section{Experiments} \label{sec:experiments}

We compare empirically the behaviour of a number of resampling algorithms in two
experiments. Our first experiment is on an Ornstein-Uhlenbeck latent process and
a `box-shape' potential (`OU'). The second experiment is about the inference of
a Cox process, that is, an inhomogeneous Poisson process with latent intensity, where we use the particle filter within a PMMH (particle marginal Metropolis-Hastings) sampler. 

\subsection{Ornstein-Uhlenbeck process and box potential}
\label{sec:ou-example}

In this experiment, $\mathbb{M}$ corresponds to the law of a stationary
Ornstein-Uhlenbeck process $(Z_t)_{t\in [0,\tau]}$ with initial distribution
$Z_0\sim N(0,\sigma_\infty^2)$ which solves the following stochastic differential
equation
$$
\ud Z_t = - \theta Z_t \ud t + \sigma \ud W_t,
$$
where $(W_t)_{t\ge 0}$ is the standard Brownian motion. The parameters are
$\theta=0.1$ and $\sigma=1$ and the stationary variance is $\sigma_\infty =
\sigma/\sqrt{2\theta} \approx 2.236$.

The potential function is a (relatively narrow) `box-shaped' potential function
of the following form:
$$
V(x) \defeq 6\times\charfun{| x - 0.5 | > 0.1}.
$$

We study the performance of the particle filter with different resampling
schemes in gradually finer discretisation $\log_2 \Delta \in \{-12, -10, \ldots, 0\}$, with different number of particles $N\in \{64, 128, 256, 512\}$. We
repeat the particle filter 10,000 times with each configuration to obtain the root mean squared errors (RMSE) in the figures. We consider the
resampling schemes discussed in Section \ref{sec:resampling-schemes}. The
resampling schemes that include mean partition order, are named with
`Partition'.

For each $\Delta$, we calculate the `ground truth' of the normalising constant, defined as $\mathcal{Z}_T$ in \eqref{eq:discretised-path-integral}, over all scenarios (all resampling schemes, all $N$). Taking this normalising constant as the truth, we construct unbiased estimators of the relative normalising constant (true value 1) and the filtering (the last state $\hat{Z}_\tau$ in \eqref{eq:discretised-path-integral}) and smoothing expectations (the first state $\hat{Z}_0$). For the latter, we use the `filter smoother', that is, use traced back paths in estimation. 

\begin{figure}
    \includegraphics[width=.99\linewidth]{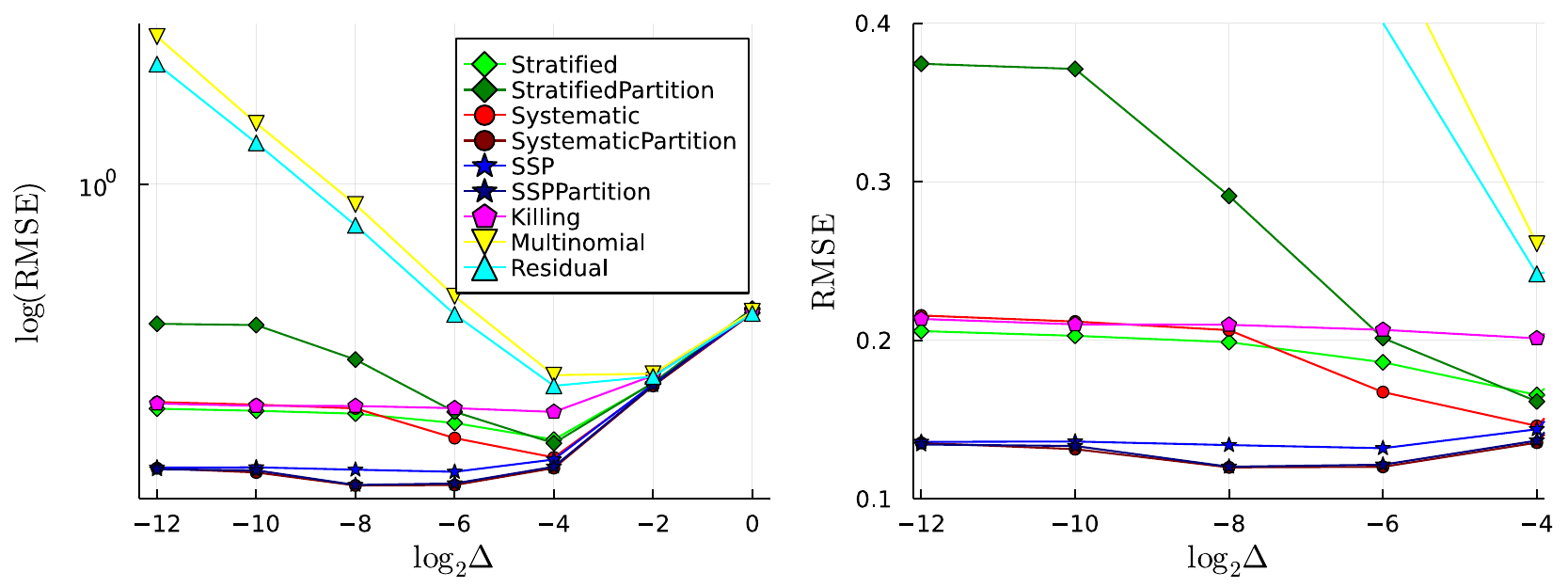}
    \caption{Normalising constant estimates with $N=512$ in the OU example.
        While
        the left panel uses a log scale and compares all the resampling
        schemes, the right panel uses a linear scale to compare more finely
        those resampling schemes that do not diverge as $\Delta\rightarrow 0 $.
    } \label{fig:ou-norm-const-512}
\end{figure}
Figure \ref{fig:ou-norm-const-512} shows the normalising constant estimate mean
squared errors (MSEs) in the case $N=512$. When $\Delta = 2^{0}$, the
performance is almost identical across resampling schemes, as we are not in the
weakly informative setting, but this is no longer the case as $\Delta \to 0$.
As expected, the performance of multinomial and residual resampling decay as
$\Delta\to 0$, while other resampling schemes remain stable.  The increase in
relative RMSE for the multinomial scheme supports related results in
\cite{cerou-delmoral-guyader} that show  the  variance of the normalising
constant estimate increases exponentially with the number of resampling
instances. Similar to multinomial, residual resampling rate does not stabilise
(see comment after Assumption \ref{a:resampling-generator}) and hence the
observed variance increase as $\Delta$ decreases.  The zoomed figure on the
right suggests that the best performance is obtained with SystematicPartition
and SSPPartition. SSP resampling is also close, and seems indistinguishable in
the $\Delta\to 0$ limit.

\begin{figure}
  \includegraphics[width=.99\linewidth]{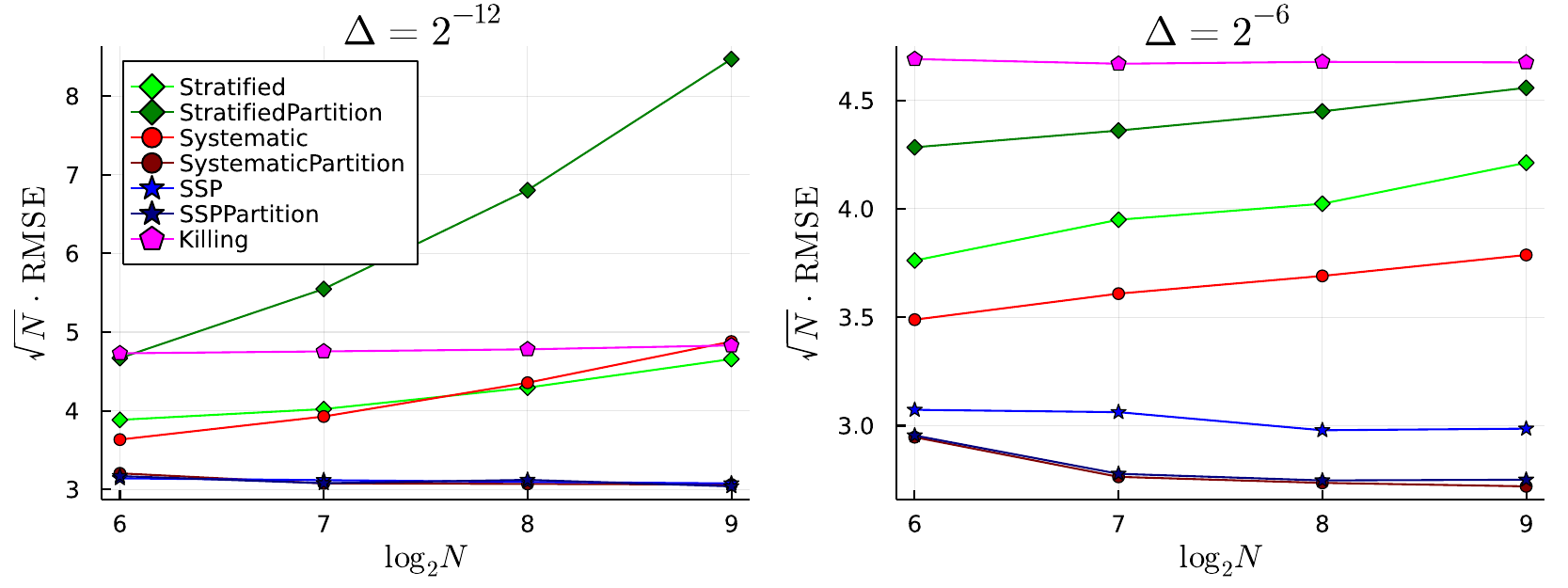}
  \caption{Normalising constant estimates with different $N$ in the OU example.
    Each panel corresponds to a different value of $\Delta$.}
  \label{fig:ou-norm-const-varying-N}
\end{figure}
Figure \ref{fig:ou-norm-const-varying-N} shows a similar picture for varying
$N$, with two choices of $\Delta$. The values of the y-axis are RMSEs multiplied
by $\sqrt{N}$, which is expected to stabilise if a central limit theorem type
result holds. The results suggest that Killing, SSP, SSPPartition and
SystematicPartition indeed stabilise, with the latter two being again the best.

\begin{figure}
  \includegraphics[width=.99\linewidth]{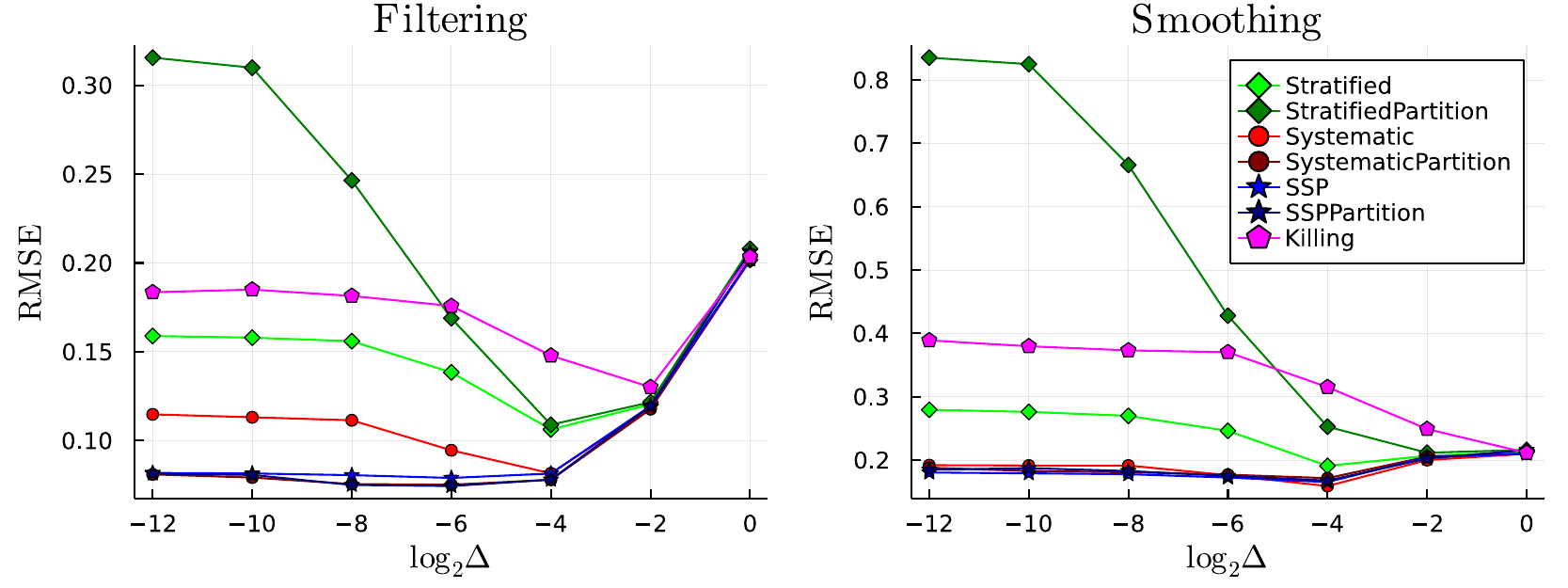}
  \caption{Filtering and smoothing estimates with $N=512$ in the OU example.}
  \label{fig:ou-filtering-smoothing-512}
\end{figure}
Figure \ref{fig:ou-filtering-smoothing-512} shows filtering and smoothing
estimate MSEs for stable resampling schemes similar to Figure
\ref{fig:ou-norm-const-512}. The conclusions are similar, except for systematic
resampling, which seems to be competitive with the best schemes in smoothing,
but not in filtering.

\begin{figure}
  \includegraphics[width=.99\linewidth]{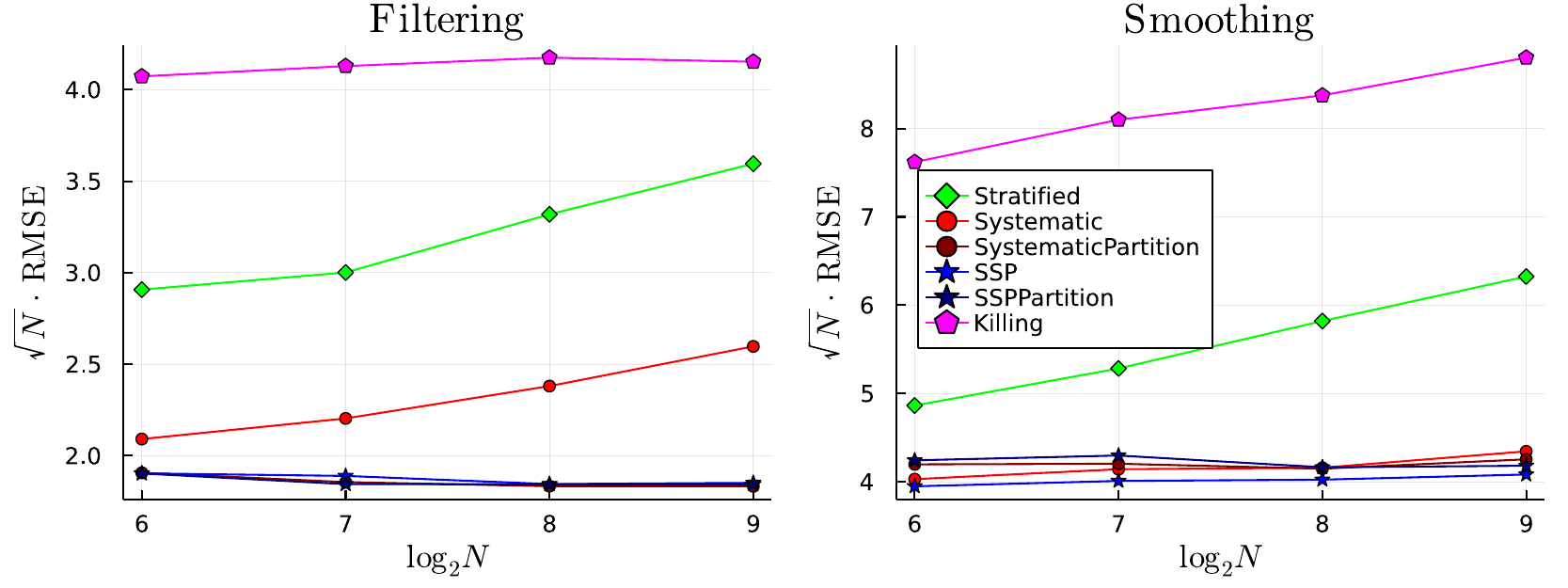}
  \caption{Filtering and smoothing estimates with $\Delta = 2^{-12}$ and varying $N$ in the OU example.}
  \label{fig:ou-filtering-smoothing-varying-N}
\end{figure}
Figure \ref{fig:ou-filtering-smoothing-varying-N} shows filtering and smoothing estimate MSEs scaled with $\sqrt{N}$ similar to Figure \ref{fig:ou-norm-const-varying-N}. Again, Killing, SSP, SSPPartition and SystematicPartition seem stable in the case of filtering, but smoothing with Killing has not stabilised yet.

\subsection{Comparison with adaptive resampling}

Adaptive resampling \cite{liu-chen-blind} is a commonly used method with particle filters, where resampling is performed only when so-called effective sample size of the weights falls below a predefined threshold (fraction of particles). Adaptive resampling is out of the scope of our theoretical framework, but can be useful in practice also in the weak potentials setting, so we compare empirically how adaptive resampling performs in the OU example (Section \ref{sec:ou-example}).

Figure \ref{fig:ou-adaptive-stability} shows the performance with adaptive resampling with $N=512$ particles and threshold $t_{\mathrm{res}}=0.5$, in the filtering and smoothing, similar to Figure \ref{fig:ou-filtering-smoothing-512}. 
\begin{figure}
  \includegraphics[width=.99\linewidth]{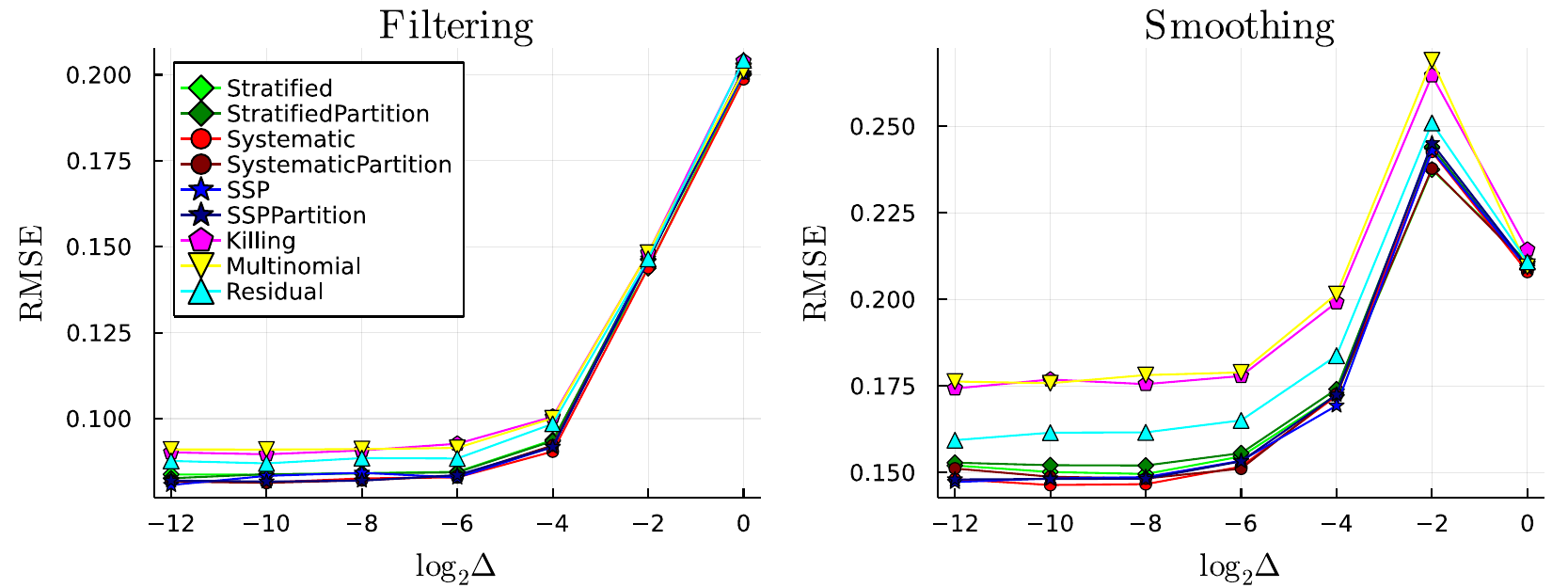}
  \caption{Adaptive resampling with threshold $0.5$ and $N=512$ in the OU example.}
  \label{fig:ou-adaptive-stability}
\end{figure}
Adaptive resampling seems stable with all resamplings, the differences between resamplings are small, and the performance is competitive with the best non-adaptive resamplings. The behaviour with smaller number of particles is qualitatively similar to $N=512$ (results not shown).

The threshold value, which controls how often resampling is triggered, is a tuning parameter of the method. We repeated the experiment with a range of thresholds $t_{\mathrm{res}}\in\{k/8\given k=0,\ldots,8\}\cup \{1-2^{-k}\given k=4,\ldots,9\}$. Figure \ref{fig:ou-adaptive-thresholds} shows a comparison of the results with finest discretisation $\Delta= 2^{-12}$.
\begin{figure}
  \includegraphics[width=.99\linewidth]{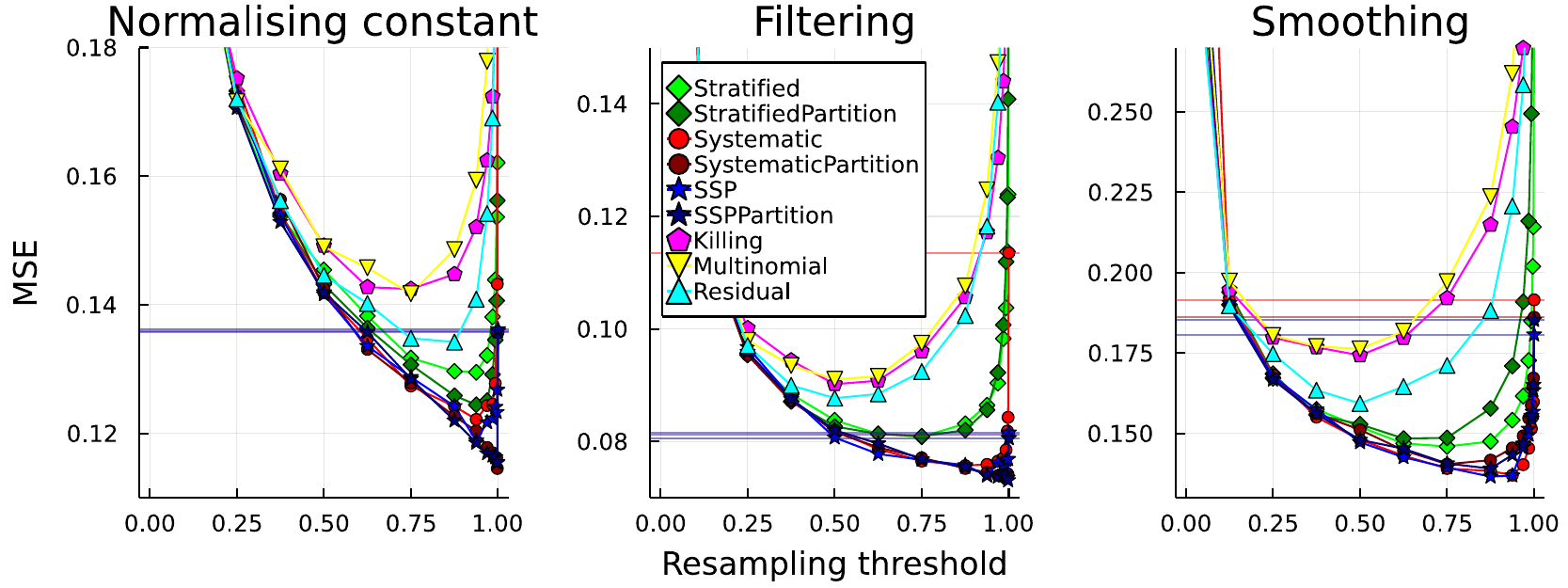}
  \caption{Adaptive resampling with $\log_2\Delta = -12$ and varying threshold in the OU example. The horizontal lines indicate the performance with non-adaptive resmplings ($t_{\mathrm{res}}=1.0$).}
  \label{fig:ou-adaptive-thresholds}
\end{figure}
The differences between resamplings are small with low threshold values, but more noticeable with higher thresholds.
For normalising constant estimation and filtering, adaptive Multinomial resampling does not reach the efficiency of the best non-adaptive schemes. In contrast, adaptive resampling can improve on the smoothing performance, and for instance with $t_{\mathrm{res}}=0.5$ all adaptive resamplings outperform the best non-adaptive resampling.
Interestingly, the optimal threshold value appears to depend on the resampling. For Multinomial, Residual and Killing, the optimal value is close to $0.5$, but for SSPPartition and SystematicPartition, the optimal threshold is closer to one. 

\subsection{Cox process with Particle marginal Metropolis-Hastings}\label{sec:Cox}

In our second example, we consider a Cox process model, that is, an inhomogeneous Poisson process with random intensity. We infer the latent intensity based on event times $0<\tau_1<\cdots<\tau_m<\tau$, leading to the following model:
\begin{equation}
 \Pi(\phi) \defeq \frac{1}{\mathcal{Z}_\M} \E_{\M}\bigg[ \phi\big(Z_{[0,\tau]}\big)
    \exp\bigg(-\int_0^\tau V_u(Z_u) \ud u\bigg) 
    \prod_{i=1}^m V_{\tau_i}(Z_{\tau_i})
    \bigg],
  \label{eq:path-integral-2}
\end{equation}
where $\mathbb{M}$ stands for the law of a reflected
Brownian motion on $[a,b]$, and the potential $V_u(z) = \beta e^{-\alpha z}$.

We approximate the reflected Brownian motion with the discrete-time dynamics $M_1 =
N(0,1)$ and $M_k(\cdot\mid X_{k-1})$ for $k\ge 1$:
$$
    \hat{X}_k \sim N(X_{k-1}, \Delta_k \sigma^2); \qquad X_k = \mathrm{reflect}(\hat{X}_k; a, b),
$$
where $\Delta_k = t_k - t_{k-1}$ is the time difference between $X_{k-1}$, and $X_k$ and $\mathrm{reflect}$ implements a folding back to $[a,b]$.

We consider synthetic data $\tau=(\tau_1,\ldots,\tau_n)$ generated from the model, where $Z_t$ is the c\`adl\`ag extension of the skeleton $Z_{t_k}=X_k$ on $[0,T]$. We use the constant step size $\Delta_k=\Delta=0.01$ and the parameter values $\sigma=0.3$, $\alpha=1.0$, $\beta =0.5$ and $T=200$ in the simulation.

We then use the particle marginal Metropolis-Hastings (PMMH)
\cite{andrieu-doucet-holenstein} to do posterior inference with independent
$N(0,2.5)$ prior for all log-transformed parameters $\theta = (\log \sigma, \log
\alpha, \log \beta )$. The discretisation mesh is a uniform grid as in the data
generation, augmented with the data points. The potentials are defined as
follows:
$$
   \log G_k(x) = - \Delta_{k+1} \beta \exp(-\alpha X_k) + 1(t_k \in \tau) (\log(\beta ) - \alpha X_k)
$$
That is, the latter part is only included in case of data point is observed at
$t_k$. The initial value of the PMMH is set to $\theta_0 = (0,0,0)$. We use the
continuous covariance adaptation scheme of \cite{haario-saksman-tamminen} within
PMMH \cite[cf.][]{vihola-amcmc} during the entire simulation of 500,000
iterations, with 50,000 taken as as burn-in. We repeat the experiment with
$N\in\{16,32,64,128,256\}$ particles and the same range of resampling algorithms
as in the previous experiment.

\begin{figure}
  \includegraphics[width=.99\linewidth]{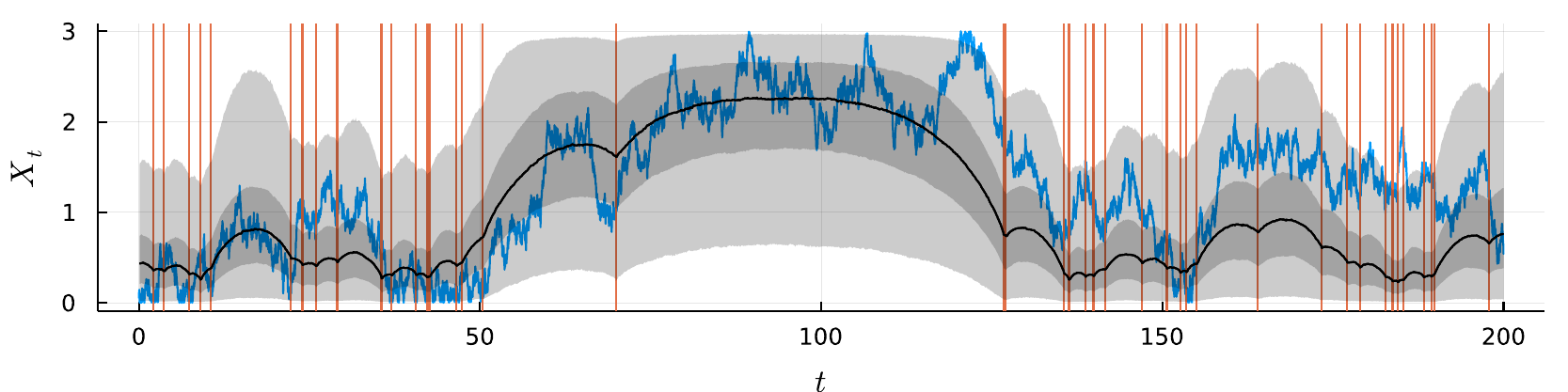}
  \caption{Generated latent state (blue) and observed times $y$ (red) in the Cox process experiment, as well as the posterior smoothing 50\% and 90\% credible intervals with PMMH using SystematicPartition and 32 particles.}
  \label{fig:luminosity-data-result}
\end{figure}
Figure \ref{fig:luminosity-data-result} shows the data in the experiment, and
illustrates the inference outcome for the latent state. It is intuitive that
there is substantial uncertainty in longer intervals with no observations. In
these intervals, the potentials are weak, and so the resampling strategy is
expected to have an impact in the efficiency.

\begin{figure}
  \includegraphics[width=.99\linewidth]{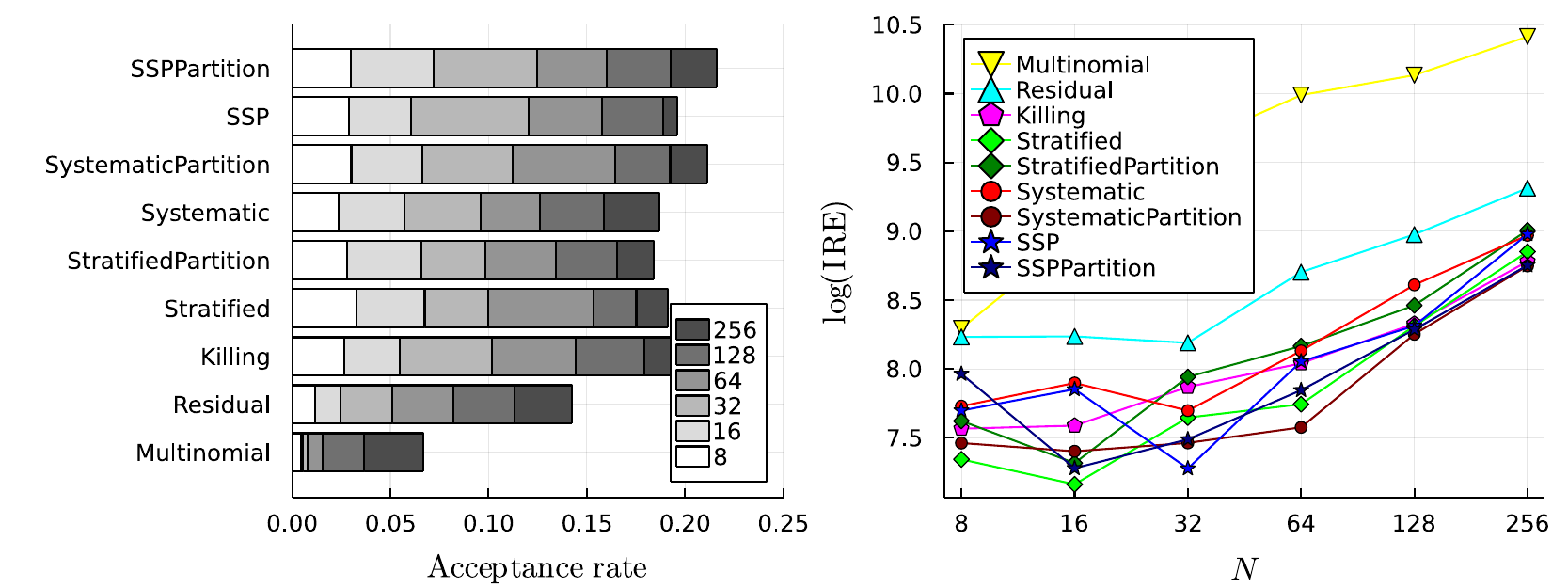}
  \caption{Acceptance rate (left) and mean inverse relative efficiency (right) of PMMH in the Cox process experiment.}
  \label{fig:luminosity-acc-ire}
\end{figure}
Figure \ref{fig:luminosity-acc-ire} (left) shows the PMMH acceptance rate in the
different scenarios. The same group of SSP, SSPPartition and SystematicPartition
attains the highest rates, and with multinomial and residual resampling, the
acceptance remains notably lower. To attain a 10\% acceptance rate, residual
resampling needs 128 particles in contrast with 32 particles for the best
resampling schemes.

Figure \ref{fig:luminosity-acc-ire} (right) illustrates the mean inverse
relative efficiencies (IREs) \citep{glynn-whitt}, that is, mean asymptotic
variances of the standardised log-transformed parameters, multiplied by number
of particles. The asymptotic variances are calculated by batch means
\cite{flegal-jones}, and standardisation is based on mean and variance estimates
calculated from all outputs. The results are in line with earlier findings, but
suggest that a low number of particles (even as low as 8) might be optimal in
some cases. However, this might well be anomaly due to underestimation of
asymptotic variance, which is supported by inspection of autocorrelation plots
of the first parameter ($\log \alpha$) shown in Figure
\ref{fig:luminosity-acf}.  Note that the lags are chosen inversely proportional to $N$ to account for varying cost per iteration.
\begin{figure}
  \includegraphics[width=.99\linewidth]{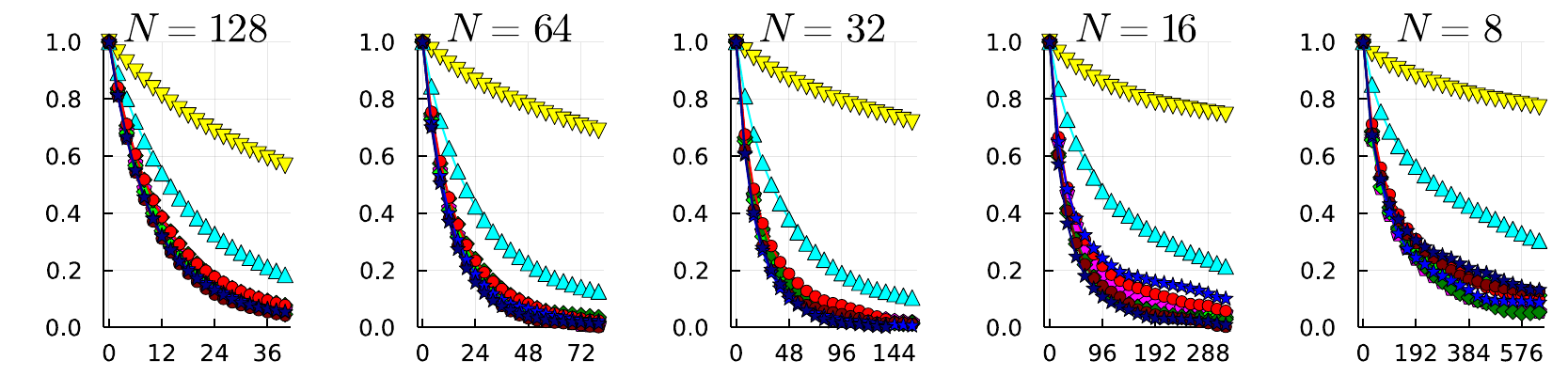}
  \caption{Autocorrelation functions of $\log \alpha$ in the Cox process experiment.}
  \label{fig:luminosity-acf}
\end{figure}

\section{Discussion}

We investigated the effect of resampling methods in a particle filter targeting
a HMM with uninformative observations, by considering discretisations of
continuous-time path integral models. We introduced a general condition for
discrete-time resampling schemes which guarantees convergence to a
non-degenerate particle system the continuous-time limit. We are unaware of
earlier results establishing continuous-time limits of particle filters with
different resampling strategies.

Resampling methods which satisfy our condition are `safe' to use with weakly
informative observations/potentials. We introduced modified versions of
stratified/systematic/SSP resampling, which are shown to satisfy the condition.
The modified strategies add a simple (and computationally cheap) algorithmic
step to the resampling schemes, which orders the weights about their mean value.
The modified algorithms lend themselves to a theoretical analysis, which reveals
that systematic and SSP resampling schemes yield the smallest overall resampling
rate, and may therefore be preferable.

Our empirical results complement our theoretical findings: systematic and SSP
resampling with mean ordering had the best performance in all experiments.
Because of the appealing theoretical properties of SSP resampling
\citep[cf.][]{gerber-chopin-whiteley}, it can be recommended also in the weakly
informative regime. However, the systematic resampling may remain preferable in
some settings, because of its slightly lower computational cost. Interestingly,
the mean partition order, which was necessary for theoretical analysis, appears
to improve the performance of systematic resampling as well. Based on our findings, we
recommend that systematic resampling is always used together with the mean
partition ordering of the weights. SSP resampling appears to perform well
also without such pre-ordering.

Adaptive resampling \citep{liu-chen-blind} and further refinements, such as partial interaction 
schemes \citep{whiteley-lee-heine}, can also be useful in the weakly informative setting, 
but are out of the scope of our theoretical framework. Our empirical comparison suggests that
adaptive resampling can further improve performance of the studied resampling algorithms. 
However, optimal choice of threshold is non-trivial, as it seems to depend on the resampling scheme.

\begin{appendix}

\section{Proofs for Section \ref{sec:resampling-schemes}}
\label{app:resampling-proofs} 

We first establish results for weights that are mean partitioned and nearly constant.
\begin{assumption}
  \label{a:nearly-constant-partitioned-weights}
  Let $w^{\ind{1:N}}$ be normalised weights and write $w^{\ind{i}} = \frac{1 + \epsilon^{\ind{i}}}{N}$ where $\epsilon^{\ind{i}} = Nw^{\ind{i}}-1$. Suppose that $\sum_i |\epsilon^{\ind{i}}| < 2$ and that there exists $m\in[N]$ such that $\epsilon^{\ind{1}},\ldots, \epsilon^{\ind{m}} \le 0$ and $\epsilon^{\ind{m+1}},\ldots, \epsilon^{\ind{N}} > 0$.
\end{assumption}

In what follows, under Assumption \ref{a:nearly-constant-partitioned-weights},
we denote $c^{\ind{0}} = 0$ and $c^{\ind{i}} \defeq -\sum_{j=1}^i
\epsilon^{\ind{j}}$ for $i\in[N]$. Then we may write the distribution function
corresponding to $w^{\ind{1:N}}$ as follows:
$$
   F(i) = \sum_{j=1}^i w^{\ind{j}} = \frac{1}{N}\bigg(i - c^\ind{i}\bigg)\qquad\text{for}\qquad i=0,\ldots,N.
$$
\begin{lemma}
  \label{lem:lookup-characterisation}
  Under Assumption \ref{a:nearly-constant-partitioned-weights}:
  \begin{enumerate}[(i)]
    \item \label{item:monotonicity}
  $c^{\ind{1}} \le \cdots \le c^{\ind{m}}$, $c^{\ind{m}} > \cdots > c^{\ind{N}}$ and $c^{\ind{i}} \in [0,1)$ for $i\in[m]$.
  \item \label{item:lookupcharacterisation}
  For $u\in(0,1)$ and $\check{u}^{\ind{i}} \defeq (i-1+u)/N$, the following hold:
  \begin{align*}
    F(i-1) < \check{u}^i \le F(i) &\iff
    u \le 1 - c^{\ind{i}} \\
    F(i) < \check{u}^{i} \le F(i+1) &\iff
    u > 1 - c^{\ind{i}}.
  \end{align*}
\end{enumerate}
\end{lemma}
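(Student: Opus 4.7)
My plan is to verify both claims by direct algebraic manipulation, exploiting the sign structure of the $\epsilon^{\ind{j}}$'s imposed by the mean partition together with the identity $\sum_{j=1}^{N} \epsilon^{\ind{j}} = 0$, which follows from $\sum_{j} w^{\ind{j}} = 1$.

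For part (i), I would observe that $c^{\ind{i}} - c^{\ind{i-1}} = -\epsilon^{\ind{i}}$, so the sequence is non-decreasing on $[m]$ (where $\epsilon^{\ind{i}} \le 0$) and strictly decreasing on $\{m,\ldots,N\}$ (where $\epsilon^{\ind{i}} > 0$). Since $c^{\ind{i}} = \sum_{j=1}^{i} |\epsilon^{\ind{j}}|$ for $i \in [m]$, nonnegativity is immediate, and the bound $c^{\ind{i}} < 1$ follows by splitting $\sum_{j} \epsilon^{\ind{j}} = 0$ into its negative and positive halves: both halves coincide in absolute value, so each equals $\tfrac{1}{2}\sum_{j} |\epsilon^{\ind{j}}| < 1$ under Assumption \ref{a:nearly-constant-partitioned-weights}. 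In particular $c^{\ind{m}} < 1$, and combined with monotonicity this yields $c^{\ind{i}} \in [0,1)$ for all $i \in [m]$. Since $c^{\ind{N}} = 0$ and $c^{\ind{i}}$ is strictly decreasing on $\{m,\ldots,N\}$, the conclusion $c^{\ind{i}} \in [0,1)$ in fact extends to every $i \in \{0,\ldots,N\}$; this strengthening will be used in part (ii).

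For part (ii), I would substitute $\check{u}^{\ind{i}} = (i-1+u)/N$ and $F(j) = (j - c^{\ind{j}})/N$ into the inequalities and clear denominators. The rightmost inequality $\check{u}^{\ind{i}} \le F(i)$ becomes $u \le 1 - c^{\ind{i}}$, while the leftmost inequality $F(i-1) < \check{u}^{\ind{i}}$ becomes $u > -c^{\ind{i-1}}$; by the strengthening from part (i), $c^{\ind{i-1}} \ge 0$, so this is automatic from $u > 0$. This yields the first equivalence. Symmetrically, $F(i) < \check{u}^{\ind{i}}$ is equivalent to $u > 1 - c^{\ind{i}}$, and $\check{u}^{\ind{i}} \le F(i+1)$ is equivalent to $u \le 2 - c^{\ind{i+1}}$, which is automatic from $u < 1$ together with $c^{\ind{i+1}} < 1$. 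This yields the second equivalence.

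I expect no serious obstacle. The only genuinely substantive step is the bound $c^{\ind{i}} < 1$, which crucially relies on the mean-partition structure arranging the negative and positive $\epsilon^{\ind{j}}$'s into two blocks so that each block contributes exactly half of the total $L^{1}$ mass, combined with the hypothesis $\sum_{j}|\epsilon^{\ind{j}}| < 2$. Everything else is bookkeeping of linear inequalities.
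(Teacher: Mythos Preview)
Your proposal is correct and follows essentially the same approach as the paper: the key observation that $\sum_i (\epsilon^{\ind{i}})_+ = \sum_i (-\epsilon^{\ind{i}})_+ < 1$ (from $\sum_i \epsilon^{\ind{i}} = 0$ and $\sum_i |\epsilon^{\ind{i}}| < 2$) is exactly what the paper uses, and deriving (ii) from $c^{\ind{i}} \in [0,1)$ by clearing denominators is precisely what the paper's terse ``direct consequence'' hides. Your explicit strengthening of $c^{\ind{i}} \in [0,1)$ to all $i \in \{0,\ldots,N\}$ is a useful clarification, since the second equivalence in (ii) requires $c^{\ind{i+1}} < 1$ for indices $i \ge m$ as well.
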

\begin{proof}
Because $\sum_i |\epsilon^{\ind{i}}| = 
\sum_i (\epsilon^{\ind{i}})_+ + \sum_i (-\epsilon^{\ind{i}})_+ < 2$,
and $\sum_i \epsilon^{\ind{i}} = 0$ so
$\sum_i (\epsilon^{\ind{i}})_+ = \sum_i (-\epsilon^{\ind{i}})_+ < 1$, from which \eqref{item:monotonicity} follows, and \eqref{item:lookupcharacterisation} is a direct consequence of $c^{i}\in[0,1)$.
\end{proof}

\begin{lemma}
  \label{lem:stratified-probability}
Let $A^{\ind{1:N}}$ be indices from stratified resampling (Definition \ref{def:stratified}). If Assumption \ref{a:nearly-constant-partitioned-weights} holds, then
$A^{\ind{i}}\in\{i,i+1\}$ for all $i\in[N]$ and for any $K\subset [N-1]$ and $S=[N]\setminus K$,
\[
  \P(A^{\ind{i}}=i,\, A^{\ind{j}}=j+1\, \text{for all $i\in S$ and $j\in K$}) 
  = \bigg(\prod_{j\in S} (1-c^{\ind{j}})\bigg) \bigg(\prod_{i\in K} c^{\ind{i}}\bigg).
\]
\end{lemma}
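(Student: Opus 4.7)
The plan is to reduce everything to the characterisation already provided in Lemma \ref{lem:lookup-characterisation}(ii), then exploit the mutual independence of the uniforms $U^{\ind{1:N}}$ used in stratified resampling. First I would observe that Lemma \ref{lem:lookup-characterisation}(ii) identifies, for each $i$, exactly which event in terms of $U^{\ind{i}}$ produces $A^{\ind{i}} = i$ versus $A^{\ind{i}} = i+1$: on the stratum $\check{U}^{\ind{i}}\in((i-1)/N,i/N)$, only the neighbouring cells $[F(i-1),F(i)]$ and $[F(i),F(i+1)]$ can contain $\check{U}^{\ind{i}}$. Consequently $A^{\ind{i}}\in\{i,i+1\}$, with
\[
  \{A^{\ind{i}} = i\} = \{U^{\ind{i}} \le 1 - c^{\ind{i}}\}, \qquad \{A^{\ind{i}} = i+1\} = \{U^{\ind{i}} > 1 - c^{\ind{i}}\}.
\]
A small but necessary observation is that $c^{\ind{N}}=0$ (since $\sum_j\epsilon^{\ind{j}}=0$), so for the edge case $i=N$ the second event has probability $0$, confirming that $A^{\ind{N}}=N$ almost surely and justifying why $K$ need only range over subsets of $[N-1]$.

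Next, because the $U^{\ind{1}},\ldots,U^{\ind{N}}$ are independent $U(0,1)$ random variables, the joint event specified by a subset $K\subset[N-1]$ and its complement $S=[N]\setminus K$ factorises:
\[
  \P\big(A^{\ind{i}}=i\text{ for }i\in S,\ A^{\ind{j}}=j+1\text{ for }j\in K\big)
  = \prod_{i\in S}\P(U^{\ind{i}} \le 1-c^{\ind{i}}) \prod_{j\in K}\P(U^{\ind{j}} > 1-c^{\ind{j}}).
\]
By Lemma \ref{lem:lookup-characterisation}(i), $c^{\ind{i}}\in[0,1)$ for every $i$, so each single-index probability equals $1-c^{\ind{i}}$ or $c^{\ind{j}}$ respectively. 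Assembling these factors yields the claimed product $\prod_{j\in S}(1-c^{\ind{j}})\prod_{i\in K} c^{\ind{i}}$, and the factor corresponding to $i=N\in S$ is automatically $1$.

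There is no serious obstacle: the statement is essentially an unpacking of Lemma \ref{lem:lookup-characterisation}(ii) using independence of the stratum uniforms. The only subtleties worth flagging in the write-up are (a) verifying that $A^{\ind{i}}$ is confined to $\{i,i+1\}$ rather than some larger neighbourhood (which is exactly what part (ii) of the preceding lemma ensures under Assumption \ref{a:nearly-constant-partitioned-weights}), and (b) the boundary behaviour at $i=N$, where $c^{\ind{N}}=0$ forces $A^{\ind{N}}=N$.
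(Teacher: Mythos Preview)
Your proposal is correct and follows essentially the same route as the paper: invoke Lemma~\ref{lem:lookup-characterisation}(ii) to reduce each event $\{A^{\ind{i}}=i\}$ or $\{A^{\ind{i}}=i+1\}$ to a condition on the independent uniform $U^{\ind{i}}$, then factorise by independence. The only nuance is that Lemma~\ref{lem:lookup-characterisation}(i) states $c^{\ind{i}}\in[0,1)$ explicitly only for $i\in[m]$; for $i>m$ you should note it follows from the monotonicity $c^{\ind{m}}>c^{\ind{m+1}}>\cdots>c^{\ind{N}}=0$ also recorded there.
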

\begin{proof}
  Because the $\check{U_i}$'s are independent, we may write the probability of interest as
\[
  \bigg(\prod_{j\in S} \P\big(F(j-1) < \check{U}^{\ind{j}} \le F(j) \big) \bigg) \bigg(\prod_{i\in K} \P\big(F(i) < \check{U}^{\ind{i}} \le F(i+1) \big)\bigg),
\]
from which the result follows by Lemma \ref{lem:lookup-characterisation}.
\end{proof}

\begin{lemma}
  \label{lem:f-k-asymptotic-weights}
  Let $v^{\ind{1:N}}\ge 0$.
  The normalised weights $w_\Delta^{\ind{1:N}}$ corresponding to unnormalised weights $g_\Delta^{\ind{i}} = \exp(-\Delta v^{\ind{i}})$ may be written as
$$
   w_\Delta^{\ind{i}} 
   = \frac{1 + \epsilon^{\ind{i}}_\Delta}{N},\qquad\text{where}\qquad 
   \epsilon^{\ind{i}}_\Delta = \Delta (\bar{v} - v^{\ind{i}}) + r_\Delta^{\ind{i}}
$$
where $\bar{v} = N^{-1} \sum_{i=1}^N v^{\ind{i}}$ stands for the mean potential and the error terms $r_\Delta^{\ind{i}} = o(\Delta)$ and satisfy $|r_\Delta^{\ind{i}}| \le c \Delta$ for all $\Delta\in(0,1)$, where the constant $c$ depends only on $N$ and $\max_i v^{\ind{i}}$. Consequently,
$$
c_\Delta^{\ind{i}} \defeq - \sum_{j=1}^i \epsilon_\Delta^{\ind{j}} = \Delta \sum_{j=1}^i (v^{\ind{j}}- \bar{v} ) + \tilde{r}_\Delta^{\ind{i}},\qquad \tilde{r}_\Delta^{\ind{i}} = o(\Delta),\qquad 
|\tilde{r}_\Delta^{\ind{i}}| \le \tilde{c} \Delta.
$$
\end{lemma}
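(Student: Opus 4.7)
The plan is to perform a direct Taylor expansion of the unnormalised weights $g_\Delta^{\ind{i}} = e^{-\Delta v^{\ind{i}}}$ with an explicit second-order remainder and then carry out the division by the normalising sum. Write $v^* \defeq \max_{i\in[N]} v^{\ind{i}}$. The elementary input is the quadratic bound
$$
\big|e^{-x} - (1 - x)\big| \le \tfrac{1}{2} x^2, \qquad x \ge 0,
$$
which is immediate from Taylor's theorem with Lagrange remainder and $e^{-\xi} \le 1$ on $\xi \ge 0$. Applied with $x = \Delta v^{\ind{i}}$, it yields
$$
g_\Delta^{\ind{i}} = 1 - \Delta v^{\ind{i}} + q_\Delta^{\ind{i}}, \qquad |q_\Delta^{\ind{i}}| \le \tfrac{1}{2}\Delta^2 (v^*)^2,
$$
and summing over $i$ gives $S_\Delta \defeq \sum_{j=1}^N g_\Delta^{\ind{j}} = N(1 - \Delta \bar{v}) + Q_\Delta$ with $|Q_\Delta| \le \tfrac{N}{2}\Delta^2 (v^*)^2$.

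Next, I would use the uniform lower bound $S_\Delta \ge N e^{-v^*}$ (since $g_\Delta^{\ind{j}} \ge e^{-v^*}$ for every $\Delta \in (0,1]$) to reduce the division $N w_\Delta^{\ind{i}} = N g_\Delta^{\ind{i}}/S_\Delta$ to the identity
$$
N w_\Delta^{\ind{i}} - 1 - \Delta(\bar{v} - v^{\ind{i}})
= \frac{N g_\Delta^{\ind{i}} - S_\Delta - \Delta(\bar{v} - v^{\ind{i}}) S_\Delta}{S_\Delta}.
$$
Substituting the two expansions above, the zeroth- and first-order terms of the numerator cancel exactly---this is the algebraic mechanism fixing the leading coefficient as $\bar{v} - v^{\ind{i}}$---leaving
$$
N\Delta^2 \bar{v}(\bar{v} - v^{\ind{i}}) + N q_\Delta^{\ind{i}} - Q_\Delta - \Delta(\bar{v} - v^{\ind{i}}) Q_\Delta.
$$
Using $|\bar{v} - v^{\ind{i}}| \le v^*$ together with the two quadratic bounds on $q_\Delta^{\ind{i}}$ and $Q_\Delta$, this numerator is bounded in absolute value by $C(N,v^*)\Delta^2$ uniformly in $\Delta \in (0,1)$. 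Dividing by the lower bound on $S_\Delta$ gives the claimed representation, with $|r_\Delta^{\ind{i}}| \le c\Delta$ for some $c = c(N,v^*)$ and $r_\Delta^{\ind{i}} = O(\Delta^2) = o(\Delta)$.

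The stated consequence for $c_\Delta^{\ind{i}}$ is then immediate by summing: set $\tilde{r}_\Delta^{\ind{i}} \defeq -\sum_{j=1}^i r_\Delta^{\ind{j}}$ so that $|\tilde{r}_\Delta^{\ind{i}}| \le N c \Delta$ and $\tilde{r}_\Delta^{\ind{i}} = o(\Delta)$, while the explicit term aggregates to $\Delta \sum_{j=1}^i (v^{\ind{j}} - \bar{v})$. There is no genuine obstacle in this argument; the lemma is a quantitative reformulation of a standard first-order Taylor expansion, and the only care required is to keep the constants depending on $N$ and $v^*$ rather than on the individual $v^{\ind{i}}$, which is enforced by replacing $v^{\ind{i}}$ by $v^*$ in every remainder estimate.
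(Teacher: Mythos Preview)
Your proof is correct and follows essentially the same approach as the paper's, which merely notes that the derivative of $\epsilon_\Delta^{\ind{i}}$ at $\Delta=0$ equals $\bar v - v^{\ind{i}}$ and then defers the remainder estimates to ``a Taylor expansion for the exponential function.'' You have carried out that Taylor expansion explicitly and tracked the constants, obtaining in fact the slightly stronger $r_\Delta^{\ind{i}}=O(\Delta^2)$; this is exactly what the paper's sketch intends.
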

\begin{proof}
Direct calculation for $\epsilon_\Delta^{\ind{i}} = Ng_\Delta^{\ind{i}}/\sum_{j=1}^N g_\Delta^{\ind{j}} - 1$ yields that 
$$
\lim_{\Delta\to 0}\frac{d}{d\Delta}
\epsilon^{\ind{i}}_\Delta = \frac{1}{N} \sum_{j=1}^N v^{\ind{j}} - v^{\ind{i}} = \bar{v} - v^{\ind{i}},
$$
and properties of the error term 
can be verified, for instance, by using a Taylor expansion for the exponential function.
\end{proof}

\begin{proof}[Proof of Proposition \ref{prop:stratified-generator}]
Suppose first that $-v^{\ind{1:N}}$ are mean ordered, and that $\Delta$ is sufficiently small so that also $w_\Delta^{\ind{1:N}} \propto \exp(-\Delta v^{\ind{1:N}})$ satisfy Assumption \ref{a:nearly-constant-partitioned-weights}. Lemma \ref{lem:f-k-asymptotic-weights} together with Lemma \ref{lem:stratified-probability} give 
\[
  \lim_{\Delta \to 0+} \frac{1}{\Delta} r\big(a^{\ind{1:N}} \mid \exp(-\Delta v^{\ind{1:N}})\big) = \begin{cases}
    \sum_{j=1}^i (v^{\ind{j}} - \bar{v}), 
    & a^{\ind{i}} = i+1$ and $a^{\ind{\neg i}} = \neg i. \\
    0, & \text{otherwise.}
  \end{cases}
\]
The corresponding overall resampling rate is therefore
\[
  \sum_{i=1}^N \sum_{j=1}^i (v^{\ind{j}} - \bar{v})
  = \sum_{j=1}^N (N + 1 - j)(v^{\ind{j}} - \bar{v})
  = \sum_{j=1}^N j (\bar{v} - v^{\ind{j}}).
\]
The claim follows from this result applied to re-indexed $v_\varpi$ and $a_\varpi$.
\end{proof}

\begin{lemma}
  \label{lem:systematic-probability}
  Let $A^{\ind{1:N}}$ be indices from systematic resampling (Definition \ref{def:systematic}). If Assumption \ref{a:nearly-constant-partitioned-weights} holds, then for any $k\in[m]$ and $\ell \in [N] \setminus [m]$, 
  \begin{align*}
    &\P\big(A^{\ind{i}}=i\text{ for }i<k\text{ and } i\ge \ell,\; A^{\ind{j}}=j+1\text{ for } k\le j < \ell\big) \\
    &= \big(\min\{ c^{\ind{k}}, c^{\ind{\ell-1}}\} - \max\{ c^{\ind{k-1}}, c^{\ind{\ell}}\}\big)_+,
  \end{align*}
  and these events are the only possible in addition to the `no resampling' event, for which
  \begin{align*}
    &\P(A^{\ind{1:N}}=1{:}N) = 1 - c^{\ind{m}}. 
  \end{align*}
\end{lemma}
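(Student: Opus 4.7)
The approach is to exploit the fact that systematic resampling uses a single uniform variate $U$, so that each stratum-level event translates directly into an inequality on $U$ via Lemma \ref{lem:lookup-characterisation}(ii). For every $i \in [N-1]$, $\{A^{\ind{i}} = i\} = \{U \le 1 - c^{\ind{i}}\}$ and $\{A^{\ind{i}} = i+1\} = \{U > 1 - c^{\ind{i}}\}$, while $A^{\ind{N}} = N$ holds deterministically since $c^{\ind{N}} = 0$. For any non-empty $K \subseteq [N-1]$ with complement $S = [N]\setminus K$, the joint event is therefore $\{1 - \min_{j \in K} c^{\ind{j}} < U \le 1 - \max_{i \in S} c^{\ind{i}}\}$, with probability $(\min_{j \in K} c^{\ind{j}} - \max_{i \in S} c^{\ind{i}})_+$; the case $K = \emptyset$ is $\{U \le 1 - \max_i c^{\ind{i}}\}$, to be handled in the next paragraph.

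The next step is to determine which $K$ yield positive probability. A positive length requires that every $c$-value on $K$ strictly exceeds every $c$-value on $S$, i.e., $K$ is a strict super-level set of $c$ restricted to $[N-1]$. By Lemma \ref{lem:lookup-characterisation}(i), the sequence $c^{\ind{0}}, \ldots, c^{\ind{N}}$ is non-decreasing on $\{0,\ldots,m\}$ and strictly decreasing on $\{m,\ldots,N\}$, with $c^{\ind{0}} = c^{\ind{N}} = 0$ and unique peak at $m$. Consequently, any super-level set is either empty (when the threshold is at least $c^{\ind{m}}$) or a contiguous interval $\{k, k+1, \ldots, \ell-1\}$ straddling the peak, i.e., with $k \le m < \ell$. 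The empty case gives $U \le 1 - c^{\ind{m}}$, and so $\P(A^{\ind{1{:}N}} = 1{:}N) = 1 - c^{\ind{m}}$, as claimed.

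For a contiguous $K = \{k, \ldots, \ell-1\}$ with $k \le m < \ell$, unimodality makes both extrema easy to read off: the minimum of $c$ on $K$ is attained at one of the endpoints (since $c$ first increases to $c^{\ind{m}}$ and then strictly decreases), yielding $\min_{j \in K} c^{\ind{j}} = \min\{c^{\ind{k}}, c^{\ind{\ell-1}}\}$; and the maximum of $c$ on $S$ is attained at the immediate neighbors of $K$, giving $\max_{i \in S} c^{\ind{i}} = \max\{c^{\ind{k-1}}, c^{\ind{\ell}}\}$ (using $c^{\ind{0}} = 0$ and $c^{\ind{N}} = 0$ to accommodate the boundary cases $k = 1$ and $\ell = N$). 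Substituting these into the interval length yields the displayed formula. The main obstacle lies in the reduction to super-level sets: once one recognizes that positivity forces $K$ to be such a set, the unimodality of $c$ does the rest, and the remaining manipulations are routine.
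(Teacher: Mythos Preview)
Your proof is correct and follows essentially the same approach as the paper's: both translate the stratum-level events into inequalities on the single uniform variate $U$ via Lemma~\ref{lem:lookup-characterisation} and then exploit the unimodality of $c^{\ind{0{:}N}}$ to evaluate the resulting interval. Your framing via strict super-level sets is a slightly more systematic presentation that also makes the ``these events are the only possible'' claim explicit, whereas the paper simply computes the probability for a given $(k,\ell)$ pair directly as the length of the intersection $(1-c^{\ind{k}},1-c^{\ind{k-1}}]\cap(1-c^{\ind{\ell-1}},1-c^{\ind{\ell}}]$ and leaves the exhaustiveness implicit; the underlying argument is the same.
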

\begin{proof}
By Lemma \ref{lem:lookup-characterisation}, the event is equivalent to
\begin{align*}
  &\P\big( U\le 1 - c^{i}
  \text{ for }i<k\text{ and } i\ge \ell,\; U > 1 - c^{j}\text{ for } k\le j < \ell\big) \\
  &=\P\big(U \in (1 - c^{\ind{k}}, 1-c^{\ind{k-1}}], U \in (1 - c^{\ind{\ell-1}}, 1 -c^{\ind{\ell}}]\big),
\end{align*}
thanks to the monotonicity properties of $c^{\ind{1:m}}$ and $c^{\ind{m+1:N}}$. The latter follows similarly because $c^{\ind{m}} = \max_{i\in [N]} c^{\ind{i}}$.
\end{proof}

\begin{proof}[Proof of Proposition \ref{prop:systematic-generator}]
Suppose first that $-v^{\ind{1:N}}$ are mean ordered. Lemma \ref{lem:f-k-asymptotic-weights} with
Lemma \ref{lem:systematic-probability} yield 
\[
  \lim_{\Delta \to 0+} \frac{1}{\Delta} r\big(a^{\ind{1:N}} \mid \exp(-\Delta v^{\ind{1:N}})\big) = \begin{cases}
    \big(\min\{ s^{\ind{k}}, s^{\ind{\ell-1}} \} - \max\{ s^{\ind{k-1}}, s^{\ind{\ell}}\}\big)_+
    & a^{\ind{1:N}} = [k\to \ell]_N, \\
    0, & \text{otherwise},
  \end{cases}
\]
where $s^{\ind{0}}=0$ and $s^{\ind{i}} = \sum_{j=1}^i (v^{j} - \bar{v})$. The overall resampling rate is
$$
\sum_{k=1}^m \sum_{\ell=m+1}^N 
\big(\min\{ s^{\ind{k}}, s^{\ind{\ell-1}} \} - \max\{ s^{\ind{k-1}}, s^{\ind{\ell}}\}\big)_+ = \sum_{k=1}^m (s^{\ind{k}} - s^{\ind{k-1}}) = s^{\ind{m}},
$$
because $s^{\ind{0:m}}$ is increasing and $s^{\ind{m:N}}$ is decreasing. The result follows by re-indexing wrt.~$\varpi$.
\end{proof}

\begin{proposition}
  \label{prop:ssp-probabilities}
Suppose that the normalised weights $w^{\ind{1:N}}$ satisfy Assumption \ref{a:nearly-constant-partitioned-weights} and $\sum_{i=1}^N |\epsilon_i| < 1$. Then, for the SSP resampling with identity order $\varpi$, the only events with non-zero probability in addition to $1{:}N$ are of the form $A^{\ind{1:N}} = [k\to\ell]_N$, with probabilities:
$$
\P(A^{\ind{1:N}} = [k\to \ell]_N) =
\frac{(-\epsilon^{\ind{k}})_+ (\epsilon^{\ind{\ell}})_+}{\sum_{i=1}^N (\epsilon^{\ind{i}})_+},
\qquad \text{for }k,\ell\in[N].
$$
\end{proposition}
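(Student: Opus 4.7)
My plan is to analyze Algorithm \ref{alg:ssp} with the identity order $\varpi$ by combining a deterministic observation about the sequence of ``carrier'' $p$-values with a Markov-chain analysis of the carrier's identity. Under the hypothesis, the initial pairs are $(r^{\ind{i}}, p^{\ind{i}}) = (0, 1+\epsilon^{\ind{i}})$ for $i \in [m]$ and $(1, \epsilon^{\ind{i}})$ for $i > m$. The first key observation is that after each iteration $k$ the particle that continues into the next iteration (what I will call the \emph{carrier}) has $p$-value $\{S_k\}$, the fractional part of $S_k \defeq \sum_{j=1}^k p^{\ind{j}}$: in either branch, the updated residual is $p^{\ind{i}} + p^{\ind{j}}$ (if) or $p^{\ind{i}}+p^{\ind{j}}-1$ (else), independently of which of the two particles is swapped into which slot. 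Equivalently, step $k$ is an else branch iff $\lfloor S_k\rfloor > \lfloor S_{k-1}\rfloor$. Writing $S = \sum_i (-\epsilon^{\ind{i}})_+ = \sum_i (\epsilon^{\ind{i}})_+$ and using $S < 1/2$ (since $\sum_i |\epsilon^{\ind{i}}| < 1$), I verify that the else branches are exactly the steps $k \in \{2,\ldots,m\}\cup\{N\}$ and the remaining $k \in \{m+1,\ldots,N-1\}$ are if branches.

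This structure already restricts the possible outcomes. Since $r^{\ind{i}}$ is incremented only in else branches, the $m-1$ first-phase else branches bump $m-1$ of the $m$ small-$p$ particles to $r = 1$, and the unique second-phase else (at $k = N$) involves either the remaining small-$p$ or a large-$p$. Since a particle is never revisited once settled, the final state has at most one small-$p$ with $r = 0$ and at most one large-$p$ with $r = 2$. Consequently $A^{\ind{1:N}}$ must equal either $1{:}N$ or $[k\to \ell]_N$ for some $k \in [m]$, $\ell > m$.

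For the probability formula, I track the carrier's identity as a Markov chain driven by the swap decisions. A direct induction on the first-phase else swap probabilities, which telescope in the partial sums $\sum_{j=1}^k (-\epsilon^{\ind{j}})$, shows that the (necessarily small-$p$) carrier at the start of the second phase equals $k$ with probability $(-\epsilon^{\ind{k}})/S$ for $k \in [m]$. Conditioning on this carrier $k$, the event $\{\mathrm{kill}=k\}$ decomposes by the step at which $k$ dies: either a no-swap-if absorption at some intermediate step $j \in (m,N)$, or a no-swap-else at step $N$ (final-carrier kill). The second-phase swap probabilities do not depend on the carrier's identity and, when written in terms of $T_j \defeq \sum_{l=m+1}^j \epsilon^{\ind{l}}$, admit a partial-fraction decomposition that telescopes. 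Carrying out this bookkeeping shows that $\P(\mathrm{dup} = \ell \mid \mathrm{kill} = k) = \epsilon^{\ind{\ell}}$, independently of $k$; multiplying by the marginal $(-\epsilon^{\ind{k}})/S$ gives the claimed formula. I expect the main obstacle to be precisely this second-phase telescoping bookkeeping, since the algebraic cancellation is what ultimately produces the conditional independence of the kill and dup indices despite the intricate per-step dependencies induced by the swap steps.
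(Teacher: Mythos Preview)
Your approach is essentially the same two-phase telescoping argument as the paper's, and it is correct. Two small points of comparison are worth flagging. First, your observation that the carrier's residual after step $k$ is $\{S_k\}$ with $S_k = \sum_{j\le k} p^{\ind{j}}$, hence that the else branches are exactly $k\in\{2,\dots,m\}\cup\{N\}$, is a cleaner way to identify the branch structure than the paper's direct case-by-case verification; both lead to the same conclusion. Second, in phase~2 you decompose by the step at which the phase-1 survivor $k$ is displaced and then invoke a partial-fraction telescoping in the $T_j$'s, whereas the paper computes directly $\P(\text{carrier at step }N=\ell)=\epsilon^{\ind{\ell}}/(1-c^{\ind{N-1}})$ via a single product telescope and multiplies by the step-$N$ probability; your route is slightly longer but equivalent (the partial-fraction identity $\epsilon^{\ind{j}}/[(1-c^{\ind{j-1}})(1-c^{\ind{j}})] = (1-c^{\ind{j-1}})^{-1}-(1-c^{\ind{j}})^{-1}$ does collapse the sum). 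One minor slip: your statement ``$\P(\mathrm{dup}=\ell\mid\mathrm{kill}=k)=\epsilon^{\ind{\ell}}$'' should read $\P(\mathrm{dup}=\ell\mid\text{phase-1 carrier}=k)=\epsilon^{\ind{\ell}}$, since conditioning on the kill event introduces an extra factor $1/S$; this does not affect the final product. The ``no-swap'' versus ``swap'' language is also sensitive to whether the carrier sits in slot $i$ or $j$ at a given step (which alternates with the branch type), so be careful there or adopt the paper's symmetric re-labelling convention to avoid confusion.
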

\begin{proof}
Thanks to Assumption \ref{a:nearly-constant-partitioned-weights}, the initial values of $p^{\ind{1:N}}$ satisfy $p^{\ind{i}} = 1 + \epsilon^{\ind{i}}$ for $i=1,\ldots,m$ and $p^{\ind{i}} = \epsilon^{\ind{j}}$ for $j=m+1,\ldots,N$, and now $c^{\ind{i}} = -\sum_{j=1}^i \epsilon^{\ind{j}}\in[0,1/2)$.

Note that the state of Algorithm \ref{alg:ssp} after lines \ref{line:set_deltas}--\ref{line:interchange_indices} is independent of the order of the indices $(i,j)$ before, so without loss of generality, we may assume that $i<j=k$ always before line \ref{line:set_deltas}.  We may deduce inductively that after line \ref{line:set_deltas} with $k \in \{2{:}m\}$:
\begin{itemize}
  \item $p^{\ind{i}} = 1 - c^{\ind{k-1}}$ and $p^{\ind{j}} = 1 + \epsilon^{\ind{k}}$, and
  \item $p^{\ind{i}}>1/2$ and $p^{\ind{j}} > 1/2$ and therefore $\delta^{\ind{i}} = 1 - p^{\ind{i}} = c^{\ind{k-1}}$ and $\delta^{\ind{j}} = 1-p^{\ind{j}} = -\epsilon^{\ind{k}}$.
\end{itemize}
With probability $\delta^{\ind{j}}/(\delta^{\ind{i}} + \delta^{\ind{j}}) = -\epsilon^{\ind{k}}/c^{\ind{k}}$, the indices at next iteration will be $i=k$ and $j=k+1$, and $r^{1:k-1}$ have all been incremented by one.
The probability to end up with indices $i=k$ and $j=m+1$ after iteration $m$ is $(-\epsilon^{\ind{k}}/c^{\ind{k}})\prod_{i=k+1}^m (c^{\ind{i-1}}/c^{\ind{i}}) = -\epsilon^{\ind{k}}/c^{\ind{m}}$, in which case all $r^{\ind{1:m}}$ have been incremented by one, except for $r^{\ind{k}}$.

Given the above scenario happens, then in the steps $k\in\{(m+1){:}(N-1)\}$ of the algorithm, it is again easy to see inductively that $p^{\ind{i}} + p^{\ind{j}} = 1 - c^{\ind{k}}< 1$ so $\delta^{\ind{i,j}} = p^{\ind{j,i}}$ and that $p^{\ind{i}} = 1 - c^{\ind{j-1}}$.
The probability to end up with $i=\ell$ and $j=N$ after iteration $N-1$ is therefore $\epsilon^{\ind{\ell}}/(1-c^{\ind{\ell}}) \prod_{j=\ell+1}^{N-1} (1-c^{\ind{j-1}})/(1-c^{\ind{j}}) = \epsilon^{\ind{\ell}}/(1-c^{\ind{N-1}})$, in which case in the beginning of the last step, $p^{\ind{i}}+p^{\ind{j}} = 1 - c^{\ind{N}} = 1$. Now, $r^{\ind{\ell}}$ will be incremented by one with probability $(1-c^{\ind{N-1}})$. We conclude the overall probability of outcome $[k\to \ell]_N$, which is equivalent to incrementing $r^{\ind{\ell}}$ and $r^{\ind{1:m}}$ by one except for $r^{\ind{k}}$.
\end{proof}

\begin{proof}[Proof of Theorem \ref{thm:resampling-order}]
  Note that for any $i$ such that $(\bar{v} - v^{\ind{i}})_+ > 0$, that is, $v^{\ind{i}} < \bar{v}$, we have
  $$ 
  (\bar{v} - v^{\ind{i}})_+ = \bar{v} - v^{\ind{i}} \leq \bar{v} - v_{\min}, 
  $$
  and there are of course at most $N - 1$ such $i$, so
  $$ \iota_{\mathrm{systematic}}^*(v^{\ind{1:N}}) \leq \#\{ i\,:\, v^{\ind{i}} < \bar{v}\} (\bar{v} - v_{\min} ) \leq (N-1) (\bar{v} - v_{\min} ) = \iota_{\mathrm{killing}}^*(v^{\ind{1:N}}).$$
  Assuming mean ordered $-v^{\ind{1:N}}$ we may write
  \begin{align*}
    \iota_{\mathrm{stratified}}^*(v^{\ind{1:N}}) - \iota_{\mathrm{systematic}}^*(v^{\ind{1:N}})
    &= \sum_{j=1}^{m} j (\bar{v} - v^{\ind{j}})  + \sum_{j=m+1}^{N} (j - 1) (\bar{v} - v^{\ind{j}}) \\
    &\ge m \sum_{j=1}^{m} (\bar{v} - v^{\ind{j}}) 
    + m \sum_{j=m+1}^{N} (\bar{v} - v^{\ind{j}}) = 0.
    \end{align*}
  
  To see that there cannot be such an order between $\iota^*_{\mathrm{killing}}$ and $\iota^*_{\mathrm{stratified}}$, consider $N = 3$ and strictly decreasing $v^{\ind{1:3}}$. Now,
  $v_{\min} = v^{\ind{3}}$ and 
  \begin{align*}
  \iota_{\mathrm{killing}}^*(v^{\ind{1:N}}) - \iota_{\mathrm{stratified}}^*(v^{\ind{1:N}}) & = 2(\bar{v} - v^{\ind{3}}) - (\bar{v} - v^{\ind{1}}) - 2(\bar{v}  - v^{\ind{2}}) - 3(\bar{v} - v^{\ind{3}}) = v^{\ind{2}} - \bar{v} \\
    & = \frac{2}{3} \Big( v^{\ind{2}} - \frac{v^{\ind{1}}+v^{\ind{3}}}2\Big),
  \end{align*}
  which can be positive or negative depending on $v^{\ind{2}} \in (v^{\ind{3}},v^{\ind{1}})$. A similar example can be constructed for any $N>3$ (we omit the details).
  \end{proof}

\section{Proof of Theorem \ref{thm:convergence}}
\label{app:convergence} 

This section is dedicated to the proof of Theorem \ref{thm:convergence}. Because of the technical nature of the theorem and its proof, we shall introduce some additional notation conventions specific to this section.

We use $(\real^d)^N$ and $\real^{dN}$ interchangeably for the state space of the particle filters $X^{\Delta_n}$, $n \in \N$ with the following identification: for $x \in \real^{dN}$, write $x^{\ind{j}}$, $j \in [N]$, for the vector $[x_{(j-1)d+1},x_{(j-1)d+2},\cdots,x_{jd-1},x_{jd}]^{\intercal} \in \real^d$, so that $x$ can be identified with $(x^{\ind{1}},\cdots,x^{\ind{N}}) \in (\real^d)^N$ and vice versa. In particular, \emph{superscripts $^{\ind{j}}$ refer to vector components and subscripts $_j$ to real coordinates}. This distinction will mostly be clear from context in the sequel, but we will emphasise it where necessary.

\subsection{A new construction of $X^\Delta_k$ in Theorem  \ref{thm:convergence} }
For notational convenience, we present a new (equivalent) construction for the $(\real^d)^N$-valued Markov chains $\{X^\Delta_k\}_{k\in \N_0}$ for $\Delta \in \{\Delta_n\,:\, n \in \N\}$. This new construction is self-contained in the sense that it does not make reference to Algorithm \ref{alg:pf} (which we did when introducing this chain in Section \ref{sec:convergence}).

Denote by $\mu \defeq M_1 \in \mathcal{P}(\real^d)$ the initial distribution for the particles, and by $W \defeq (W^\ind{1},\cdots,W^{\ind{N}})$ a fixed $dN$-dimensional Brownian motion (so that the $W^{\ind{j}}$'s are independent $d$-dimensional Brownian motions) with respect to a filtration $(\F_t)_{t \geq 0}$. We then redefine the Markov chain $X^\Delta \defeq (X^\Delta_k)_{k \in \N_0}$ by $(X^\Delta_0)^{\ind{j}} \sim \mu$ independently for $j \in [N]$ and
\[
  (X^\Delta_{k+1})^{\ind{j}} = (X^\Delta_{k})^{\ind{A_k(j)}} + b\big((X^\Delta_{k})^{\ind{A_k(j)}}\big) \Delta + \sigma\big((X^\Delta_{k})^{\ind{A_k(j)}}\big)\big( W^{\ind{j}}_{(k+1)\Delta} - W^{\ind{j}}_{k \Delta} \big)
\]
for $k \in \N_0$ and $j \in [N]$. Here $A_k \in [N]^N$ stands for the multi-index resulting from the resampling at time $k$, i.e.~$A_k \sim r(\cdot\mid\nu^\Delta(X^\Delta_k))$, and $A_k(j) \in [N]$ stands for the $j$'th index of $A_k$.

More precisely, we may write
\begin{equation}\label{eq:resampling-indices}
  A_k = \sum_{\ell = 1}^{N^N} \chrf\Big( \sum_{i=1}^{\ell-1} r(a_i\mid\nu^\Delta(X^\Delta_k)) < U^\Delta_k \leq \sum_{i=1}^{\ell} r(a_i\mid\nu^\Delta(X^\Delta_k) )\Big) a_\ell,
\end{equation}
where $\{a_1,\cdots,a_{N^N}\}$ is some fixed enumeration of $[N]^N$, and the $U^\Delta_k$'s are uniform random variables on $(0,1)$ independent of each other and the $W^{\ind{j}}$'s. We can take each $U^\Delta_k$ to be $\F_{k\Delta}$-measurable. The Markov chain $X^\Delta$ is then $(\F_{k\Delta})_{k\in\N_0}$-adapted for all $\Delta \in \{\Delta_n \,:\, n\in \N\}$.

Now define the continuous-time scaling $Z^\Delta$ of this Markov chain by
\begin{equation}\label{eq:z-scaling}
  Z^\Delta_t \defeq X^\Delta_{\lfloor t/\Delta \rfloor}, \quad t \in [0,\infty), 
\end{equation}
so that $Z^\Delta$ is for all $\Delta$ an $(\F_t)_{t\geq 0}$-adapted process with paths in $D_{\real^{dN}}[0,\infty)$, the Skorohod space of c\`adl\`ag paths in $\real^{dN}$. Recall the standard modulus of continuity of $D_{\real^{dN}}[0,\infty)$, defined for $z \in D_{\real^{dN}}[0,\infty)$, $\delta > 0$ and $t > 0$ by
\[
  \omega_{\delta,t}(z) \defeq \inf_{ \substack{0 = t_0 < t_1 < \cdots < t_{n-1} < t \leq t_n \\ t_i-t_{i-1} > \delta}} \, \max_{i\in[n]} \, \sup_{s,\,s' \in [t_{i-1},t_i)} |z(s') - z(s)|.
\]

\subsection{Outline of the proof of Theorem  \ref{thm:convergence} }
The main steps in the proof of parts (i) and (ii) of the theorem are as follows:
\begin{itemize}
\item In order to use a convergence result from \cite{ethier-kurtz}, our first goal is to show that the family of processes $(Z^{\Delta_n})_{n\in\N}$ is relatively compact with respect to convergence in distribution, i.e.~that  $\{ \P( Z^{\Delta_n} \in \cdot)\,:\, n \in \N \}$ is a relatively compact set in the weak topology of $\mathcal{P}(D_{\real^{dN}}[0,\infty))$. This is done in Proposition \ref{pr:relatively-compact} below.
\item The second step is to declare the continuous-time process $(Z_t)_{t\geq0}$ that the c\`adl\`ag extensions 
$( Z^{\Delta_n}_t )_{t\geq0}$  will converge to. The process $(Z_t)_{t\geq0}$ is the canonical c\`adl\`ag process $(\XX_t)_{t\geq0}$ introduced immediately before Proposition \ref{pr:well-posed}, equipped with a law possessing the generator $\L$. Proposition \ref{pr:well-posed} proves (by establishing the well-posedness of the corresponding martingale problem) that this law is uniquely determined and Markovian.
\item We then show in Proposition \ref{pr:generators-convergence} that the appropriately-scaled discrete-time derivative of the transition kernel of $(X^{\Delta_n}_k)_{k\in \N_0}$, the skeleton of $(Z^{\Delta_n}_t)_{t\geq0}$, converges in a suitable sense to the generator $\mathcal{L}$ as $n \to \infty$.
\item Proposition \ref{pr:convergence-in-distribution} and Theorem \ref{thm:general-convergence} then complete the proof of parts (i) and (ii).
\item The proof of part (iii) is presented in the end of the section.
\end{itemize}

\subsection{The proof}

\begin{proposition}
  \label{pr:relatively-compact} The family of processes $(Z^{\Delta_n})_{n \in \N}$ satisfies

(i) the following compact containment condition:
\[
  \lim_{C\to\infty} \sup_n \P\big(|Z_t^{\Delta_n}| > C\big) = 0
\]
for every $t > 0$;

(ii) the following uniform modulus of continuity condition:
\[
  \lim_{\delta\to0^+} \sup_n \P\big(\omega_{\delta,t}(Z^{\Delta_n}) > \varepsilon\big) = 0
\]
for every $t > 0$ and $\varepsilon > 0$.

Hence the family $(Z^{\Delta_n})_{n \in \N}$ is relatively compact with respect to convergence in distribution.
\end{proposition}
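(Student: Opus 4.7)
The plan is to verify parts (i) and (ii) separately, and then invoke the standard tightness criterion for $D_{\R^{dN}}[0,\infty)$ (Theorem 16.8 of \cite{billingsley}, or Chapter 3 of \cite{ethier-kurtz}) to conclude relative compactness. For (ii) I would proceed indirectly, first verifying the Aldous increment criterion and then using its equivalence with control of $\omega_{\delta,t}$ in the presence of compact containment.

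For (i) I would establish a uniform-in-$n$ bound on $S_k \defeq \E[|X^{\Delta_n}_k|^2]$ for $k\Delta_n \le t$. Splitting each time step into the resampling substep $\tilde X^\Delta_k$ with $(\tilde X^\Delta_k)^{\ind{j}} = (X^\Delta_k)^{\ind{A_k(j)}}$ and the subsequent Euler--Maruyama substep, Assumption \ref{a:unbiased-resampling} gives $\E[\sum_j |(\tilde X^\Delta_k)^{\ind{j}}|^2 \mid X^\Delta_k] = N \sum_i w_k^{\ind{i}} |(X^\Delta_k)^{\ind{i}}|^2$ with normalised weights $w_k^{\ind{i}}$. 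Boundedness of $V$ yields $N w_k^{\ind{i}} \le e^{\Delta_n \, \mathrm{osc}(V)} = 1 + O(\Delta_n)$ pointwise, while boundedness of $b,\sigma$ with Young's inequality contributes a further $1 + O(\Delta_n)$ factor plus $O(\Delta_n)$ additive term per particle from the diffusive step. Combining yields $S_{k+1} \le (1 + C\Delta_n) S_k + C'\Delta_n$, whose Gronwall iteration gives $\sup_{k\Delta_n \le t} S_k \le C(t)$; Chebyshev delivers (i). The same decomposition combined with Doob's inequality applied to the martingale part of the recursion $|X^{\Delta_n}_{k+1}|^2 = (1+C\Delta_n)|X^{\Delta_n}_k|^2 + C'\Delta_n + (\text{martingale increments})$ upgrades this to $\sup_n \E[\max_{k\Delta_n\le t}|X^{\Delta_n}_k|^2] < \infty$, which I will need below.

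For Aldous' criterion I would bound $\P(|Z^{\Delta_n}_{\tau+h} - Z^{\Delta_n}_\tau| > \varepsilon)$ uniformly over stopping times $\tau$ with $\tau + \delta \le t$ and $h \in [0,\delta]$. Decompose $X^\Delta_{k+1} - X^\Delta_k = R_k + E_k$, where $R_k \defeq \tilde X^\Delta_k - X^\Delta_k$ vanishes unless $A_k \ne 1{:}N$ and $E_k$ is the Euler--Maruyama increment. Boundedness of $b,\sigma$ makes $\sum_k E_k$ over a window of length $h$ have mean and variance both $O(h)$, so Chebyshev yields $\P(|\sum_k E_k| > \varepsilon/2) \le C\delta/\varepsilon^2$. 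Boundedness of $V$ together with the uniform bound in Assumption \ref{a:resampling-generator} gives $\P(A_k \ne 1{:}N \mid X^\Delta_k) \le K\Delta_n$, so the expected number of resampling events in the window is $\le K\delta + K\Delta_n$. On the truncation event $\{\max_{k\Delta_n\le t,\,j}|(X^{\Delta_n}_k)^{\ind{j}}| \le M\}$, whose complement has probability $\eta_M \to 0$ as $M\to\infty$ by the upgraded (i), $\sum_k |R_k| \le 2M$ times the number of resampling events, so $\P(\sum_k |R_k| > \varepsilon/2) \le 4MK\delta/\varepsilon + \eta_M$. Sending first $\delta \to 0$ and then $M \to \infty$ verifies Aldous' criterion; together with (i) this yields tightness and hence (ii).

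The main delicacy is the resampling-jump estimate: it couples the uniform bound on $r(a\mid\nu^\Delta(x))/\Delta$ from Assumption \ref{a:resampling-generator}, which applies because boundedness of $V$ confines the potential arguments to the compact set $[0,\|V\|_\infty]^N$, with the maximal tail estimate on $|(X^{\Delta_n}_k)^{\ind{j}}|$ from the upgraded (i) that is needed to control the jump sizes. The truncation argument combining these two is the only mildly subtle point; everything else is a routine Gronwall/Chebyshev/Markov combination.
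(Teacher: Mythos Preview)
Your approach is correct and genuinely different from the paper's. For (i), the paper builds an auxiliary ``tree'' of Euler--Maruyama chains $Y^a$ indexed by all possible branching histories, bounds the number of resampling events by $\ell$ with probability $1-c't/\ell$, and then controls each $Y^a$ by a martingale $L^2$ estimate. Your Gronwall argument on $\E[|X^{\Delta_n}_k|^2]$ is cleaner and exploits Assumption~\ref{a:unbiased-resampling} directly through the identity $\E[\sum_j |(\tilde X_k)^{\ind{j}}|^2\mid X_k] = N\sum_i w_k^{\ind{i}}|(X_k)^{\ind{i}}|^2$; the paper's proof of (i) never uses unbiasedness. For (ii), the paper works with $\omega_{\delta,t}$ directly: it shows resampling times are sparse with high probability, builds a partition of $[0,t]$ with mesh $\sim\delta$ whose endpoints include the resampling times, and controls the diffusive fluctuation on each sub-interval by Burkholder--Davis--Gundy with $p>2$. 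Your route via Aldous' criterion is the more textbook one; it recovers (ii) only indirectly through the equivalence of relative compactness with (i)+(ii), but avoids the explicit partition bookkeeping.

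One point to tighten: your ``upgrade'' of (i) to $\sup_n \E[\max_{k\Delta_n\le t}|X^{\Delta_n}_k|^2]<\infty$ is not fully justified as written. The martingale-increment part of $|X_{k+1}|^2-|X_k|^2$ at a resampling step can be of order $|X_k|^2$ with conditional probability $O(\Delta_n)$, so controlling its quadratic variation would require a fourth-moment bound that you have not established. Fortunately you do not actually need this maximal estimate: since $\P(A_k\neq 1{:}N\mid\F_{k\Delta_n})\le K\Delta_n$ pointwise, the probability of \emph{any} resampling event in a window of length $\delta$ is at most $K(\delta+\Delta_n)$, and on the complement $\sum_k R_k = 0$. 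This bounds $\P(|\sum_k R_k|>\varepsilon/2)$ by $K(\delta+\Delta_n)$ with no reference to particle magnitudes, so the truncation at level $M$ (and hence the maximal bound) can be dropped entirely.
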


We refer to e.g.~\cite[Chapter 3]{ethier-kurtz} or \cite[Chapter 3]{billingsley} for details on the spaces $D_{\real^{dN}}[0,\infty)$ and $\mathcal{P}(D_{\real^{dN}}[0,\infty))$. In particular, see \cite[Chapter 3, Theorem 7.2]{ethier-kurtz} for why the conditions (i) and (ii) above together imply relative compactness. By \cite[Chapter 3, Remark 7.3]{ethier-kurtz}, relative compactness further implies the following stronger version of part (i):
\begin{equation}
  \label{eq:strong-compact-containment}
  \lim_{C\to\infty} \sup_n \P\big(\sup_{u \in [0,t]}|Z_u^{\Delta_n}| > C\big) = 0 \quad \text{for all} \quad t > 0.
\end{equation}

\begin{proof}
  (i) Fix $\Delta \in (\Delta_n)_{n\in\N}$. In order to avoid dealing with the jumps resulting from the resampling $r$ directly, we shall construct a larger tree of suitable discretisations of the underlying diffusion \eqref{eq:sde} which will at time $\lfloor t/\Delta\rfloor$ contain each of the $N$ particles of $X^\Delta_{\lfloor t/\Delta\rfloor}$ with sufficiently high probability -- this will be made precise in the argument below.

Denote by $\tau_k$, the time indices corresponding to the resampling events of the particles $X^\Delta$, i.e.
\[
  \tau_1 = \min \{ i \geq 0 \,:\, A_i \neq 1{:}N \}
\]
and inductively
\[
  \tau_{k+1} = \min \{ i > \tau_{k} \,:\, A_i \neq 1{:}N \}
\]
for $k \in \N$. These random times are obviously stopping times with respect to the filtration $(\F_{k\Delta})_{k \in \N_0}$ (and well-defined with probability $1$).

For multi-indices $a$ of length $|a| = 1$, i.e.~$a = (j)$ for $j \in [N]$, define the Markov chains $Y^{(j)}_k$ in $\real^d$ by
\[
  Y^{(j)}_0 = (X^\Delta_0)^{\ind{j}} \quad \textrm{and} \quad Y^{(j)}_{k+1} = Y^{(j)}_k + b(Y^{(j)}_k)\Delta + \sigma(Y^{(j)}_k) \big( W^{\ind{j}}_{(k+1)\Delta} - W^{\ind{j}}_{k \Delta} \big),
\]
so that $Y^{(j)}_k = (X^\Delta_k)^{\ind{j}}$ for $k \leq \tau_1$, and write
\[
  \mathcal{Y}^1 \defeq \bigl\{ (Y^a_k)_{k \geq 0} \,:\, |a| = 1 \bigr\}
\]
We then inductively define the trees $ \mathcal{Y}^{\ell+1} \defeq \{(Y^a_k)_{k \geq 0} \,:\, |a| = \ell+1 \}$ for integers $\ell > 0$ by
\[
  Y^a _k =
  \begin{cases}
    Y^{(a(1),\cdots,a(\ell))}_k \quad \text{for } k \leq \tau_\ell, \\
    Y^{a}_{k-1} + b(Y^{a}_{k-1})\Delta + \sigma(Y^{a}_{k-1}) \big( W^{\ind{a(\ell+1)}}_{k\Delta} - W^{\ind{a(\ell+1)}}_{(k-1) \Delta} \big) \quad \text{for } k > \tau_\ell;
  \end{cases}
\]
note in particular that for $Y^a \in \mathcal{Y}^{\ell+1}$, the states $Y^{(a(1),\cdots,a(\ell))}_k$ in the definition above come from $Y^{(a(1),\cdots,a(\ell))} \in \mathcal{Y}^{\ell}$. In other words, at time $\tau_\ell$ the chains in $\mathcal{Y}^\ell$ each branch into $N$ chains in $\mathcal{Y}^{\ell+1}$, evolving from time $\tau_\ell$ independently as Euler-Maruyama discretisations of the diffusion \eqref{eq:sde} driven by the $N$ Brownian motions driving the particles $(X^\Delta)^{1:N}$.

It is then easy to see from the construction that if there have been less than $\ell$ resampling events of the particles before time $\lfloor t /\Delta \rfloor$, i.e.
\[
  \sum_{k=0}^{\lfloor t /\Delta \rfloor - 1} \chrf\big( A_k \neq 1{:}N\big) < \ell,
\]
then $(X^\Delta_{\lfloor t/\Delta \rfloor})^{\ind{j}} \in \{ Y^a_{\lfloor t/\Delta \rfloor} \,:\, |a| = \ell\}$ for all $j \in [N]$, so that
\begin{equation}\label{eq:upotus}
  |Z_t^\Delta| \leq N \sum_{|a| = \ell}  |Y^a_{\lfloor t/\Delta \rfloor}| \leq c(\ell,N)\sqrt{\sum_{|a| = \ell} \big( |Y^a_{\lfloor t/\Delta \rfloor} - Y^a_0|^2 + |Y^a_0|^2 \big)},
\end{equation}
with some constant $c(\ell,N) < \infty$ independent of $t$ and $\Delta$. On the other hand, the probability of at least $\ell$ resampling events before time $\lfloor t /\Delta \rfloor$ can be controlled independently of $\Delta$ as follows:
\begin{align}
  \P\Big( \sum_{k=0}^{\lfloor t /\Delta \rfloor - 1} \chrf\big( A_k \neq 1{:}N\big) \geq \ell \Big) & \leq \frac1{\ell} \sum_{k=0}^{\lfloor t /\Delta \rfloor - 1} \P \big( A_k \neq 1{:}N \big) \notag \\
& \leq \frac{1}{\ell} \sum_{k=0}^{\lfloor t /\Delta \rfloor - 1} (c'\Delta) \leq \frac{c't}{\ell}, \label{eq:many-resamplings}
\end{align}
where the constant $c'$ is independent of $\Delta$, $t$ and $\ell$ (see \eqref{eq:resampling-generator} and \eqref{eq:resampling-indices}). Combining \eqref{eq:upotus} and \eqref{eq:many-resamplings} thus yields
\begin{align*}
  \P(|Z_t^\Delta| > C) & 
   \leq \P\Big( \sum_{|a| = \ell} \big( |Y^a_{\lfloor t/\Delta \rfloor} - Y^a_0|^2 + |Y^a_0|^2 \big) > \frac{C^2}{c(\ell,N)^2}\Big) + \frac{c't}{\ell}\\
  & \leq \P\Big( \sum_{|a| = \ell} |Y^a_{\lfloor t/\Delta \rfloor} - Y^a_0|^2 > \frac{C^2}{2c(\ell,N)^2}\Big) + \P\Big( \sum_{|a| = \ell} |Y^a_0|^2 > \frac{C^2}{2c(\ell,N)^2}\Big) + \frac{c't}{\ell} \\
  & \leq \frac{2c(\ell,N)^2}{C^2} \sum_{|a| = \ell} \E \big[|Y^a_{\lfloor t/\Delta \rfloor} - Y^a_0|^2\big] + \P\Big( \sum_{|a| = \ell} |Y^a_0|^2 > \frac{C^2}{2c(\ell,N)^2}\Big) + \frac{c't}{\ell}.
\end{align*}
Now the term $c't/\ell$ can be made arbitrarily small by a choice of $\ell$ independent of $\Delta$, and for fixed $\ell$ the term
\[
  \P\Big( \sum_{|a| = \ell} |Y^a_0|^2 > \frac{C^2}{2c(\ell,N)^2}\Big)
\]
obviously converges to zero as $C \to \infty$ uniformly in $\Delta$ (since each $Y^a_0$ is distributed as the initial distribution $\mu$). In other words, (i) follows if we can obtain an estimate for each $\E[|Y^a_{\lfloor t/\Delta \rfloor} - Y^a_0|^2]$ which may depend on $|a| = \ell$ and $t$ but not on $\Delta$.

To this end, note that the tree construction implies that
\begin{align}
  Y^a_{\lfloor t/\Delta \rfloor} - Y^a_0 & = \sum_{k=1}^{\lfloor t /\Delta \rfloor} \big( Y^a_{k} - Y^a_{k-1}\big) = \sum_{k=1}^{\lfloor t /\Delta \rfloor} \Big( b(Y^a_{k-1}) \Delta + \sigma(Y^a_{k-1}) \big( W^{\ind{j^a_{k-1}}}_{k\Delta} - W^{\ind{j^a_{k-1}}}_{(k-1)\Delta}\big)\Big) \notag \\
                                                   & = \sum_{k=1}^{\lfloor t /\Delta \rfloor} b(Y^a_{k-1})\Delta + \sum_{k=1}^{\lfloor t /\Delta \rfloor} \sigma(Y^a_{k-1}) \big( W^{\ind{j^a_{k-1}}}_{k \Delta} - W^{\ind{j^a_{k-1}}}_{(k-1)\Delta}\big), \label{eq:martingale1}
\end{align}
where each $j^a_k$ is an $\F_{k\Delta}$-measurable random index. Since $b$ is bounded, we simply get
\[
  \E\Big[ \big| \sum_{k=1}^{\lfloor t /\Delta \rfloor} b(Y^a_{k-1})\Delta \big|^2 \Big] \leq \big( \lfloor t/\Delta \rfloor \|b\|_{L^\infty} \Delta] \big)^2 \leq \|b\|_{L^\infty}^2 t^2.
\]
For the latter sum in \eqref{eq:martingale1}, note that
\begin{equation}\label{eq:martingale2}
  \sigma(Y^a_{k-1}) \big( W^{\ind{j^a_{k-1}}}_{k \Delta} - W^{\ind{j^a_{k-1}}}_{(k-1)\Delta}\big) = \sum_{j \in [N]} \chrf\big(j^a_{k-1} = j\big)\sigma(Y^a_{k-1}) \big( W^{\ind{j}}_{k \Delta} - W^{\ind{j}}_{(k-1)\Delta}\big),
\end{equation}
where the terms $\chrf(j^a_{k-1} = j)$ and $\sigma(Y^a_{k-1})$ are $\F_{(k-1)\Delta}$-measurable and the terms $W^{\ind{j}}_{k \Delta} - W^{\ind{j}}_{(k-1)\Delta}$ are distributed as $\mathcal{N}(0,\Delta I_{\real^d})$ independently of $\F_{(k-1)\Delta}$, so it is fairly easy to see that each real component of the $\real^d$-valued Markov chain
\[
  n \mapsto \sum_{k=1}^{n} \sigma(Y^a_{k-1}) \big( W^{\ind{j^a_{k-1}}}_{k \Delta} - W^{\ind{j^a_{k-1}}}_{(k-1)\Delta}\big)
\]
is a martingale, which leads to the estimate
\begin{align*}
  & \E \Big[ \big| \sum_{k=1}^{\lfloor t /\Delta \rfloor} \sigma(Y^a_{k-1}) \big( W^{\ind{j^a_{k-1}}}_{k \Delta} - W^{\ind{j^a_{k-1}}}_{(k-1)\Delta}\big) \big|^2 \Big] \\
& \qquad \qquad = \sum_{k=1}^{\lfloor t /\Delta \rfloor} \E \Big[ \big| \sigma(Y^a_{k-1}) \big( W^{\ind{j^a_{k-1}}}_{k \Delta} - W^{\ind{j^a_{k-1}}}_{(k-1)\Delta}\big) \big| ^2\Big] \\
  & \qquad \qquad  \leq \|\sigma\|_{L^\infty}^2 \sum_{k=1}^{\lfloor t /\Delta \rfloor} \E \Big[  \sum_{j \in [N]} \chrf\big(j^a_{k-1} = j\big)\big| W^{\ind{j}}_{k \Delta} - W^{\ind{j}}_{(k-1)\Delta}\big|^2 \Big] \\
  & \qquad \qquad = \|\sigma\|_{L^\infty}^2 \sum_{k=1}^{\lfloor t /\Delta \rfloor} \sum_{j \in [N]} \E \Big[ \chrf\big(j^a_{k-1} = j\big) \E \big[ \big| W^{\ind{j}}_{k \Delta} - W^{\ind{j}}_{(k-1)\Delta}\big|^2 \big| \F_{(k-1)\Delta}\big] \Big] \\
  & \qquad \qquad = \|\sigma\|_{L^\infty}^2 \lfloor t/\Delta \rfloor \Delta \leq \|\sigma\|_{L^\infty}^2 t.
\end{align*}
This finishes the proof of (i).

(ii) Fix $t > 0$, $\varepsilon > 0$ and $\gamma > 0$, with $\gamma$ arbitrarily small. We will show that
\begin{equation}\label{eq:limsup-gamma}
  \limsup_{\delta\to0^+} \Big( \sup_{\Delta \in (\Delta_n)_{n\in\N}} \P\big( \omega_{\delta,t}(Z^\Delta) > \varepsilon \big) \Big) \leq \gamma,
\end{equation}
which by the arbitrariness of $\gamma > 0$ implies (ii).

Consider parameters $0 < \Delta \leq \delta \ll 1$ (since obviously $\omega_{\delta,t}(Z^{\Delta'}) \equiv 0$ for $\Delta' > \delta$), where the condition ``$\ll 1$'' is quantified more precisely later on.
Like in \eqref{eq:many-resamplings} above, it will be convenient to disregard the possibility of arbitrarily many resampling events of the Markov chain $X^\Delta$ before the time $\lfloor t/\Delta\rfloor$:
\[
  \P \Big( \sum_{k=0}^{\lfloor t/\Delta\rfloor-1} \chrf\big(A_k \neq 1{:}N\big) \geq \ell \Big) \leq \frac{c't}{\ell},
\]
and $\ell \in \N$ can be taken so that the latter quantity is $< \gamma$. We can further limit our estimates to the case when the resampling events up to time $\tau_\ell$ happen happen sufficiently sparsely. Denoting $\tau_0 \equiv 0$, we have
\begin{align*}
  \P( \tau_k - \tau_{k-1} \leq m) & = \sum_{n} \P( \tau_k \leq m+n \mid \tau_{k-1} = n)\P(\tau_{k-1} = n) \\
  & = \sum_{n} \P\Big( \sum_{i=n+1}^{n+m} \chrf\big(A_i \neq 1{:}N\big) \geq 1 \mid \tau_{k-1} = n\Big)\P(\tau_{k-1} = n) \\
  & \leq \sum_{n} c'\Delta m \P(\tau_{k-1} = n) = c'\Delta m
\end{align*}
for any $k$, $m \in \N$, and so
\[
  \P\Big( \bigcup_{k=1}^{\ell} \{\tau_k - \tau_{k-1} \leq \lfloor 2\delta/\Delta\rfloor \}\Big) \leq 2\ell c'\delta.
\]
Thus the event
\begin{equation}\label{eq:sparse}
  B \defeq \Big\{ \sum_{k=0}^{\lfloor t/\Delta\rfloor-1}\chrf\big(A_k \neq 1{:}N\big) < \ell \Big\} \cap \bigcap_{k=1}^{\ell} \{\tau_k - \tau_{k-1} > \lfloor 2\delta/\Delta\rfloor \}
\end{equation}
has probability
\begin{equation}\label{eq:sparse-prob}
  \P(B) \geq 1 - \gamma - 2\ell c'\delta,
\end{equation}
which is sufficiently high for our purposes.

Now consider $\omega_{\delta,t}(Z^\Delta)$ within the event $B$. Denote by $\tau^*$ the smallest resampling time $\tau_k$ such that $\tau_k \geq \lfloor t/\Delta \rfloor$. By \eqref{eq:sparse} we have $\tau^* \leq \tau_\ell$ and $(\tau_k - \tau_{k-1})\Delta > \lfloor 2\delta/\Delta\rfloor\Delta > \delta$ for $\tau_k \leq \tau^*$. Obviously $(\tau^* + 1)\Delta \geq t$. Thus the partition
\begin{equation}\label{eq:partition-unoptimal}
  0 < (\tau_1 + 1)\Delta < (\tau_2 + 1)\Delta < \cdots < (\tau^* + 1)\Delta
\end{equation}
is a valid candidate for the infimum in the definition of $\omega_{\delta,t}$.
However since some of the interval lengths $(\tau_k - \tau_{k-1})\Delta$ may be
unnecessarily long for estimating $\omega_{\delta,t}(Z^\Delta)$, we shall
refine the partition \eqref{eq:partition-unoptimal} as follows. Let
\[
  0 = t_0 < t_1 < \cdots < t_{m-1} < t \leq t_m
\]
be such that $\{ (\tau_k + 1)\Delta \,:\, \tau_k < \tau^* \} \subset \{t_i\,:\, i < m\}$, $t_i - t_{i-1} \in (\delta,2\delta]$ and $t_m \leq \min( (\tau^*+1)\Delta, t + 2\delta)$. To see why this is possible, simply divide any interval of the form $[(\tau_{k-1}+1)\Delta,(\tau_k+1)\Delta)$ with length $>2\delta$ into sufficiently many smaller subintervals.

Note that we do not claim that the path-wise choice of the partition $\{t_i\}$ in $B$ is in any way measurable, but this will ultimately not be an issue below. The partition will be used as a stepping stone for a path-wise upper bound for $\omega_{\delta,t}(Z^\Delta)$ (for paths in $B$) which \emph{is} measurable.

The point of the above construction is that jumps induced by the resampling will not happen on continuous-time intervals of the type $[t_{i-1},t_i)$, as is easily seen from the original partition \eqref{eq:partition-unoptimal}. Thus
\begin{align*}
  \omega_{\delta,t}(Z^\Delta)
  & \leq \max_{i \in [m] } \, \sup_{s,\,s' \in [t_{i-1},t_i) } |Z^\Delta_{s'} - Z^\Delta_s|
   = \max_{i \in [m] } \, \max_{\lfloor \frac{t_{i-1}}{\Delta}\rfloor \leq k,\,k' < \lfloor \frac{t_{i}}{\Delta}\rfloor} |X^\Delta_{k'} - X^\Delta_k| \\
  & \leq \sum_{j \in [N] }\max_{i \in [m] } \, \max_{\lfloor \frac{t_{i-1}}{\Delta}\rfloor \leq k,\,k' < \lfloor \frac{t_{i}}{\Delta}\rfloor} |(X^\Delta_{k'})^{\ind{j}} - (X^\Delta_k)^{\ind{j}}| \\
  & = \sum_{j \in [N] } \max_{i \in [m] } \, \max_{\lfloor \frac{t_{i-1}}{\Delta}\rfloor \leq k < k' < \lfloor \frac{t_{i}}{\Delta}\rfloor} \big|\sum_{k \leq i <k'} b( (X^\Delta_i)^{\ind{j}})\Delta + \sum_{k \leq i < k'} \sigma((X^\Delta_i)^{\ind{j}})\big(W^{\ind{j}}_{(i+1)\Delta} - W^{\ind{j}}_{i\Delta}\big) \big|.
\end{align*}
The first inner sum can be estimated as follows:
\begin{align*}
  \big|\sum_{k \leq i <k'} b( (X^\Delta_i)^{\ind{j}})\Delta\big|
  & \leq (k' - k) \|b\|_{L^\infty} \Delta  \leq \|b\|_{L^\infty} \Big( \Big\lfloor\frac{t_{i}}{\Delta}\Big\rfloor - \Big\lfloor\frac{t_{i-1}}{\Delta}\Big\rfloor \Big) \Delta  \\
  & \leq \|b\|_{L^\infty} \Big( \frac{t_{i}}{\Delta} - \frac{t_{i-1}}{\Delta} + 1\Big) \Delta \leq 3\|b\|_{L^\infty} \delta,
\end{align*}
and here we specify the assumption (already implicitly made above) that $\delta < \varepsilon/(6N\|b\|_{L^\infty})$. Applying this estimate to the one above yields
\[
  \omega_{\delta,t}(Z^\Delta) \leq \frac\varepsilon{2} + \sum_{j \in [N] } \max_{i \in [m] } \, \max_{\lfloor \frac{t_{i-1}}{\Delta}\rfloor \leq k < k' < \lfloor \frac{t_{i}}{\Delta}\rfloor} \big|\sum_{k \leq i < k'} \sigma((X^\Delta_i)^{\ind{j}})\big(W^{\ind{j}}_{(i+1)\Delta} - W^{\ind{j}}_{i\Delta}\big) \big|.
\]
To sidestep the measurability issues arising from the path-wise choice of the partition $\{t_i\}$, note that for any $i \in [m]$ we can find $n \in \{0, \cdots, \lceil t/\delta\rceil \}$ such that
\[
  \Big\{ \Big\lfloor \frac{t_{i-1}}{{\Delta}}\Big\rfloor,  \Big\lfloor \frac{t_{i-1}}{\Delta}\Big\rfloor +1 , \cdots ,\Big\lfloor \frac{t_{i}}{\Delta}\Big\rfloor - 1 \Big\}
  \subset
  \Big\{n \Big\lceil \frac{\delta}{\Delta}\Big\rceil, n \Big\lceil \frac{\delta}{\Delta}\Big\rceil + 1, \cdots , (n+3) \Big\lceil \frac{\delta}{\Delta}\Big\rceil -1\Big\},
\]
so
\begin{align*}
  \omega_{\delta,t}(Z^\Delta) & \leq \frac\varepsilon{2} + \sum_{j \in [N] } \max_{0 \leq n \leq \lceil \frac{t}{\delta} \rceil } \, \, \max_{n\lceil\frac{\delta}{\Delta}\rceil \leq k<k'<(n+3)\lceil\frac{\delta}{\Delta}\rceil} \big|\sum_{k \leq i < k'} \sigma((X^\Delta_i)^{\ind{j}})\big(W^{\ind{j}}_{(i+1)\Delta} - W^{\ind{j}}_{i\Delta}\big) \big| \\
  & \leq \frac\varepsilon{2} + 2 \sum_{j \in [N] } \max_{0 \leq n \leq \lceil\frac{t}{\delta} \rceil } \, \, \underbrace{\max_{n\lceil\frac{\delta}{\Delta}\rceil < k <(n+3)\lceil\frac{\delta}{\Delta}\rceil} \big|\sum_{n\lceil\frac{\delta}{\Delta}\rceil \leq i < k} \sigma((X^\Delta_i)^{\ind{j}})\big(W^{\ind{j}}_{(i+1)\Delta} - W^{\ind{j}}_{i\Delta}\big) \big|}_{=: M^{\ind{j}}_n}.
\end{align*}
Combining this with \eqref{eq:sparse} and \eqref{eq:sparse-prob} thus yields
\begin{align}
  \P\big( \omega_{\delta,t}(Z^\Delta) > \varepsilon \big) & \leq \P\Big( \frac\varepsilon{2} + 2 \sum_{j \in [N] } \max_{0 \leq n \leq \lceil\frac{t}{\delta} \rceil }M^{\ind{j}}_n > \varepsilon \Big) + 2\ell c'\delta + \gamma \notag\\
  & \leq \sum_{j\in[N]} \P\Big( \max_{0 \leq n \leq \lceil\frac{t}{\delta} \rceil }M^{\ind{j}}_n > \frac{\varepsilon}{4N} \Big) + 2\ell c'\delta + \gamma \notag \\
  & \leq \sum_{j\in[N]} \sum_{0 \leq n \leq \lceil t/\delta\rceil} \P\Big(M^{\ind{j}}_n > \frac{\varepsilon}{4N} \Big) + 2\ell c'\delta + \gamma. \label{eq:maximal}
\end{align}

Then in order to estimate a term of the form $\P(M^{\ind{j}}_n > \frac\varepsilon{4N})$, recall from the discussion following \eqref{eq:martingale2} that each real component of the $\real^d$-valued Markov chain
\[
 k \mapsto S^{\ind{j}}_{n,k} \defeq \sum_{n\lceil\frac{\delta}{\Delta}\rceil \leq i < k} \sigma((X^\Delta_i)^{\ind{j}})\big(W^{\ind{j}}_{(i+1)\Delta} - W^{\ind{j}}_{i\Delta}\big), \qquad k \geq n\Big\lceil\frac{\delta}{\Delta}\Big\rceil,
\]
is a martingale starting from zero (i.e.~the sum above is interpreted as zero for $k = n\lceil\delta/\Delta\rceil$). Write $[S^{\ind{j}}_n]_k$ for
\[
  \sum_{n\lceil\frac{\delta}{\Delta}\rceil \leq i < k} \big| \sigma((X^\Delta_i)^{\ind{j}})\big(W^{\ind{j}}_{(i+1)\Delta} - W^{\ind{j}}_{i\Delta}\big) \big|^2 ,
\]
which is the sum of the quadratic variations of the real components of $S^{\ind{j}}_{n,\cdot}$ up to time $k > n\lceil\delta/\Delta\rceil$. Then, for arbitrary fixed $p > 2$, we may use the Burkholder-Davis-Gundy inequality (see e.g.~\cite[pp.~499]{shiryaev} or \cite[Theorem 18.7]{kallenberg}) and H\"older's inequality to obtain
\begin{align*}
  \P\Big(M^{\ind{j}}_n > \frac{\varepsilon}{4N} \Big) 
  & \lesssim \varepsilon^{-p} \E \big[ |M^{\ind{j}}_n|^p \big] \lesssim \varepsilon^{-p} \E \big[ [S^{\ind{j}}_n]_{(n+3)\lceil\frac\delta\Delta\rceil}^{p/2}\big] \\
  & \lesssim \varepsilon^{-p} \E\Big[ \Big( \sum_{n\lceil\frac{\delta}{\Delta}\rceil \leq i < (n+3)\lceil\frac{\delta}{\Delta}\rceil} \big| \sigma((X^\Delta_i)^{\ind{j}})\big(W^{\ind{j}}_{(i+1)\Delta} - W^{\ind{j}}_{i\Delta}\big) \big|^2 \Big)^{p/2} \Big] \\
  & \lesssim \varepsilon^{-p} \|\sigma\|_{L^\infty}^p \Big\lceil\frac\delta\Delta\Big\rceil^{(p-2)/2} \E \Big[ \sum_{n\lceil\frac{\delta}{\Delta}\rceil \leq i < (n+3)\lceil\frac{\delta}{\Delta}\rceil} \big| W^{\ind{j}}_{(i+1)\Delta} - W^{\ind{j}}_{i\Delta}\big|^p \Big] \\
  & \lesssim \varepsilon^{-p} \|\sigma\|_{L^\infty}^p \Big\lceil\frac\delta\Delta\Big\rceil^{p/2} \Delta^{p/2} \lesssim \varepsilon^{-p} \|\sigma\|_{L^\infty}^p \delta^{p/2},
\end{align*}
where the implicit multiplicative constant in each inequality is independent of $\delta$ and $\Delta$ (but can of course depend on $p$, $N$ and the dimension $d$ of the state space of the particles). Applying this estimate to \eqref{eq:maximal} gives
\[
  \P\big( \omega_{\delta,t}(Z^\Delta) > \varepsilon \big) \leq c'' t \varepsilon^{-p} \|\sigma\|_{L^\infty}^p \delta^{(p/2)-1} + 2\ell c'\delta + \gamma
\]
with constants $c'$ and $c''$ independent of $\delta$ and $\Delta$, and since $(p/2)-1>0$, this yields \eqref{eq:limsup-gamma} and thus finishes the proof of (ii).
\end{proof}

Our next step is to verify that the infinitesimal generator $\L$ in Theorem \ref{thm:convergence} is associated with a \emph{well-posed martingale problem}. Let us briefly recall the concept of martingale problems. 

Denote by $\XX$ the canonical c\`adl\`ag process, given by the projections
\[
  D_{\real^{dN}}[0,\infty) \owns z \mapsto \XX_t(z) \defeq z(t) \in \real^{dN},
\]
and by $\FF$ the filtration generated by $\XX$. The \emph{martingale problem} for $(\L,\testf(\real^{dN}))$ concerns the existence of a probability measure $\P_\eta \in \mathcal{P}(D_{\real^{dN}}[0,\infty))$ for any given $\eta \in \mathcal{P}(\real^{dN})$ such that
\[
  (t,z) \mapsto f(\XX_t(z)) - f(\XX_0(z)) - \int_0^t \L f(\XX_u(z)) \ud u
\]
is a martingale (with respect to $\FF$) under $\P_\eta$ for any $f \in \testf(\real^{dN})$, and $\P_\eta \circ (\XX_0)^{-1} = \eta$. The martingale problem is said to be \emph{well-posed} if for every $\eta$ a solution $\P_\eta$ exists and is unique. To be more precise, ``uniqueness'' here means uniqueness in terms of finite-dimensional distributions -- we refer to \cite[Chapter 4]{ethier-kurtz} for a thorough examination of this subject.

In order to write the generator $\L$ explicitly, we slightly abuse notation and define the functions $b^*\colon \real^{dN} \to \real^{dN}$ and $\sigma^*\colon \real^{dN} \to \real^{dN} \times \real^{dN}$ for $x \defeq (x^{\ind{1}},x^{\ind{2}},\cdots,x^{\ind{N}}) \in (\real^{d})^N$ by
\[
   b_i^*(x) = \big( b(x^{\ind{j}}) \big)_{i - (j-1)d} \qquad \text{for} \qquad j \in [N] \text{ and } (j-1)d < i \leq j d
\]
and
\[
  \sigma^*_{i,j}(x) =
  \begin{cases}
    \big( \sigma(x^{\ind{k}})\big)_{i-(k-1)d , j-(k-1)d} \quad & \text{if} \quad (k-1)d < i,\, j \leq k d \text{ for some } k\in[N], \\
    0 & \text{otherwise}.
  \end{cases}
\]
Recall that the subscripts $_i$ are to be interpreted as real coordinates $\real^d$ or $\real^{dN}$, the subscripts $_{i,j}$ as real entries of a matrix in $\real^d\times \real^d$ or $\real^{dN} \times \real^{dN}$ and the superscripts $^{\ind{j}}$ as $\real^d$-components of $(\real^d)^N$, keeping in line with our previous notation. The functions $b^*$ and $\sigma^*$ obviously have the same properties $b$ and $\sigma$, i.e.~they are Lipschitz continuous and bounded, and $\sigma^*$ is uniformly non-degenerate in the sense of \eqref{eq:nondegenerate}.

We then have
\begin{align}
  \L f(x) 
  & = \Big( \sum_{i=1}^{dN} b^*_i(x) \partial_i f(x) + \frac12\sum_{i,j=1}^{dN} \big(\sigma^*(x)^\intercal \sigma^*(x) \big)_{i,j} \partial_{i,j}f(x) \Big) \notag \\
  & \quad \qquad \qquad + \sum_{a\in[N]^N \setminus\{1{:}N\}} \iota^a (x) \big( f(x^{\ind{a(1{:}N)}}) - f(x)\big) \notag \\
    & =: \Lmut f(x) + \Ljump f(x) \label{eq:mut-jump}
\end{align}
for $f \in C^{\infty}_c(\real^{dN})$, where the jump intensity functions $\iota^a \colon\real^{dN}\to[0,\infty)$ were assumed to be bounded and continuous. Recall that $x^{\ind{a(1{:}N)}}$ above stands for $(x^{\ind{a(1)}},\cdots,x^{\ind{a(N)}})$.

\begin{proposition}
  \label{pr:well-posed}
  The martingale problem for $(\L,\testf(\real^{dN}))$ is well-posed.
\end{proposition}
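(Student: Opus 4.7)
The plan is to exploit the decomposition $\L = \Lmut + \Ljump$ from \eqref{eq:mut-jump} and treat $\Ljump$ as a bounded perturbation of the diffusion martingale problem. Under the standing hypotheses, $b^*$ and $\sigma^*$ are bounded and Lipschitz on $\real^{dN}$ and $\sigma^*$ is uniformly non-degenerate, so the $dN$-dimensional SDE
\[
  \ud X_t = b^*(X_t)\ud t + \sigma^*(X_t)\ud W_t
\]
has a unique strong solution from every initial condition, and It\^o's formula shows that this solution solves the martingale problem for $(\Lmut,\testf(\real^{dN}))$. Uniqueness for this martingale problem is classical; see e.g.~Chapter 5 of \cite{ethier-kurtz}, or the pseudo-differential operator treatment in \cite{jacob,hoh,bottcher-schilling-wang}. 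Hence $(\Lmut,\testf(\real^{dN}))$ is well-posed.

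Next, $\Ljump$ is a bounded linear operator on the space of bounded measurable functions on $\real^{dN}$: since the sum in its definition runs over the finite set $[N]^N\setminus\{1{:}N\}$ and each $\iota^a$ is bounded by assumption, one has $\|\Ljump f\|_\infty \le 2 K\|f\|_\infty$ with $K \defeq \sum_a \|\iota^a\|_\infty < \infty$, and continuity of the $\iota^a$ makes $\Ljump$ send $C_b(\real^{dN})$ into itself. The standard bounded-perturbation result for martingale problems (see e.g.~Theorem 4.10.3 in \cite{ethier-kurtz}) therefore transfers well-posedness from $(\Lmut,\testf)$ to $(\Lmut + \Ljump,\testf) = (\L,\testf)$.

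A more hands-on alternative for existence is an interlacing construction: set $\lambda^* \defeq \sup_x \sum_a \iota^a(x) < \infty$, evolve by the diffusion $\Lmut$ between the points of an independent rate-$\lambda^*$ Poisson process, and at each Poisson point with current position $x$ perform the jump $x \mapsto x^{\ind{a(1{:}N)}}$ with probability $\iota^a(x)/\lambda^*$ (and do nothing with the remaining probability). Dynkin's formula applied on each inter-jump interval, summed over jump attempts, shows that the resulting c\`adl\`ag process solves the martingale problem for $\L$. Uniqueness then follows by induction on the number of jumps in $[0,t]$: conditional on no jump before $\tau_1$ the law is determined by well-posedness of $\Lmut$; conditioning on $\tau_1$ and the jump outcome reduces the post-jump problem to the same setting, and the strong Markov property closes the loop.

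The main technical obstacle is verifying the precise hypotheses of the bounded-perturbation theorem in \cite{ethier-kurtz} (in particular, that the diffusion martingale problem admits a c\`adl\`ag strong Markov solution with a suitably regular transition semigroup). These are standard consequences of the bounded Lipschitz and uniform ellipticity assumptions on $b$ and $\sigma$, so once the diffusion piece is in hand, the finite-range, bounded and continuous character of $\Ljump$ makes the perturbation step essentially routine.
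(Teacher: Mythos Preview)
Your argument is correct, and the interlacing construction you sketch is a clean, self-contained alternative. The route differs from the paper's in one essential respect. The paper observes that for $f\in\testf(\real^{dN})$ the function $\Ljump f$ need not vanish at infinity: if some coordinates $x^{\ind{j}}$ are large while others are small, $x^{\ind{a(1{:}N)}}$ may lie in $\supp f$ even though $x$ does not, and the coefficients $\iota^a$ are merely bounded. Because of this, the paper cannot apply a Feller-semigroup perturbation result directly; it therefore introduces cutoffs $\chi^k$, shows each $\L^k = \Lmut + \chi^k\Ljump$ generates a Feller process (via \cite[Chapter 1, Theorem 7.1]{ethier-kurtz}), and then reaches $\L$ by the localisation theorems \cite[Chapter 4, Theorems 6.1--6.2]{ethier-kurtz}.

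Your approach sidesteps this by working at the level of the martingale problem rather than the $C_0$-semigroup: the bounded-jump perturbation result in \cite[Chapter 4, Section 10]{ethier-kurtz} only requires that $\Ljump$ have the form $\lambda(x)\int(f(y)-f(x))\mu(x,\ud y)$ with bounded $\lambda$ and a measurable transition kernel $\mu$, which is immediate here with $\lambda(x)=\sum_a\iota^a(x)$ and $\mu(x,\cdot)$ the appropriate convex combination of point masses at $x^{\ind{a(1{:}N)}}$. This is shorter and avoids the cutoff/localisation machinery entirely; the paper's route, on the other hand, yields as a by-product that each truncated generator $\L^k$ is Feller, which your argument does not establish but which is not needed for the proposition as stated.
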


\begin{proof}
  Write $\testf \defeq \testf(\real^{dN})$. A fairly general existence result by W.~Hoh, see \cite[Theorem 3.15]{hoh} or \cite[Theorem 3.24]{bottcher-schilling-wang} and the references therein, implies that the martingale problem for $(\L,\testf)$ with our assumptions for the coefficient functions has a solution for any initial distribution $\eta$.

In order to verify uniqueness, we first note that our assumptions on the coefficient functions $b^*$ and $\sigma^*$ imply that $\Lmut$ is a standard non-degenerate (in the sense of \eqref{eq:nondegenerate}) diffusion-type generator, so the martingale problem for $(\Lmut,\testf)$ is well-posed, and in fact $\Lmut$ generates a Feller process -- see e.g.~\cite[Chapter 8, Section 1]{ethier-kurtz} and \cite[Theorem 17.24]{kallenberg}.

Then consider cutoff functions $\chi^k \in \testf$, $k \in \N$, such that $0 \leq \chi^k \leq 1$ and $\chi^k(x) = 1$ for $|x| \leq k$. A standard perturbation result implies that each 
$\L^k \defeq \Lmut + \chi^k \Ljump$, i.e.
\[
  \L^kf(x) = \Lmut f(x) + \chi^k(x) \sum_{a\in[N]^N\setminus\{1{:}N \}} \iota^a (x) \big( f(x^{\ind{a(1{:}N)}}) - f(x)\big),
\]
is a Feller generator and that well-posedness holds for the martingale problem for $(\L^k,\testf)$; see e.g.~\cite[Chapter 1, Theorem 7.1]{ethier-kurtz} or \cite[Theorem 2.8.1]{jacob}.
The cutoff function $\chi^k$ is needed here to ensure that $\L^k$ maps functions in $\testf$ to continuous functions that vanish at infinity, which is not necessarily the case for $\L$ itself since the jump intensity functions $\iota^a$ are merely continuous and bounded.

Finally, since the martingale problem for $(\L, \testf)$ has a solution for any initial distribution as noted above, and $\L$ always coincides locally with some $\L^k$, a standard localisation procedure for the well-posedness of martingale problems \cite[Chapter 4, Theorems 6.1 and 6.2]{ethier-kurtz} yields well-posedness for the martingale problem for $(\L, \testf)$.
\end{proof}

Finally, let us check that appropriately-scaled discrete generators of the Markov chains $X^{\Delta_n}$ converge to $\L$.

\begin{proposition}
  \label{pr:generators-convergence}
For $n \in \N$, write $\Gamma_n$ for the transition kernel of $X^{\Delta_n}$. Then
\[
  \L_n f(x) \defeq \frac{1}{\Delta_n} \Big( \int_{\real^{dN}} f(y) \Gamma_{n}(x,\ud y) - f(x) \Big) \stackrel{n\to\infty}{\longrightarrow} \L f(x)
\]
for all $f \in \testf(\real^{dN})$ with bounded and pointwise convergence with respect to $x \in \real^{dN}$, i.e.~$\L_n f$ is uniformly bounded for all $f$.
\end{proposition}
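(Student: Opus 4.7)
The approach is to decompose the one-step evolution of $X^{\Delta_n}$ into a resampling step followed by a component-wise Euler--Maruyama step. Writing
\[
  \Phi^{\Delta_n}_a(x)^{\ind{j}} \defeq x^{\ind{a(j)}} + b(x^{\ind{a(j)}})\Delta_n + \sigma(x^{\ind{a(j)}}) B^{\ind{j}}, \qquad j \in [N],
\]
with $B^{\ind{j}} \sim \mathcal{N}(0,\Delta_n I_d)$ independent, the construction of the Markov chain gives
\[
  \L_n f(x) = \sum_{a\in[N]^N} \frac{r(a\mid \nu^{\Delta_n}(x))}{\Delta_n}\, \E\big[f(\Phi^{\Delta_n}_a(x)) - f(x)\big].
\]
The strategy is to isolate the identity term $a = 1{:}N$ (which accounts for the diffusive Euler step) from the finitely many non-identity terms (each of which contributes one jump of $\L$).

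For $a = 1{:}N$, $\Phi^{\Delta_n}_{1{:}N}$ is the standard Euler--Maruyama step for the SDE \eqref{eq:sde} applied component-wise. A third-order Taylor expansion of $f \in \testf(\real^{dN})$ combined with the Gaussian moment identities $\E[B^{\ind{j}}] = 0$ and $\E[B^{\ind{j}}(B^{\ind{j}})^{\intercal}] = \Delta_n I_d$ yields
\[
  \E[f(\Phi^{\Delta_n}_{1{:}N}(x))] - f(x) = \Delta_n\, \Lmut f(x) + O(\Delta_n^{3/2}),
\]
uniformly for $x$ in any fixed neighbourhood of $\supp(f)$, with the implicit constant depending only on $\|f\|_{C^3}$, $\|b\|_\infty$ and $\|\sigma\|_\infty$. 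By Assumption~\ref{a:resampling-generator} the remaining resampling probabilities are $O(\Delta_n)$, so $r(1{:}N \mid \nu^{\Delta_n}(x)) \to 1$, and the $a = 1{:}N$ contribution converges pointwise to $\Lmut f(x)$.

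For each of the finitely many $a \in [N]^N\setminus\{1{:}N\}$, Assumption~\ref{a:resampling-generator} provides the pointwise convergence $\frac{r(a\mid\nu^{\Delta_n}(x))}{\Delta_n} \to \iota^a(x)$, with the ratio uniformly bounded in $n$ and $x$ since the boundedness of $V$ forces $(V(x^{\ind{i}}))_{i\in[N]}\in[0,\|V\|_\infty]^N$. Because $\Phi^{\Delta_n}_a(x) \to x^{\ind{a(1{:}N)}}$ in $L^2$ and $f$ is Lipschitz and bounded, $\E[f(\Phi^{\Delta_n}_a(x))] \to f(x^{\ind{a(1{:}N)}})$ pointwise while $|\E[f(\Phi^{\Delta_n}_a(x)) - f(x)]| \leq 2\|f\|_\infty$. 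Summing over the finite index set $\{a \neq 1{:}N\}$ produces $\Ljump f(x)$ in the limit, and combining with the Euler contribution yields $\L f(x)$.

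The main technical obstacle is ensuring uniform boundedness of the Euler contribution on all of $\real^{dN}$, since the Taylor estimate above is only uniform on a neighbourhood of $\supp(f)$. For $x$ with $\mathrm{dist}(x,\supp(f)) \geq 1$, however, $f$ vanishes at $x$ together with all of its derivatives, so $\E[f(\Phi^{\Delta_n}_{1{:}N}(x))] - f(x) = \E[f(\Phi^{\Delta_n}_{1{:}N}(x))]$ can instead be controlled by Gaussian concentration: for $\Delta_n$ sufficiently small, $\P(|\Phi^{\Delta_n}_{1{:}N}(x) - x| \geq 1) \leq C e^{-c/\Delta_n}$, and this exponential decay dominates the prefactor $\Delta_n^{-1}$. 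Uniform boundedness therefore extends to all of $\real^{dN}$, which together with the pointwise convergence established above proves the claim.
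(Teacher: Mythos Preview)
Your proof is correct and follows essentially the same route as the paper's: both split the sum over $a\in[N]^N$ into the identity term (handled by a Taylor expansion of the Euler step, yielding $\Lmut f$) and the finitely many non-identity terms (handled via Assumption~\ref{a:resampling-generator}, yielding $\Ljump f$), and both obtain bounded pointwise convergence from the uniform bounds in Assumption~\ref{a:resampling-generator} together with the boundedness of $V$.

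One minor remark: the final ``technical obstacle'' you raise is in fact not an obstacle. Since $f\in\testf(\real^{dN})$, the global norm $\|f\|_{C^3}$ is finite, and the third-order Taylor remainder is bounded by $C\|f\|_{C^3}\,\E\big[|\Phi^{\Delta_n}_{1{:}N}(x)-x|^3\big]\le C'\Delta_n^{3/2}$ \emph{uniformly in $x\in\real^{dN}$}, because $b$ and $\sigma$ are bounded. So the estimate $\E[f(\Phi^{\Delta_n}_{1{:}N}(x))] - f(x) = \Delta_n\,\Lmut f(x) + O(\Delta_n^{3/2})$ already holds globally, and your Gaussian-tail argument, while correct, is not needed. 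The paper simply asserts the bounded pointwise convergence without this extra step.
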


\begin{proof}
  Let $\Delta \defeq \Delta_n$ for some $n$. With the notation introduced above, we have
\begin{align*}
  X^\Delta_{k+1} & = \sum_{a \in [N]^N} \chrf\big( A_k = a \big) \Big( (X^\Delta_k)^{\ind{a(1{:}N)}} + b^*\big((X^\Delta_k)^{\ind{a(1{:}N)}}\big)\Delta \\
  & \qquad  \qquad + \sigma^*\big((X^\Delta_k)^{\ind{a(1{:}N)}}\big)(W_{(k+1)\Delta} - W_{k\Delta}) \Big),
\end{align*}
where $W_t \defeq (W^1_t,\cdots,W^N_t)$. Thus, $\Gamma_n$ can be expressed as
\[
  \int_{\real^{dN}}f(y)\Gamma_n(x,\ud y) = \E\Big[ \sum_{a \in [N]^N } \chrf\big( A^\Delta(x) = a \big) f \big( x^{\ind{a(1{:}N)}} + b^*(x^{\ind{a(1{:}N)}})\Delta + \sigma^*(x^{\ind{a(1{:}N)}})B^\Delta \big) \Big]
\]
for $x \in \real^{dN}$ and bounded and Borel measurable $f\colon\real^{dN}\to\real$, where
\[
  A^\Delta(x) = \sum_{\ell = 1}^{N^N} \chrf\Big( \sum_{i=1}^{\ell-1} r(a_i\mid\nu^\Delta(x)) < U \leq \sum_{i=1}^{\ell} r(a_i\mid\nu^\Delta(x) )\Big) a_\ell
\]
for some $U \sim \mathcal{U}(0,1)$ and $B^\Delta \sim \mathcal{N}(0,\Delta I_{\real^{dN}})$ independent from $U$.

A standard application of Taylor's theorem then justifies the following calculations for $f \in \testf(\real^{dN})$:
\begin{align}
  & \int_{\real^{dN}} f(y) \Gamma_{n}(x,\ud y) - f(x) \notag \\
  & \quad = \int_{\real^{dN}} \big(f(y) -f(x)\big) \Gamma_{n}(x,\ud y) \notag \\
  & \quad = \sum_{a\in[N]^N} \E \Big[ \chrf  \big(A^\Delta(x) = a\big)\big( f\big(  x^{\ind{a(1{:}N)}} + b^*(x^{\ind{a(1{:}N)}})\Delta \notag \\
  & \qquad \qquad \qquad + \sigma^*(x^{\ind{a(1{:}N)}})B^\Delta  \big) - f(x) \big) \Big]\notag \\
  & \quad = r\big( 1{:}N \mid \nu^\Delta(x)\big)\big( \Lmut f(x)\Delta + o(\Delta) \big) \notag\\
  & \qquad \qquad \qquad + \sum_{a \in [N]^N\setminus\{ 1{:}N \}} r(a|\nu^\Delta(x))\big( f(x^{\ind{a(1{:}N)}}) - f(x) + O(\Delta)\big). \label{eq:discrete-generator1}
\end{align}
To be more precise, note that if $a = a_\ell$, then within the event $\{A^\Delta(x) = a\}$ the term
\[
  f\big(  x^{\ind{a(1{:}N)}} + b^*(x^{\ind{a(1{:}N)}})\Delta + \sigma^*(x^{\ind{a(1{:}N)}})B^\Delta  \big)
\]
can be estimated by a sufficiently regular sum of terms of the form
$
  g(x^{\ind{a(1{:}N)}})h(B^\Delta),
$
and
\begin{align*}
  & \E\Big[ \chrf  \big(A^\Delta(x) = a\big) g(x^{\ind{a(1{:}N)}})h(B^\Delta) \Big] \\
  & \quad = g( x^{\ind{a(1{:}N)}}) \E\Big[ \chrf \Big( \sum_{i=1}^{\ell-1} r(a_i\mid\nu^\Delta(x)) < U \leq \sum_{i=1}^{\ell} r(a_i\mid\nu^\Delta(x) )\Big) h(B^\Delta) \Big] \\
  & \quad = g( x^{\ind{a(1{:}N)}}) \E\Big[ \chrf  \Big( \sum_{i=1}^{\ell-1} r(a_i\mid\nu^\Delta(x)) < U \leq \sum_{i=1}^{\ell} r(a_i\mid\nu^\Delta(x) )\Big) \Big] \E [ h(B^\Delta) ] \\
  & \quad = g( x^{\ind{a(1{:}N)}}) r(a \mid\nu^\Delta(x)) \E [ h(B^\Delta) ]
\end{align*}
for each such term, and so the calculations and estimates for each multi-index can be carried out as one would when computing the generator a standard diffusion process.

In particular, from \eqref{eq:discrete-generator1} and the assumption \eqref{eq:resampling-generator} we get
\[
  \lim_{n\to\infty} \frac1{\Delta_n} \Big(\int_{\real^{dN}} f(y) \Gamma_n(x,\ud y) - f(x)\Big) = \Lmut f(x) + \Ljump f(x) = \L f(x),
\]
with bounded and pointwise convergence with respect to $x$.
\end{proof}

From the last three auxiliary results we obtain

\begin{proposition}
  \label{pr:convergence-in-distribution}
Let $Z$ be the solution to the martingale problem for $(\L,\testf(\real^{dN}))$ with initial distribution
\[
  \mu \times \mu \times \cdots \times \mu \in \mathcal{P}(\real^{dN})
\]
(see Proposition \ref{pr:well-posed}). Then
\[
  \lim_{n\to\infty} Z^{\Delta_n} = Z
\]
in distribution, i.e.~
\[
  \lim_{n\to\infty} \E[ F(Z^{\Delta_n} )] = \E[ F(Z)]
\]
for all bounded and continuous $F\colon D_{\real^{dN}}[0,\infty)\to\real$.
\end{proposition}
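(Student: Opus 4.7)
The plan is to apply the standard Ethier--Kurtz approximation framework for Markov chains converging to solutions of martingale problems \cite[Chapter 4, Theorem 8.2 / Corollary 8.7]{ethier-kurtz}. All three hypotheses of that framework have already been verified in the preceding auxiliary results: tightness of the family $(Z^{\Delta_n})_{n \in \N}$ in $D_{\real^{dN}}[0,\infty)$ (Proposition \ref{pr:relatively-compact}), bounded pointwise convergence of the discrete generators $\L_n f \to \L f$ on the core $\testf(\real^{dN})$ (Proposition \ref{pr:generators-convergence}), and well-posedness of the martingale problem for $(\L,\testf(\real^{dN}))$ with the correct initial distribution $\mu^{\otimes N}$ (Proposition \ref{pr:well-posed}). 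It only remains to assemble these pieces via a subsequence argument.

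First, by relative compactness, every subsequence of $(Z^{\Delta_n})$ admits a further subsequence converging in distribution on $D_{\real^{dN}}[0,\infty)$ to some c\`adl\`ag process $\widetilde Z$. Since the initial configuration of each $Z^{\Delta_n}$ is $\mu^{\otimes N}$ by construction, continuity of the evaluation map $z \mapsto z(0)$ at continuity points implies $\widetilde Z_0 \sim \mu^{\otimes N}$. The next task is to show $\widetilde Z$ solves the martingale problem for $(\L,\testf(\real^{dN}))$. Fix $f \in \testf(\real^{dN})$; by the Markov property of the skeleton $X^{\Delta_n}$ and the definition of $\L_n$, the process
\[
  M^n_k \defeq f(X^{\Delta_n}_k) - f(X^{\Delta_n}_0) - \sum_{j=0}^{k-1}\Delta_n\,\L_n f(X^{\Delta_n}_j)
\]
is a discrete-time martingale. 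Translated to continuous time via \eqref{eq:z-scaling}, for any $0 \le s_1 < \cdots < s_p \le s < t$ and any bounded continuous $H\colon (\real^{dN})^p\to\real$,
\[
  \E\Big[\Big(f(Z^{\Delta_n}_t)-f(Z^{\Delta_n}_s)-\int_s^t \L_n f(Z^{\Delta_n}_u)\,\ud u\Big)H(Z^{\Delta_n}_{s_1},\ldots,Z^{\Delta_n}_{s_p})\Big] = o(1)
\]
as $n\to\infty$, where the $o(1)$ error accounts for the discrepancy between the Riemann sum $\sum_j \Delta_n \L_n f(X^{\Delta_n}_j)$ and the integral against $Z^{\Delta_n}$, and for end-point effects at $s$ and $t$; this error vanishes because $\L_n f$ is uniformly bounded (Proposition \ref{pr:generators-convergence}) and $\Delta_n\to 0$.

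Passing to the limit along the convergent subsequence is the delicate step, since Skorohod convergence does not imply pointwise convergence at jump times of the limit. I would restrict attention to the set $T_{\widetilde Z}$ of continuity points of $\widetilde Z$ (complement is at most countable almost surely), choose $s_1,\ldots,s_p,s,t \in T_{\widetilde Z}$, and invoke the Skorohod representation theorem to realize $Z^{\Delta_n}\to\widetilde Z$ almost surely in $D_{\real^{dN}}[0,\infty)$; then $Z^{\Delta_n}_u \to \widetilde Z_u$ a.s.\ for $u\in T_{\widetilde Z}$, and by Fubini together with the uniform bound on $\L_n f$ and pointwise convergence $\L_n f \to \L f$ (on a set of full Lebesgue measure in the path, since $\widetilde Z$ spends Lebesgue-a.e.\ time at continuity points), bounded convergence yields
\[
  \E\Big[\Big(f(\widetilde Z_t)-f(\widetilde Z_s)-\int_s^t \L f(\widetilde Z_u)\,\ud u\Big)H(\widetilde Z_{s_1},\ldots,\widetilde Z_{s_p})\Big] = 0.
\]
Extending to arbitrary $s,t,s_i\ge 0$ by right-continuity of paths and of the integral, $\widetilde Z$ solves the martingale problem for $(\L,\testf(\real^{dN}))$ with initial law $\mu^{\otimes N}$.

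By the well-posedness in Proposition \ref{pr:well-posed}, the law of $\widetilde Z$ on $D_{\real^{dN}}[0,\infty)$ is uniquely determined and coincides with that of $Z$. Since every subsequence of $(Z^{\Delta_n})$ has a further subsequence with the same weak limit $Z$, the full sequence converges in distribution to $Z$, proving the proposition. The principal technical obstacle is the passage to the limit in the integrated martingale identity described above; aside from that, the argument is a routine application of the Ethier--Kurtz machinery, with all nontrivial estimates already encapsulated in Propositions \ref{pr:relatively-compact}--\ref{pr:generators-convergence}.
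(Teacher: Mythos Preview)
Your overall approach is correct and matches the paper's: both rely on the Ethier--Kurtz framework, with relative compactness, generator convergence and well-posedness already supplied by Propositions~\ref{pr:relatively-compact}--\ref{pr:well-posed}. The paper invokes \cite[Chapter~4, Corollary~8.13]{ethier-kurtz} directly --- defining $\xi^f_n(t)=f(Z^{\Delta_n}_t)$ and $\varphi^f_n(t)=\L_n f(Z^{\Delta_n}_t)$ and checking the required integrability and limit conditions --- whereas you unpack that machinery into an explicit subsequence/limit-point argument. These are two presentations of essentially the same proof.

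One step in your argument is not fully justified. After Skorohod representation you need $\L_n f(Z^{\Delta_n}_u)\to\L f(\widetilde Z_u)$, and you appeal to $Z^{\Delta_n}_u\to\widetilde Z_u$ together with bounded pointwise convergence $\L_n f\to\L f$. But pointwise convergence of a sequence of functions combined with convergence of the argument does \emph{not} in general give convergence of the diagonal composition $\L_n f(Z^{\Delta_n}_u)$; one needs, for instance, uniform convergence of $\L_n f\to\L f$ on compacts, or equicontinuity of the family. In the present setting this can be extracted --- for $f\in\testf(\real^{dN})$ the map $x\mapsto f(x^{\ind{a(1{:}N)}})-f(x)$ has compact support (since $|x^{\ind{a(1{:}N)}}|=|x|$), and the diffusion part of $\L_n f-\L f$ converges uniformly, reducing matters to a fixed compact on which continuity of $V$ and of the $\iota^a$ can be exploited --- but your text does not supply this, and the parenthetical about continuity points of $\widetilde Z$ addresses a different issue (a.e.\ convergence in $u$, not the diagonal problem). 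The paper's one-line appeal to ``dominated convergence'' when verifying the corresponding hypothesis of Corollary~8.13 is similarly terse on this point, so this is a shared subtlety rather than a defect specific to your route.
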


\begin{proof}
  We aim to apply \cite[Corollary 8.13, Chapter 4]{ethier-kurtz}, which requires us to verify a number of conditions. The notation for the auxiliary processes below, $\xi^f_n$ and $\varphi^f_n$, follows closely the statement of said result.
  
  Recall first that the family $(Z^{\Delta_n})_{n \in \N}$ is relatively compact in the sense of Proposition \ref{pr:relatively-compact}, and that the function space $\testf(\real^{dN})$ is \emph{separating} in the sense that for $P$, $Q \in \mathcal{P}(\real^{dN})$,
\[
  \int f\, \ud P = \int f\, \ud Q \quad \forall f\in \testf(\real^{dN})
\]
implies $P = Q$. For $f \in \testf(\real^{dN})$ and $t \geq 0$, write $\xi^f_n(t) \defeq f(Z^{\Delta_n}_t)$ and $\varphi^f_n(t) \defeq \L_n f(Z^{\Delta_n}_t)$. Then we obviously have
\[
  \sup_n \sup_{s \leq t} \E [ |\xi^f_n(s) |] < \infty \qquad \text{and} \qquad \sup_n \sup_{s \leq t} \E [ |\varphi^f_n(s) |] < \infty
\]
for all $f \in \testf(\real^{dN})$ and $t > 0$,
\[
  \lim_{n\to\infty} \E \Big[ \big(\xi^f_n(t) - f(Z^{\Delta_n}_t) \big) \prod_{i=1}^k h_i\big( Z^{\Delta_n}_{t_i}\big)\Big] = 0
\]
for all $f \in \testf(\real^{dN})$, $\{h_1,\cdots,h_k\} \subset \testf(\real^{dN})$ and $0 \leq t_1 < \cdots < t_k \leq t$ (trivially by the definition of $\xi^f_n$), and
\[
  \lim_{n\to\infty}\E \Big[ \big(\varphi^f_n(t) - \L f(Z^{\Delta_n}_t) \big) \prod_{i=1}^k h_i\big( Z^{\Delta_n}_{t_i}\big)\Big] = 0
\]
for $f$, $h_i$ and $t_i$ like above by dominated convergence via the bounded and pointwise convergence of the $\L_n f's$ (see Proposition \ref{pr:generators-convergence}). Thus \cite[Corollary 8.13, Chapter 4]{ethier-kurtz} implies the desired convergence. \qedhere
\end{proof}

Theorem \ref{thm:convergence} still does not automatically follow from this, since functions of finite-dimensional distributions are in general \emph{not} continuous on $D_{\real^{dN}}[0,\infty)$. Let us thus formulate the following convergence result which essentially encapsulates the three parts of Theorem \ref{thm:convergence}.

\begin{theorem}
  \label{thm:general-convergence}
Let $Z$ be the c\`adl\`ag process in Proposition \ref{pr:convergence-in-distribution}. Then
\begin{equation}\label{eq:finite-dimensional2}
  \lim_{n\to\infty}\E[ f(Z^{\Delta_n}_{t_1},\cdots,Z^{\Delta_n}_{t_T}) F(Z^{\Delta_n}) ] = \E[ f(Z_{t_1},\cdots,Z_{t_T}) F(Z) ]
\end{equation}
for all finite $\{t_1,\cdots,t_T\} \subset [0,\infty)$ and bounded and continuous functions $f\colon (\real^{dN})^T \to \real$ and $F \colon D_{\real^{dN}}[0,\infty) \to \real$.
\end{theorem}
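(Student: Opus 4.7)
The plan is to deduce \eqref{eq:finite-dimensional2} from the functional weak convergence $Z^{\Delta_n}\Rightarrow Z$ in $D_{\real^{dN}}[0,\infty)$ established in Proposition \ref{pr:convergence-in-distribution} via a single application of the continuous mapping theorem. To this end, define
\[
  \Phi(z) \defeq \big(z(t_1),\ldots,z(t_T), z\big) \in (\real^{dN})^T \times D_{\real^{dN}}[0,\infty)
\]
and $g(x_1,\ldots,x_T,z)\defeq f(x_1,\ldots,x_T)F(z)$, so that $g$ is bounded and continuous on the product space and both sides of \eqref{eq:finite-dimensional2} become $\E[g(\Phi(Z^{\Delta_n}))]$ and $\E[g(\Phi(Z))]$ respectively. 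The problem thus reduces to showing $\Phi(Z^{\Delta_n}) \Rightarrow \Phi(Z)$ in the product topology.

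The technical subtlety is that the coordinate projections $z\mapsto z(t)$ are not continuous on $D_{\real^{dN}}[0,\infty)$ under the Skorokhod $J_1$ topology; they are continuous at $z$ precisely when $z$ is continuous at $t$. Hence $\Phi$ is continuous at every path which is continuous at each $t_i$, and by the continuous mapping theorem (\cite[Theorem 2.7]{billingsley}) it suffices to verify that the law of $Z$ gives mass zero to the discontinuity set of $\Phi$, which by a union bound over $i\in[T]$ reduces to the \emph{fixed-time stochastic continuity} of $Z$:
\[
  \P\big(Z_{t-}=Z_t\big)=1 \qquad \text{for each fixed } t>0.
\]

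Establishing this stochastic continuity is the main obstacle. I would argue from the decomposition $\L=\Lmut+\Ljump$ recorded in \eqref{eq:mut-jump}. The diffusive part $\Lmut$ produces only almost-surely continuous sample paths, so every jump of $Z$ is contributed by the pure-jump component, whose total intensity $\iota^*(x)\defeq \sum_{a\neq 1{:}N}\iota^a(x)$ is uniformly bounded by assumption. Consequently the counting process $N_t$ of jumps of $Z$ admits an absolutely continuous compensator $t\mapsto \int_0^t \iota^*(Z_{u-})\ud u$, and standard point-process theory (quasi-left-continuity of counting processes with continuous compensators) gives $\P(\Delta N_t\ge 1)=0$, hence $\P(Z_{t-}=Z_t)=1$, for every fixed $t$. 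An alternative route, should the point-process formalism prove inconvenient, is to exploit that the localised operators $\L^k$ constructed in the proof of Proposition \ref{pr:well-posed} are Feller generators whose associated processes are automatically stochastically continuous at every fixed time; the localisation is then removed using the compact containment estimate \eqref{eq:strong-compact-containment}, which passes to the weak limit $Z$ via the portmanteau theorem.

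Granted the fixed-time stochastic continuity, a union bound over $i\in[T]$ shows that $\Phi$ is continuous at $\P_Z$-almost every path, the continuous mapping theorem then yields $\Phi(Z^{\Delta_n})\Rightarrow \Phi(Z)$ on $(\real^{dN})^T\times D_{\real^{dN}}[0,\infty)$, and the boundedness and continuity of $g$ on this product space convert the weak convergence into convergence of expectations, which is exactly \eqref{eq:finite-dimensional2}.
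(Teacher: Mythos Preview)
Your proof is correct and takes a genuinely different route from the paper's. The paper does not invoke the continuous mapping theorem; instead it smooths the evaluation map by time-averaging, $Q^\delta(z)=\frac{1}{\delta}\int_0^\delta f(z_{t_1+h},\ldots,z_{t_T+h})\,\ud h\cdot F(z)$, which is Skorohod-continuous for each $\delta>0$, and then controls the remainder $|\E[Q^\delta(Z^{\Delta_n})]-\E[Q(Z^{\Delta_n})]|$ by a modulus-of-continuity estimate on the \emph{approximating} chains, uniform in $n$, recycling the martingale/BDG machinery from Proposition~\ref{pr:relatively-compact}. Your approach instead puts all the work into the \emph{limit} process: once $\P(Z_{t-}=Z_t)=1$ is established, \eqref{eq:finite-dimensional2} follows immediately from the continuous mapping theorem. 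This inverts the paper's logical order, since the paper proves stochastic continuity only afterwards (Proposition~\ref{pr:left-continuous}) and does so by reusing the very estimates developed here; your argument for stochastic continuity via the bounded jump intensity of $\L$ is intrinsic to the generator and avoids that dependence. The compensator heuristic is correct but deserves one more line of justification---for instance, a thinning construction of $Z$ as the $\Lmut$-diffusion interrupted by a rate-$C$ Poisson clock with $C\ge\sup_x\iota^*(x)$, which solves the martingale problem and hence agrees in law with $Z$ by Proposition~\ref{pr:well-posed}. Your alternative Feller/localisation route also works, but note that $\{\sup_{u\le t}|z(u)|>C\}$ is not open in the Skorohod topology; the compact containment for $Z$ should instead be obtained by applying portmanteau to the compact sets in $D_{\real^{dN}}[0,\infty)$ that witness the tightness of $(Z^{\Delta_n})$.
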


\begin{proof}
We first establish \eqref{eq:finite-dimensional2} with the additional assumption that $f$ is Lipschitz continuous. To this end, note that first that
\[
  z \mapsto \frac1\delta \int_0^\delta f(z_{t_1+h},\cdots,z_{t_T+h}) dh \cdot F(z) =: Q^\delta(z) 
\]
is for any $\delta > 0$ a continuous function on the Skorohod space $D_{\real^{dN}}[0,\infty)$, and by the c\`adl\`ag property
\[
  \lim_{\delta\to0^+} Q^\delta(z) = f(z_{t_1},\cdots,z_{t_T}) F(z) =: Q(z)
\]
with bounded and pointwise convergence with respect to $z \in D_{\real^{dN}}[0,\infty)$. Thus,
\begin{align}
  |\E[Q(Z)] - \E[Q(Z^{\Delta_n})]| \leq & |\E[Q(Z) - Q^\delta(Z)]| + |\E[Q^\delta(Z) - Q^\delta(Z^{\Delta_n})]| \notag \\
  & \qquad + |\E[Q^\delta(Z^{\Delta_n}) - Q(Z^{\Delta_n})]|. \label{eq:finite-dimensional3}
\end{align}
By the dominated convergence theorem, the first term on the right-hand side of \eqref{eq:finite-dimensional3} can be taken arbitrarily small by considering any sufficiently small $\delta > 0$. For a fixed $\delta > 0$, the second term is arbitrarily small for small enough $\Delta_n \in (0,\delta)$ by the convergence in distribution of the processes. For the third term, note that
\begin{align*}
  |\E[Q^\delta(Z^{\Delta_n}) - Q(Z^{\Delta_n})]| & \leq c_{F} \E\Big[ \frac1\delta \int_0^\delta |f(Z^{\Delta_n}_{t_1+h},\cdots,Z^{\Delta_n}_{t_T+h}) -f(Z^{\Delta_n}_{t_1},\cdots,Z^{\Delta_n}_{t_T}) |dh \Big] \\
  & \leq c_{f,F} \sum_{i=1}^T \E\big[ \sup_{t_i < s < t_{i}+\delta} |Z^{\Delta_n}_s - Z^{\Delta_n}_t | \land 1 \big] \\
  & \leq c_{f,F} \sum_{i=1}^T \Big( \E\big[ \sup_{t_i < s < t_{i}+\delta} |Z^{\Delta_n}_s - Z^{\Delta_n}_t |^2 \land 1 \big] \Big)^{1/2}.
\end{align*}
The expectation in the $i$th term above can be estimated by
\[
  \E\Big[ \max_{ \lfloor \frac{t_i}{\Delta_n} \rfloor < k \leq \lfloor \frac{t_i+\delta}{\Delta_n} \rfloor} | X^{\Delta_n}_k -  X^{\Delta_n}_{\lfloor t_i/\Delta_n \rfloor}|^2\land 1 \Big],
\]
and we may estimate the latter quantity uniformly in $\Delta_n < \delta$ in a familiar manner: the probability of any resampling-induced jumps of $X^{\Delta_n}$ between the indices in the inner maximum is at most of order $\delta$, and outside of this event, the term
\[
  \max_{ \lfloor \frac{t_i}{\Delta_n} \rfloor < k \leq \lfloor \frac{t_i+\delta}{\Delta_n} \rfloor} | X^{\Delta_n}_k -  X^{\Delta_n}_{\lfloor t_i/\Delta_n \rfloor}|^2
\]
can be estimated using a martingale decomposition and the Burkholder-Davis-Gundy inequality (with $p=2$ for the sake of simplicity) like in the proof part (ii) of Proposition \ref{pr:relatively-compact}, resulting in an upper bound of order $\delta$.

This finishes the proof of \eqref{eq:finite-dimensional2} for bounded and Lipschitz continuous $f$. The convergence then extends to bounded and uniformly continuous $f$ by a standard $\varepsilon/3$-argument, since any such function can be approximated uniformly by Lipschitz continuous functions.

Finally, if $f\colon (\real^{dN})^T\to\real$ is a bounded and continuous function, the compact containment condition (part (i) of Proposition \ref{pr:relatively-compact}) implies that there is a compact set $K \subset (\real^{dN})^T$ such that
$
  (Z^{\Delta_n}_{t_1},\cdots,Z^{\Delta_n}_{t_T}) \in K
$
and
$
  (Z_{t_1},\cdots,Z_{t_T}) \in K
$
with probability arbitrarily close to $1$. Subsequently there is a uniformly continuous $g\colon (\real^{dN})^T\to\real$ such that $g \equiv f$ on $K$ and $\|g\|_{L^\infty} \leq \|f\|_{L^\infty}$. Thus an $\varepsilon/3$-argument again establishes \eqref{eq:finite-dimensional2} for $f$.
\end{proof}

\begin{proof}[Proof of Theorem \ref{thm:convergence}]
  Parts (i) and (ii) follow automatically by taking $F \equiv 1$ in Theorem \ref{thm:general-convergence} above.

For part (iii), note first that for fixed $\tau > 0$ the function
\[
  D_{\real^{dN}}[0,\infty) \owns z \mapsto F(z) \defeq \exp\big(-\int_0^\tau \mathcal{V}(u,z_u) \ud u \big)
\]
is continuous, so 
\begin{equation}\label{eq:fk-convergence}
  \lim_{n\to\infty} \E[f(Z^{\Delta_n}_\tau) F(Z^{\Delta_n})] = \E[f(Z_\tau) F(Z)]
\end{equation}
by Theorem \ref{thm:general-convergence}. On the other hand, using the simple estimate $|e^{-x} - e^{-y}| \leq |x-y|\land 1$ for $x$, $y \geq 0$, we can estimate
\begin{align}
  & \Big| \E[f(Z^{\Delta_n}_\tau) F(Z^{\Delta_n})] - \E \Big[ f(X^{\Delta_n}_{\lfloor \tau/\Delta_n\rfloor} ) \prod_{k=0}^{\lfloor \tau/\Delta_n\rfloor-1} e^{- \Delta_n \mathcal{V}( k\Delta_n, X^{\Delta_n}_k )}\Big] \Big| \notag\\
  & \qquad = \E\Big[ |f(Z^{\Delta_n}_\tau)| \big| F(Z^{\Delta_n}) - \exp\big( -\Delta_n \sum_{k=0}^{\lfloor \tau/\Delta_n\rfloor-1} \mathcal{V}( k\Delta_n, Z^{\Delta_n}_{k\Delta_n} \big) \big| \Big] \notag \\
  & \qquad \leq \|f\|_{L^{\infty}} \E\Big[ \big| \int_{0}^{\tau} \mathcal{V}(u,Z^{\Delta_n}_u) \ud u - \Delta_n \sum_{k=0}^{\lfloor \tau/\Delta_n\rfloor-1} \mathcal{V}(k\Delta_n,Z^{\Delta_n}_{k\Delta_n})\big| \land 1 \Big]. \label{eq:fk-convergence2}
\end{align}
Now since $\mathcal{V}$ is uniformly continuous on compact subsets of $[0,\infty) \times \real^{dN}$ and the paths $(Z^{\Delta_n}_u)_{0 \leq u \leq \tau}$ stay inside some compact subset of $\real^{dN}$ with arbitrarily high probability (see the discussion after the statement of Proposition \ref{pr:relatively-compact}, in particular \eqref{eq:strong-compact-containment}), it is fairly easy to see that \eqref{eq:fk-convergence2} converges to zero as $n \to \infty$. Combining this with \eqref{eq:fk-convergence} above yields the desired convergence.
\end{proof}

\section{Proof of Theorem \ref{thm:fk-unbiased}}
\label{app:fk-unbiased} 

We start with the following auxiliary result, which states the intuitively simple fact that although the sample paths of the particle filter $Z$ are discontinuous with probability $1$, the probability of discontinuities (i.e.~resampling-induced jumps) at any given time is negligible.

\begin{proposition}\label{pr:left-continuous}
Let $Z$ be the c\`adl\`ag process in Theorem \ref{thm:convergence}. Then
\[
  \P\big( Z_t = Z_{t-} \big) = 1
\]
for all $t > 0$.
\end{proposition}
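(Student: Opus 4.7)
My plan is to exploit the structure of the generator $\L = \Lmut + \Ljump$ from \eqref{eq:mut-jump}: between resampling jumps, $Z$ evolves as a non-degenerate diffusion and hence has continuous paths, so $Z_t \neq Z_{t-}$ can only occur if a resampling-induced jump takes place exactly at time $t$. Since each $\iota^a$ is bounded and the index set $[N]^N\setminus\{1{:}N\}$ is finite, the overall jump intensity $\iota^*(x) \defeq \sum_{a\neq 1{:}N}\iota^a(x)$ is uniformly bounded by some constant $C < \infty$, and the proposition reduces to showing that, almost surely, no jump of $Z$ occurs at the prescribed time $t$.

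The first key step is to introduce the jump counting process
\[
N_t \defeq \#\big\{s \in (0,t]\,:\, Z_{s-} \neq Z_s\big\}
\]
and to invoke the standard Lévy-system / semimartingale decomposition associated with the well-posed martingale problem for $\L$ (Proposition \ref{pr:well-posed}): because $\Ljump$ is a bounded integral operator with finite jump kernel $x \mapsto \sum_a \iota^a(x) \delta_{x^{\ind{a(1{:}N)}}}$, this yields that
\[
M_t \defeq N_t - \int_0^t \iota^*(Z_s)\ud s
\]
is a true martingale with $\E[N_t] \le Ct$ (see e.g.~\cite{bottcher-schilling-wang, jacob}). The compensator $t \mapsto \int_0^t \iota^*(Z_s)\ud s$ is pathwise continuous because $\iota^*$ is bounded, so $M_t - M_{t-} = N_t - N_{t-}$ for every $t>0$. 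The martingale property combined with uniform integrability on $[0,t]$ (which follows from $|M_s| \leq N_t + Ct$ with $N_t$ integrable) gives $\E[M_t] = \E[M_{t-}] = 0$, and therefore $\E[N_t - N_{t-}] = 0$. Since $N_t - N_{t-}$ is a non-negative integer-valued random variable, this forces $N_t = N_{t-}$ almost surely, i.e.\ no jump at time $t$, and combined with the between-jumps continuity of $Z$ this gives $\P(Z_t = Z_{t-}) = 1$.

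The main obstacle is the rigorous identification of the compensator of $N_t$, since the martingale problem is only stated for $\testf(\real^{dN})$ and indicators of jumps are not smooth. This can be handled either by a routine localisation argument extending the martingale problem to a broader class of functions approximating the jump indicator, or by constructing $Z$ explicitly via thinning: one simulates a Poisson process of rate $C$, accepts its atoms with probability $\iota^*(Z_{s-})/C$, and on acceptance samples the jump type $a$ with probability $\iota^a(Z_{s-})/\iota^*(Z_{s-})$. Uniqueness in the martingale problem identifies the resulting process with $Z$, and for the thinned process the claim is immediate, since the atoms of the dominating Poisson process are almost surely disjoint from any fixed $t$.
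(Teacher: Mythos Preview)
Your argument is correct and takes a genuinely different route from the paper. The paper does not work directly with the limiting process $Z$: instead it reduces to $\E[|f(Z_t)-f(Z_{t-})|]=0$ for bounded Lipschitz $f$, expresses $f(z_t)-f(z_{t-})$ as the limit of the continuous functionals $z\mapsto \delta^{-1}\big(\int_t^{t+\delta}f(z_u)\ud u-\int_{t-\delta}^t f(z_u)\ud u\big)$ on $D_{\real^{dN}}[0,\infty)$, and then transfers the estimate from the approximating chains $Z^{\Delta_n}$ via Proposition \ref{pr:convergence-in-distribution}, using that $Z^{\Delta_n}$ has a resampling-induced jump on an interval of length $\delta$ with probability $O(\delta)$ uniformly in $n$. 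Your approach bypasses the approximations entirely and works with $Z$ alone, using only that the total jump intensity $\iota^*$ is bounded; the compensator argument (or, more cleanly, the Poisson-thinning construction with uniqueness from Proposition \ref{pr:well-posed}) is the standard way to show that a jump process with bounded intensity has no fixed discontinuity times. The paper's route has the advantage of reusing machinery already in place (the relative compactness and convergence results of Appendix \ref{app:convergence}) and avoiding any additional semimartingale or L\'evy-system input, while your route is more self-contained and would apply verbatim to any process with generator of the form $\Lmut + \Ljump$ with bounded $\Ljump$, independent of how $Z$ was obtained. The one point where you are a bit quick is the assertion that the only discontinuities of $Z$ are resampling jumps; this is implicit in the decomposition $\L=\Lmut+\Ljump$ but is made rigorous precisely by the thinning construction you sketch at the end, so it would be cleanest to lead with that construction rather than present it as an alternative.
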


\begin{proof}
We only give a brief outline of a proof. It suffices to show that
\begin{equation}\label{eq:left-continuous}
  \E\big[ | f(Z_t) - f(Z_{t-})| \big] = 0
\end{equation}
for any bounded and Lipschitz continuous $f\colon \real^{dN}\to \real$. The c\`adl\`ag property implies that
\[
  f(z_t) - f(z_{t-}) =  \lim_{\delta\to 0^+} \frac1\delta\Big( \int_{t}^{t+\delta} f(z_u) \ud u - \int_{t-\delta}^{t} f(z_u) \ud u \Big)
\]
for all $z \in D_{\real^{dN}}[0,\infty)$ and $t > 0$ with bounded and pointwise convergence, and the expression inside the limit is for each $\delta > 0$ a continuous function of $z \in D_{\real^{dN}}[0,\infty)$. We can then proceed as in the proof of Theorem \ref{thm:general-convergence}, noting that resampling-induced jumps of the processes $Z^{\Delta_n}$ happen with arbitrarily small probability on arbitrarily small time intervals.
\end{proof}

\begin{proof}[Proof of Theorem \ref{thm:fk-unbiased}]
It suffices to establish \eqref{eq:fk-unbiased} for $f \in \testf(\real^d)$,
since any bounded and measurable function on $\real^d$ can be approximated pointwise by an uniformly bounded sequence of functions in $\testf(\real^d)$.

  Write 
\[
  \QQ_t(F) \defeq \E\Big[ F(Z_t) \exp\Big(-\int_0^t \overline{V}(Z_u) \ud u\Big)\Big]
\]
for $F \in \testf(\real^{dN})$ and
\[
  Q_t(f) \defeq \E\Big[ f(z_t) \exp\Big(-\int_0^t V(z_u) \ud u\Big)\Big]
\]
for $f \in \testf(\real^d)$. 

For fixed $F$, the measure flow $\QQ_t(F)$ is differentiable with respect to $t > 0$. In order to show this and compute $\frac{\ud}{\ud t} \QQ_t(F)$, write $E_t \defeq \exp(-\int_0^t \overline{V}(Z_u) \ud u)$ for $t > 0$. Then for $\delta > 0$ with $\delta \ll 1$,
\[
  F(Z_{t+\delta}) E_{t+\delta} - F(Z_t) E_{t} = E_t\big( F(Z_{t+\delta}) - F(Z_t) \big) + F(Z_{t+\delta}) \big( E_{t+\delta} - E_t\big).
\]
By the martingale problem (see Appendix \ref{app:convergence}), $F(Z_{t+\delta}) - F(Z_t)$ can be written as $M_{t+\delta} - M_t + \int_t^{t+\delta} \L F(Z_u) \ud u$ for some martingale $M$ (with respect to the filtration $\F^Z$ generated by $Z$). Using this in combination with the tower property of conditional expectations (with respect to $\F^Z_t$), we can calculate
\[
  \QQ_{t+\delta}(F) - \QQ_t (F) = \E\Big[ E_t \int_t^{t+\delta} \L F(Z_u) \ud u \Big] + \E\big[ F(Z_{t+\delta}) \big( E_{t+\delta} - E_t\big)\big].
\]
The c\`adl\`ag property in conjunction with the dominated convergence theorem then implies
\[
  \lim_{\delta\to0^+}\frac{\QQ_{t+\delta}(F) - \QQ_t (F)}{\delta} = \E[ \L F(Z_t) E_t ] - \E[ F(Z_t) \overline{V}(Z_t) E_t] = \QQ_t\big( (\L - \overline{V})F \big).
\]
For negative $\delta$, we may compute $\QQ_{t+\delta}(F) - \QQ_t (F)$ in a similar manner by using the tower property with respect to the filtration $\F^Z_{t+\delta}$ instead of $\F^Z_t$. Then Proposition \ref{pr:left-continuous} above together with the dominated convergence theorem imply
\begin{equation}\label{eq:particle-evolution}
  \lim_{\delta\to 0^-}\frac{\QQ_{t+\delta}(F) - \QQ_t (F)}{\delta} = \QQ_t\big( (\L - \overline{V})F \big).
\end{equation}
In a similar (in fact easier since the sample paths of $z$ are automatically continuous) way we may calculate
\begin{equation}\label{eq:sde-evolution}
  \frac{\ud }{\ud t} Q_t(f) = Q_t\big( (L - V)f \big)
\end{equation}
for any $f \in \testf(\real^d)$, where $L$ is the infinitesimal generator corresponding to the diffusion \eqref{eq:sde}.

Now the left-hand side of the statement of the Theorem is $\QQ_t(\overline{f}) =: \hat{Q}_t(f)$ and the right-hand side is $Q_t(f)$. We will show that the evolution equation for $\hat{Q}$ is of the same form as \eqref{eq:sde-evolution}. To this end, recall that $\L = \Lmut + \Ljump$ as in \eqref{eq:mut-jump}. For $f \in \testf(\real^d)$, we simply get $\Lmut(\overline{f}) = \overline{Lf}$, and further
\begin{align*}
  \Ljump(\overline{f})(x) & = \sum_{a \neq 1{:}N} \iota^a(x) \Big( \frac1N \sum_{i = 1}^N f(x^{\ind{a(i)}}) - \frac1N \sum_{i = 1}^N f(x^{\ind{i}})\Big) \\
  & = \frac1N \sum_{i = 1}^N \Big(\sum_{a \neq 1{:}N} \iota^a(x) (\#_{a,i} - 1)\Big)f(x^{\ind{i}}),
\end{align*}
where $\#_{a,i} \defeq \#\{j \,:\, a(j) = i\}$. Thus, comparing the right-hand sides of \eqref{eq:particle-evolution} and \eqref{eq:sde-evolution} (with $\overline{f}$ in place of $F$), we see that
\[
  \frac{\ud }{\ud t} \hat{Q}_t(f) =  \hat{Q}_t\big( (L - V)f \big)
\]
is equivalent to
\begin{equation}\label{eq:generator-unbiased2}
  \QQ_t\Big( \frac1N \sum_{i = 1}^N \Big(\sum_{a \neq 1{:}N} \iota^a(x) (\#_{a,i} - 1)\Big)f(x^{\ind{i}}) \Big) = \QQ_t( \overline{V}\cdot\overline{f} - \overline{Vf} ).
\end{equation}
Writing out the expression inside the right-hand side parentheses and comparing the coefficients of the $f(x^{\ind{i}})$'s, one sees that \eqref{eq:generator-unbiased2} will follow from the identity
\begin{equation}\label{eq:generator-unbiased}
  \sum_{a \neq 1{:}N} \iota^a(x) (\#_{a,i} - 1) = \overline{V}(x) - V(x^{\ind{i}})
\end{equation}
for all $i$ and $x$, which is simply assumption (i) of the Theorem.

Thus the measure flows $Q$ and $\hat{Q}$ satisfy the same evolution equation, and by the assumptions we have $Q_0 = \hat{Q}_0$. Our next task is to verify that this evolution equation is well-posed in a suitable sense.

In order to work in a space of probability measures, let us denote by $\dot{\real}^d$ the standard one-point compactification of $\real^d$ with infinity point $o$. Define the probability measures $Q_t^o \in \mathcal{P}(\dot{\real}^d)$ and $\hat{Q}_t^o \in \mathcal{P}(\dot{\real}^d)$, $t \geq 0$, by
\[
  Q_t^o(f) \defeq Q_t(f_{|\real^d}) + \big(1 - Q_t(1_{|\real^d})\big) f(o) = Q_t\big((f-f(o))_{|\real^d}\big) + f(o)
\]
for (bounded and) continuous $f \colon \dot{\real}^d \to \real$ and similarly for $\hat{Q}$ in place of $Q$. Define $M$ as the collection of continuous functions $f$ on $\dot{\real}^d$ such that
\[
  \big(f - f(o)\big)_{|\real^d} \in \testf(\real^d),
\]
and define the linear operator $A$ on $M$ by
\[
  Af(x) = L\big(f - f(o)\big)_{|\real^d}(x) + V(x) \big( f(o) - f(x) \big), \quad x \in \dot{\real}^d,
\]
with the understanding that $Af(o) = 0$.

Now the flows $Q^o$ and $\hat{Q}^o$ both satisfy the forward equation
\begin{equation}\label{eq:forward-equation}
  \mu_t(f) = \mu_0(f) + \int_0^t \mu_u(Af) \ud u, \quad t > 0,
\end{equation}
for (the natural extensions of) $f \in \testf(\real^d)$, and it is easy to see that this extends to $f \in M$. We are thus in a place to apply a uniqueness result from \cite{ethier-kurtz}: it is routinely verified that $A$ satisfies the positive maximum principle, $M$ is an algebra of functions that is dense in the space of continuous functions on $\dot{\real}^d$ (with respect to the $\sup$-norm), and the $D_{\dot{\real}^d}[0,\infty)$-martingale problem for $(A,M)$ is well-posed (see Appendix \ref{app:convergence}). Thus \cite[Chapter 4, Proposition 9.19]{ethier-kurtz} yields well-posedness for the forward equation \eqref{eq:forward-equation}.

In particular, $Q^o_t(f) = \hat{Q}^o_t(f)$ for all $t > 0$ and (extensions of) $f \in \testf(\real^d)$, which translates to $Q_t(f) = \hat{Q}_t(f)$.
\end{proof}

Finally, let us prove Proposition \ref{pr:asymptotic-unbiased}:

\begin{proof}[Proof of Proposition \ref{pr:asymptotic-unbiased}]
  Let $i \in [N]$, $v^{\ind{1{:}N}} \in [0,\infty)^N$ and $\Delta > 0$. By Assumption \ref{a:unbiased-resampling},
\begin{align*}
  \frac{N e^{-\Delta v^{\ind{i}}}}   {\sum_{j = 1}^N e^{-\Delta v^{\ind{j}}}} & = \sum_{j = 1}^N r\big( a(j) = i \mid \exp(-\Delta v^{\ind{1{:}N}})\big)
  = \sum_{j = 1}^N \sum_{a \in [N]^N } r\big( a \mid \exp(-\Delta v^{\ind{1{:}N}}) \big) \charfun{a(j) = i } \\
  & = \sum_{a\in[N]^N} r\big( a \mid \exp(-\Delta v^{\ind{1{:}N}}) \big) \#\{j \,:\, a(j) = i\},
\end{align*}
and subtracting $\sum_{a\in[N]^N} r(a\mid \exp(-\Delta v^{\ind{1{:}N}}) ) \equiv 1$ from this yields
\[
  \sum_{a \neq 1{:}N} r\big( a \mid \exp(-\Delta v^{\ind{1{:}N}})\big) \big(\#\{j \,:\, a(j) = i\} - 1\big) = \frac{N e^{-\Delta v^{\ind{i}}} - \sum_{j = 1}^N e^{-\Delta v^{\ind{j}}} }{\sum_{j = 1}^N e^{-\Delta v^{\ind{j}}} }.
\]
Dividing this by $\Delta$ and taking $\Delta\to 0^+$ leads to \eqref{eq:asymptotic-unbiased}, as in the proof of Lemma \ref{lem:f-k-asymptotic-weights}.
\end{proof}

\end{appendix}

\section*{Supplementary material}

Source codes for the experiments are available at \url{https://github.com/mvihola/weakly-informative-resampling-codes}

\section*{Acknowledgments}

TS and MV were supported by Academy of Finland grant 315619 and the Finnish
Centre of Excellence in Randomness and Structures.  The authors wish to
acknowledge CSC --- IT Center for Science, Finland, for computational resources.

\end{document}